\documentclass{article}
\usepackage[top=1.in, bottom=1.in, left=1.in, right=1.in]{geometry}
\usepackage{graphicx} 
\usepackage{subcaption}
\usepackage{amsmath,amssymb,amsfonts,amsthm,bbm} 
\usepackage{xcolor}
\usepackage[utf8]{inputenc}
\usepackage{algorithm}
\usepackage[noend]{algpseudocode} 
\usepackage[numbers]{natbib}

\usepackage{array,makecell,tabularx}

\usepackage{booktabs}

\usepackage{amsmath,amsfonts,bm}

\def\eqref#1{equation~\ref{#1}}

\def\1{\bm{1}}

\def\vone{{\bm{1}}}

\def\vp{{\bm{p}}}
\def\vq{{\bm{q}}}

\def\mQ{{\bm{Q}}}

\DeclareMathAlphabet{\mathsfit}{\encodingdefault}{\sfdefault}{m}{sl}
\SetMathAlphabet{\mathsfit}{bold}{\encodingdefault}{\sfdefault}{bx}{n}

\newcommand{\E}{\mathbb{E}}

\newcommand{\R}{\mathbb{R}}

\DeclareMathOperator*{\argmax}{arg\,max}
\DeclareMathOperator*{\argmin}{arg\,min}

\usepackage{hyperref}
\usepackage{thmtools}
\usepackage[capitalise]{cleveref}

\newtheorem{theorem}{Theorem}

\theoremstyle{definition}
\newtheorem{definition}{Definition}
\newtheorem{remark}{Remark}

\theoremstyle{plain}

\usepackage{comment}

\crefname{appendix}{Appendix}{Appendices}
\Crefname{appendix}{Appendix}{Appendices}

\newcommand{\val}{\operatorname{val}}
\newcommand{\pay}{\operatorname{pay}}
\newcommand{\pen}{\operatorname{pen}}

\DeclareMathOperator{\equal}{equalize}
\newcommand{\truthidx}{{i_\text{truth}}}
\newcommand{\FVal}{\texttt{ComputeVal}}

\title{Optimally Auditing Adversarial Agents}

\author{
  Sanmay Das\thanks{Authors listed in alphabetical order.} \\
  Virginia Tech \\
  \texttt{sanmay@vt.edu}
  \and
  Fang-Yi Yu\footnotemark[1] \\
  George Mason University \\
  \texttt{fangyiyu@gmu.edu}
  \and 
  Yuang Zhang\footnotemark[1] \\
  George Mason University \\
  \texttt{yzhang78@gmu.edu}  
}

\date{}

\begin{document}

\maketitle

\begin{abstract}
    Fraud can pose a challenge in many resource allocation domains, including social service delivery and credit provision. For example, agents may misreport private information in order to gain benefits or access to credit. To mitigate this, a principal can design strategic audits to verify claims and penalize misreporting. In this paper, we introduce a general model of audit policy design as a principal-agent game with multiple agents, where the principal commits to an audit policy, and agents collectively choose an equilibrium that minimizes the principal’s utility. We examine both adaptive and non-adaptive settings, depending on whether the principal's policy can be responsive to the distribution of agent reports. Our work provides efficient algorithms for computing optimal audit policies in both settings and extends these results to a setting with limited audit budgets.

\end{abstract}
\section{Introduction}

AI is increasingly used in making high-stakes societal decisions. One example that has recently gained considerable attention is the use of AI to decide whether to approve or deny the receipt of social benefits, with worries about how the scale of AI might systematically cut off thousands of people from benefits they are eligible for because of suspicion of fraud~\cite{eubanks2018automating}. The reason to use AI in these domains, is because human time and resources are limited. However, an alternative method is the use of AI to flag a limited number of applicants for \emph{audits} that can then be conducted by humans. How should one design such audit policies?

The problem of optimal audit design is relevant not just in the case of benefit receipt, but also in many other scenarios where agents must report their types to a principal in order for the principal to decide whether or not an agent is qualified to receive a benefit or service from the principal. In addition to qualification to receive social services or government benefits, other examples include credit or loan applications and tax relief. In all of these, the principal has the ability to, at some cost, audit agents to determine whether or not they are revealing their true types. The principal may also be able to impose a penalty on those caught misreporting (e.g., prosecution of tax fraud, ineligibility for future government services). The goals of the principal can vary. In some cases, they may want to minimize misreporting. For example, a social services agency can be thought of as a benevolent principal trying to maximize the social welfare of recipients, and it needs truthful elicitation in order to achieve this goal. In others, the principal may have their own utility function -- for example, a bank making lending decisions. 

There have been a number of papers that look at specific versions of the general auditing problem described above. In this paper, we systematically analyze the problem along three different dimensions, and present a number of new results. The first dimension is the goal, distinguishing between maximizing social welfare over the agents and maximizing the principal's utility. The second is whether or not the principal's strategy can be responsive to the actual distribution of agent reports (we call these the adaptive versus non-adaptive settings). Finally, the third dimension is whether audits are limited by a budget or whether the principal can undertake marginal-cost auditing by paying a specific cost for each additional audit. 

To give a concrete example, consider the Social Security Administration's Supplemental Security Income program, which has a strict upper limit on assets for eligibility.
The U.S.\ Supplemental Security Income program provides monthly cash only to applicants who (among other requirements) claim less than \$ 2000 in so-called ``deemed'' resources (essentially countable assets). As verifying every claim is costly, the agency can only audit a fraction of applicants. Deciding which brackets to inspect, whether to adapt those fractions after seeing the week’s claims, and how to weigh extra recoveries against audit costs mirrors the three axes we analyze: objective (social welfare vs.~agency pay‑off), adaptivity (fixed vs.~responsive rules), and resources (budget vs.~cost).
Our model captures this core strategic tension: applicants choose whether to misreport their private asset levels, anticipating the mixed audit policy chosen by the principal.

\paragraph{Our contributions}
This paper studies audit mechanism problems when agents adversarially choose the \emph{worst equilibrium}.  A principal wants truthful reports from $n$ strategic agents with private types in $[m]$. She may commit in advance to a probability of auditing different types (audit vector)
or to an adaptive policy that decides the audit vector after seeing the agents' reports. Each audit costs $\lambda$ (in the costly setting) or counts against a budget of $B$ (in the budgeted setting); detected misreports incur a penalty.  After the principal commits, agents adversarially select the worst-case equilibrium.
\begin{enumerate}
    \item We fully characterize the equilibrium structure in the non-adaptive costly setting where the principal commits to an audit vector and each audit costs~$\lambda$.  This structure yields an $\epsilon$-approximation algorithm in $O(m^2)$ time for the principal’s utility (Theorem~\ref{thm:opt_vector}).  We further prove that exact optimality is impossible in~\Cref{prop:nomax}.
    \item When the prior is unknown and varies in each round, we develop an online learning algorithm in \S~\ref{sec:noregret} that has regret $O(n\sqrt{Tm^2\log m})$ in $T$ rounds (\Cref{thm:noregret}).  Interestingly, although exact optimality in the one-shot setting is impossible, careful choice of arms allows our online learning algorithm to satisfy the no regret property.
    \item Beyond the principal’s utility, \S\ref{sec:non-adaptive_welfare} adapts both the efficient algorithm and the online learner to maximize social welfare, and \S\ref{sec:non-adaptive_penality} shows that increasing the penalty function or decreasing audit cost can only benefit the principal's utility and social welfare (\Cref{prop:monotone}).
    \item For adaptive audits, although the principal has a larger action space, we show that they offer no advantage over non-adaptive audits under the insensitivity assumption~\cref{eq:assum4} and the Wardrop equilibrium~\cref{eq:eqi_adaptive}. The same algorithm applies in the adaptive costly-audit setting (\Cref{thm:opt_adaptive_costly}). A similar algorithm for the budgeted case appears in~\cref{sec:adaptive_budgeted}.
\end{enumerate}
\subsection{Related Work} 
At a high level, our setting is a principal--(multi‑)agent Stackelberg game. In this section, we survey relevant techniques and highlight connections to three special cases---audit games, security games, and toll pricing in congestion games.

Our Stackelberg game features one leader and multiple heterogeneous followers.  Computing the leader’s optimal commitment is known to be hard even with two followers~\citep{conitzer2006computing}.  Our robustness notion relates to standard pessimistic equilibria~\citep{coniglio2017pessimistic}, but as our model has a larger action space (real-valued probability of auditing each report) with a non-convex structure, the standard bi-level optimization technique is not feasible. Recent work also considers solving multi-follower games under no externality assumptions~\citep{personnat2025learningplaymultifollowerbayesian}.

Classic costly-state-verification work \citep{townsend1979optimal} and subsequent allocation papers \citep{mookherjee1989optimal,ben2014optimal} study auditing or verification in resource-allocation settings, typically targeting truthful outcomes.   \citet{alm1993audit,ben2015auditing,coates2002financial} analyze equilibria of audit games, whereas our work addresses the mechanism design problem. 
\citet{10.1145/3328526.3329623} study penalty design with an exogenous audit process, whereas in our setting the principal designs the audit strategy. 
In multi-agent settings, \citet{Estornell2020StrategicClassification,estornell2023incentivizing} use audits to discourage misreporting and promote beneficial recourse.  A key difference between our work and the above is equilibrium selection: rather than targeting truth‑telling equilibria, we guarantee the principal’s performance at the worst (pessimistic) equilibrium, which may be non‑truthful. Recently, \citet{Jalota2024CatchFraud} connect information design to audit mechanisms when the agents can commit to a misreporting strategy; in our work, the principal is the one who can commit. 


Less directly related are security games, where a defender allocates inspection or patrol resources to deter attackers and inspection costs do not scale with the number of attackers~\citep{pita2008deployed,tambe2011security}. Closer to us are audit games that allow the leader to tune punishment for a single agent~\citep{blocki2013auditgames,blocki2015auditgamesmultipledefender}.

Our audit probabilities play a role analogous to tolls in congestion games: they modify followers’ payoffs to steer equilibrium flows.  However in our model an agent's cost depends not only on its reported type but also the true type.  Foundational work showed that marginal‐cost tolls can implement the system optimum in nonatomic traffic ~\citep{roughgarden2002how,cole2003pricing}.







\section{Audit Mechanism Problem}\label{sec:pre_model}
We study an audit mechanism design problem where a principal interacts with a continuum of (non-atomic) agents with total mass $n$. Each agent has a private type drawn from a known prior and may choose to misreport this type. The principal aims to incentivize truthful reporting by combining costly audits and penalties.  We extend our analysis to a setting with a hard audit-budget constraint in~\cref{sec:adaptive_budgeted}.

\paragraph{Basic setup}
There are $m\ge 2$ ordered types, denoted by $[m] = \{0,1,\dots,m-1\}$.  Each agent has a private type $i\in [m]$ which is sampled independently from a prior $\vq\in \Delta_m$ with full support on $[m]$, and reports $k\in [m]$ which may differ from $i$. We use $i,j$ for true types and $k,l$ for reported types.

When a type $i$ agent reports $k$, the principal assigns a payment $\pay(k)$, and receives $\val(i,k)$. The principal detects misreporting through audits by choosing an \emph{audit vector} $\vp\in [0,1]^m$  where $p_k$ is the probability of auditing an agent reporting type $k$.  Once audited, the principal gets $\pen(i,k)$ from the agent. We consider a type-independent penalty of the form $\pen(k)$ so that for all $i,k$
$\pen(i,k) = \mathbf{1}[i\neq k]\cdot\pen(k).$%
\footnote{A type-independent penalty can admit a weaker notion of auditing---one that can detect inconsistencies between the reported and true types but cannot identify the true type itself.}
One example is an affine penalty where $\pen(k) = a\pay(k) + b$ with $a, b \ge 0$.\footnote{%
Here are two real-world examples of affine penalties.  China’s Export Control authority levies fines between five and ten times the illicit turnover from an unlicensed export, i.e., $\pen(k) = 10\pay(k)$ or $\pen(k) = 5\pay(k)$.  Virginia requires an evading driver to pay the unpaid toll and an administrative fee of up to \$100, i.e., $\pen = \pay+100$.} This includes the formulation in~\cite{Estornell2020StrategicClassification} as a special case when $a = 1$.

Without loss of generality, we order the indices so that $0<\pay(k)<\pay(l)$ for all $k< l$ and $\pay(-1):=0$.  Additionally, we assume  misreporting higher can only decrease the value to the principal,
\begin{equation}\label{eq:assum2}
    \val(i,k)\ge \val(i,l)\text{ for all }i\le k \le l,
\end{equation}
and an agent that knows it will be audited would have no incentive to misreport: 
\begin{equation}\label{eq:assum1}
    \pen(k)\ge \pay(k)\text{ for all }k.
\end{equation} 

\paragraph{Agents' utilities and strategies}
Given an audit vector $\vp$, a type $i$ agent reporting $k$ has expected utility
\begin{equation}\label{eq:util_agent}
    U_{i,k}(\vp):= \pay(k)-p_k\pen(i,k)
\end{equation}

Agents use a randomized \emph{report strategy} represented by a matrix $\mQ$ where $Q_{i,k}$ is the probability of a type $i$ agent reporting type $k$. The induced \emph{report distribution}\footnote{As agents are non-atomic, the observed report distribution equals the expectation.  In particular, if all are truthful, the report distribution equals $\vq$.} is $\hat{\vq}\in \Delta_m$ where $\hat{q}_k = \sum_i q_i Q_{i,k}$ for all $k$.  
\begin{definition}\label{def:bne1}
      Given $\vp$, a report strategy $\mQ$ is a \emph{Bayes-Nash equilibrium} if for all $i$ and $k, l\in [m]$ with $Q_{i,k}>0$, $$U_{i,k}(\vp)\ge U_{i,l}(\vp).$$
Let $Eqi(\vp)$ be the set of all equilibria. 
\end{definition}

\paragraph{Principal's utility}
Given $\vp$ and $\mQ$, let 
$$C(\vp,\mQ) := n\sum_{i,k}q_iQ_{i,k}p_k$$ be the expected number of audits, and the principal’s utility without audit costs 

$$V(\vp, \mQ) = n\sum_{i, k\in [m]} q_iQ_{i,k}\left(\val(i,k)-\pay(k)+p_k\pen(i,k)\right)$$
where the final term is the gain from auditing.

We consider the \emph{costly setting} where the principal can audit any number of agents, but incurs a cost $\lambda\ge 0$ per audit. The principal's utility is 
\begin{equation}\label{eq:util_principal}
    V_\lambda(\vp, \mQ) = V(\vp,\mQ)-\lambda C(\vp,\mQ)
\end{equation}
We assume that the cost of audits is less than the penalty
\begin{equation}\label{eq:assum3}
    \lambda\le \pen(k)\text{ for all }k.
\end{equation}
We defer the budgeted setting to~\cref{sec:adaptive_budgeted}.
\paragraph{Principal strategies: Non-adaptive and adaptive}
The principal’s audit vector may be fixed or adaptively chosen based on agents' reports, $\hat{\vq}$. 

In the \emph{non-adaptive} setting, the principal commits to an audit vector $\vp$. After observing $\vp$, the agents collectively choose an equilibrium $\mQ\in Eqi(\vp)$ that is worst for the principal.
In the \emph{adaptive} setting, the principal commits to an \emph{audit strategy} $\pi$, which maps a reported distribution $\hat{\vq}$ to an audit vector $\vp = \pi(\hat{\vq})$.  After observing $\pi$, all agents collectively choose a worst equilibrium $\mQ$ under $\pi$ so that 
\begin{equation}\label{eq:eqi_adaptive}
    U_{i,k}(\pi(\hat{\vq}))\ge U_{i,l}(\pi(\hat{\vq})),\forall i, k, l\in [m]\text{ with }Q_{i,k}>0
\end{equation}
The above is a \emph{Wardrop equilibrium}; a single agent's deviation does not change the report distribution $\hat{\vq}$.
We denote $Eqi(\pi)$ as the set of equilibria among agents under strategy $\pi$, and 
$V_\lambda(\pi,\mQ) = V_\lambda(\pi(\hat{\vq}), \mQ)$
where $\hat{\vq}$ is the report distribution of $\mQ$.

\section{Optimal Non-Adaptive Audits with Costs}\label{sec:non-adaptive}
We study non-adaptive costly audit games with $(n, m, \vq, \val, \pay, \pen)$ and $\lambda$. Most proofs are deferred to~\cref{app:non-adaptive}.

\subsection{Optimizing the principal's utility}\label{sec:non-adaptive_utility}
The principal wants to maximize her utility under the worst-case Bayes-Nash equilibrium, defined below
\begin{equation}\label{eq:utiliy_max}
    V_\lambda(\vp):=\min_{\mQ \in Eqi(\vp)} V_\lambda(\vp, \mQ).
\end{equation}
An audit vector $\vp$ \emph{$\epsilon$-approximates} $\vp'$ if 
$V_\lambda(\vp) \ge V_\lambda(\vp') -\epsilon$, and $\vp$ is \emph{$\epsilon$-optimal} if it $\epsilon$-approximates any $\vp'$, i.e. $V_\lambda(\vp) \ge \sup_{\vp'}V_\lambda(\vp') -\epsilon$.

\Cref{thm:opt_vector} shows that there exists an algorithm that computes an $\epsilon$-optimal audit vector in time $O(m^2)$.  This runtime is tight, as reading all entries of $\val$ already requires $\Omega(m^2)$ time. Moreover, \Cref{prop:nomax} shows that computing an exactly optimal audit vector is impossible.

\begin{theorem}[Utility-optimal]\label{thm:opt_vector}
For any small enough $\epsilon>0$, $(n$, $m$, $\vq$, $\val$, $\pay$, $\pen)$ and $\lambda$, Algorithm~\ref{alg:water_filling_algorithm} computes a $2n\epsilon$-optimal audit vector for \cref{eq:utiliy_max} in $O\allowbreak(m^4)$ time.  

Moreover, the time complexity can be improved to $O(m^2)$.
\end{theorem}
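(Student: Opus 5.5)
The proof first characterizes the worst-case equilibrium for a fixed audit vector $\vp$, then reduces the search over $\vp\in[0,1]^m$ to a finite, structured family of ``water-filling'' candidates, and finally shows that the best candidate is $2n\epsilon$-optimal.

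\textbf{Equilibrium structure.} Fix $\vp$. Because the penalty is type-independent, a type $i$ agent gets $\pay(i)$ by reporting truthfully and $\pay(k)-p_k\pen(k)$ by reporting $k\neq i$. So every misreporting agent targets the same quantity $\max_{k}\bigl(\pay(k)-p_k\pen(k)\bigr)$, taken over $k\neq i$. Call $k$ \emph{attractive} for $i$ if $\pay(k)-p_k\pen(k)\ge \pay(i)$. By \cref{eq:assum1}, $p_k=1$ makes $k$ unattractive to everyone. In the worst equilibrium, every type $i$ whose truthful payoff is weakly dominated is sent to the misreport that is worst for the principal. Since the adversary minimizes, ties are resolved against the principal. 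The principal loses per agent
\[
\val(i,i)-\pay(i)-\bigl(\val(i,k)-\pay(k)+p_k\pen(k)-\lambda p_k\bigr)
\]
and collects audit revenue $p_i(0-\lambda)$ from truthful reports. Using \cref{eq:assum2}, the monotonicity of $\pay$, and \cref{eq:assum3}, the worst equilibrium can be described by a threshold. Let $u^*=\max_k(\pay(k)-p_k\pen(k))$. Every type with $\pay(i)\le u^*$ is indifferent or misreports to a maximizer, and every type with $\pay(i)>u^*$ is truthful.

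\textbf{Reduction to water-filling candidates.} Given a target ``water level'' $u$, auditing any report $k$ more than needed to bring $\pay(k)-p_k\pen(k)$ down to $u$ only costs $\lambda$ per truthful report. So the principal should set
\[
p_k=\max\bigl(0,\,(\pay(k)-u)/\pen(k)\bigr)\quad\text{for } k \text{ above the level},\qquad p_k=0 \text{ below}.
\]
The level $u$ only matters through which index $t$ satisfies $\pay(t-1)\le u<\pay(t)$. Within such an interval, the utility is affine in $u$ and the induced equilibrium is constant, except at the endpoint $u=\pay(t-1)$. There the tie flips agents of type $t-1$ to misreporting, which is exactly why the supremum is not attained (\Cref{prop:nomax}). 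The algorithm therefore enumerates $t\in[m]$ and chooses $u$ within $\epsilon$ of the better endpoint of the interval, on the strict side. This sets every relevant $p_k$ to $\epsilon/\pen(k)$ above the exact threshold. The loss relative to the supremum is at most $n\epsilon$ from the extra audit revenue and penalty term per agent, and another $n\epsilon$ from audit costs, giving $2n\epsilon$. ``Small enough $\epsilon$'' ensures that this perturbation does not cross into another interval.

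\textbf{Runtime.} Each candidate needs the worst equilibrium, which takes $O(m)$ per type for the worst misreport, i.e.\ $O(m^2)$ per candidate, and there are up to $O(m^2)$ candidates (the level index together with the partial-audit structure). This gives $O(m^4)$. To reach $O(m^2)$, precompute for each level $t$ the worst target $\arg\min_{k\ge t}(\val(i,k)-\pay(k)+\dots)$ via suffix minima over $k$ for every $i$, which costs $O(m^2)$ total. Then evaluate each candidate in $O(m)$ with prefix sums, or update incrementally as $t$ moves, so that each of the $O(m)$ candidates costs amortized $O(m)$.

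\textbf{Main obstacle.} The hardest step is justifying the reduction: any $\vp$ is weakly dominated, up to $2n\epsilon$, by a water-filling candidate under the \emph{worst}-case equilibrium. The worst-equilibrium map is discontinuous, and lowering $p_k$ could in principle create new attractive misreports that are worse for the principal. My approach is a pointwise argument. Given arbitrary $\vp$ with level $u^*$, the water-filled vector at the same level $u^*$, perturbed by $\epsilon$, leaves the set of non-truthful types unchanged or smaller and sets every $p_k\le$ the original. I would then use \cref{eq:assum2} and \cref{eq:assum3} to show that the adversary's best misreport target under the new vector is no worse for the principal than under the old one.
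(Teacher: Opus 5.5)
Your threshold characterization of the worst equilibrium is fine (it is \cref{lem:br}), but the reduction step has a genuine gap: your candidate family is too small. Your water-filled vectors set $p_k=\rho_k(u)$ for \emph{every} $k\ge\iota$. All of these reports then tie at misreport utility $u$, so the adversary can send each misreporting type to whichever $k\ge\iota$ is worst for the principal; your suffix-minimum step computes exactly this. The assertion that auditing $k$ above $\rho_k(u)$ ``only costs $\lambda$ per truthful report'' overlooks the fact that this over-auditing is what removes $k$ from the adversary's menu. For the same reason, the claim in your last paragraph fails. At the same level, the new menu $\{k\ge\iota\}$ contains the old one $\hat{A}(\vp)$, so the adversary's minimum can only decrease. \Cref{eq:assum2} does not help here; if anything it says the added higher targets are worse for the principal.

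Here is a concrete counterexample. Take $m=3$, $n=1$, $\pay=\pen=(1,2,3)$, $\lambda=1$, $\vq=(0.1,0.45,0.45)$, $\val(i,i)=\val(0,1)=5$, $\val(0,2)=-100$, $\val(1,2)=0$, with all other entries arbitrary; every assumption holds. Every level $u\ge 1$ in your family lets type $0$ report $2$, giving value below $-7$. So the supremum over your family is $2.125$, attained as $u\to 1^-$ with everyone truthful. Yet $\equal^-(1,1,\epsilon)=(0,\epsilon/2,(1+2\epsilon)/3)$ makes type $0$ strictly prefer report $1$ and keeps types $1,2$ truthful, with value $2.4-O(\epsilon)$. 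That is a constant gap, not $2n\epsilon$.

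The missing idea is the tie-breaking perturbation in $\equal(u,\kappa,\epsilon)$. You keep one designated target $\kappa\in\hat{A}(\vp)$ at $\rho_\kappa(u)$ and over-audit every other $k\ge\iota$ to $\rho_k(u-\epsilon)$, so those reports become strictly unattractive. Then $Eqi(\vp')=\{\mQ\}$ with $\mQ$ single-minded and $\mQ\in Eqi(\vp)$. Hence $V_\lambda(\vp)\le V_\lambda(\vp,\mQ)\le V_\lambda(\vp',\mQ)+n\epsilon$, where the $n\epsilon$ pays for over-auditing truthful reporters: $q_k\lambda\epsilon/\pen(k)\le q_k\epsilon$ by \cref{eq:assum3} (\cref{lem:equal_approx}).

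Uniqueness of the equilibrium is also what makes the utility affine in $u$ on $(U_{\iota-1},U_\iota)$. That is what justifies moving $u$ to within $\epsilon$ of an endpoint at a cost of at most another $n\epsilon$ (\cref{lem:critical_approx}). In your all-tied family, by contrast, the worst-case value is a minimum of affine functions of $u$ with slopes $-1+\lambda/\pen(k)$ that differ across $k$. It is therefore only concave, so your ``better endpoint'' step would fail even within your own family.

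Finally, enumerating $\kappa$ as well as $\iota$ is what produces the $O(m^2)$ critical vectors; your fast version effectively has only $O(m)$ candidates. Since each critical vector has a unique single-minded equilibrium, prefix sums precomputed in $O(m^2)$ let each be evaluated in $O(1)$, which gives the $O(m^2)$ bound.
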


The idea of \Cref{alg:water_filling_algorithm} is to search over a finite set of audit vectors, called critical audit vectors (\Cref{def:equal}). We also  show that any audit vector can be  approximated by one from this set.

The remainder of this section is organized as follows. We begin by defining equalized and critical audit vectors and presenting the algorithm. Next, we characterize agents' best responses and equilibrium behavior in \Cref{lem:br}, a result that underpins both \Cref{thm:opt_vector} and later analyses. We then show that exact optimization in \cref{eq:utiliy_max} may be impossible, justifying our approximation approach. Finally, we prove \Cref{thm:opt_vector}.

Let $\rho_k(u)$ be the probability that a type $k$ report is audited when $u$ is the utility of misreporting.
\begin{equation}\label{eq:rho}
    \rho_k(u) = \frac{\pay(k)-u}{\pen(k)}
\end{equation}
This is a valid probability when $0\le u\le \pay(k)$ by \cref{eq:assum1}.  Note that $\rho_k(u)$ is decreasing in $u$, and $p_k = \rho_k({U}_{i,k}(\vp))$, for all $\vp$ and $i\neq k$.  Hence, $\rho_k$ is a bijection between misreport utility and audit probability of type $k$.

\begin{definition}[Equalized and critical audit vectors]\label{def:equal}
    Given $0< u\le \max_k \pay(k)$ with $\iota = \min\{i: \pay(i)\ge u\}$, ${A}\subseteq \{i\in [m]: i\ge \iota\}$, and $0<\epsilon<u$, we define the \emph{equalized audit vector} $\vp=\equal(u, A,\epsilon)$ such that for all $k\in [m]$
\[p_k = 
\begin{cases}
0 & k< \iota,\\
\rho_{k}(u) & k\in {A},\\
\rho_{k}(u-\varepsilon) & \text{otherwise}.
\end{cases}
\]
If $\hat{A} = \{\kappa\}$, we write $\equal(u, A,\epsilon)$ as $\equal(u,\kappa,\epsilon)$.
Given $0<\epsilon< \gamma:=\min_k (\pay(k)-\pay(k-1))$ and $\iota\le \kappa$ we define the \emph{critical audit vectors} as
$\equal^+(\iota, \kappa,\epsilon)=\equal(\pay(\iota-1)+\epsilon,\kappa,\epsilon)$ and $\equal^-(\iota,\kappa,\epsilon)=\equal(\pay(\iota)-\epsilon,\kappa,\epsilon)$.
\end{definition}
Intuitively, an equalized audit vector sets the misreport value of all types in $A$ to $u$, and minimizes audit probabilities for others so that agents either misreport as $A$ or report truthfully. \Cref{lem:equal_welldefined} formalizes this property.  Note that because the $\rho_k$ are decreasing, $\equal^+(\iota, \kappa,\epsilon)\ge \equal^-(\iota, \kappa,\epsilon)$ coordinate-wise.  

\begin{algorithm}[t]
\caption{SuccinctSearch}
\label{alg:water_filling_algorithm}
\begin{algorithmic}[1]
\Require $\epsilon > 0$, $(n, m, \vq, \val, \pay, \pen)$, and $\lambda \ge 0$
\Ensure Audit vector $\vp^*$
\State Initialize $V_{\text{max}} \gets -\infty$ and $\vp^* \gets \vone$
\For{$i \in [m]$}
    \For{$k = i$ to $m - 1$}
        \State $\vp^+ \gets \equal^+(i, k, \epsilon)$ \Comment{critical audit vector}
        \State $\vp^- \gets \equal^-(i, k, \epsilon)$
        \If{$V_{\text{max}} < \Call{ComputeVal}{\vp^+}$}
            \State $\vp^* \gets \vp^+$
            \State $V_{\text{max}} \gets \Call{ComputeVal}{\vp^+}$
        \EndIf
        \If{$V_{\text{max}} < \Call{ComputeVal}{\vp^-}$}
            \State $\vp^* \gets \vp^-$
            \State $V_{\text{max}} \gets \Call{ComputeVal}{\vp^-}$
        \EndIf
    \EndFor
\EndFor
\State \Return $\vp^*$

\Function{ComputeVal}{$\vp$}
    \State Set $\hat{u} \gets \max_k \left\{ \pay(k) - p_k \pen(k) \right\}$, $\hat{A} \gets \arg\max_k \left\{ \pay(k) - p_k \pen(k) \right\}$, and $v \gets 0$.
    \For{$i \in [m]$}
        \State $v_i\gets \left( \val(i,i) - \pay(i) - p_i \lambda \right)$
        \State $\hat{v}_i\gets \min_{k \in \hat{A}} \left( \val(i,k) - \pay(k) + p_k (\pen(k) - \lambda) \right)$
        \If{$\pay(i) > \hat{u}$} \Comment{truthful}
            \State $v \gets v + q_i v_i$
        \ElsIf{$\pay(i) < \hat{u}$} \Comment{misreporting}
            \State $v \gets v + q_i \hat{v}_i$
        \Else \Comment{indifferent}
            \State $v \gets v + q_i \min \left\{ v_i,\hat{v}_i\right\}$
        \EndIf
    \EndFor
    \State \Return $v$
\EndFunction
\end{algorithmic}
\end{algorithm}

\subsubsection{Characterizing best response and equilibrium}
Before proving the theorem, we show that the best response of each agent follows a threshold structure.  There exists a minimal truthful type and a misreporting range such that all agents with lower types strictly prefer to misreport as a type within the misreport range, while higher types strictly prefer to report truthfully.

Given $\vp$, by \cref{eq:util_agent}, we can write the \emph{best-response set} of type-$i$ agents as 
$A_i(\vp) = \arg\max_{\,k\in [m]} (\pay(k) \;-\; p_k\,\pen(i,k)).$
To simplify the notation, we define the misreport utility of reporting $k$ as 
$$\hat{U}_{k}(\vp):=\pay(k)-p_k\pen(k)$$, which is independent of the misreporting agent's type, and the utility of being truthful as 
$U_k=\pay(k).$
Finally, let $\hat{u}(\vp) = \max_k \hat{U}_k(\vp)$ be the \emph{highest misreport utility}, $\truthidx(\vp) = \min\{i\in [m]: U_i\ge \hat{u}(\vp)\}$ be the \emph{minimal truthful type} (the lowest type that is willing to be truthful), and \emph{misreporting range}  $\hat{A}(\vp) = \argmax_k\{\hat{U}_k(\vp)\}\subseteq [m]$ be the set of types that have the highest misreport utility.  We will omit $\vp$ when it is clear in context.

\begin{restatable}[Threshold structure]{lemma}{lembr}\label{lem:br}
      Given $\vp$ with $\hat{u}(\vp)$, $\hat{A}(\vp)$, and $\truthidx(\vp)$ defined above, $\hat{A}(\vp)\subseteq \{k\in [m]: k\ge \truthidx(\vp)\}$ and 
      $$A_i(\vp) = \begin{cases}
      \{i\}&\text{ if }i>\truthidx,\\
          \hat{A}&\text{ if $i< \truthidx$}\\
          \{\truthidx\} &\text{ if $i= \truthidx$ and $U_{\truthidx}>\hat{u}$}\\
          \hat{A}\cup \{\truthidx\} &\text{ if $i= \truthidx$ and $U_{\truthidx}=\hat{u}$}
      \end{cases}.$$
\end{restatable}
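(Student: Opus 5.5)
The plan is to compare, for a fixed type $i$, the utility of truthful reporting, $U_{i,i}=\pay(i)$, with the utility of any misreport $k\neq i$, which is $U_{i,k}=\pay(k)-p_k\pen(k)=\hat{U}_k(\vp)$ since $\pen(i,k)=\pen(k)$ for $i\neq k$. The basic observation I will use throughout is that $\hat{U}_k\le \pay(k)$ for every $k$, with equality when $p_k=0$. Consequently, $\hat{u}\ge \hat{U}_k$ for all $k$, and truthful reporting gives type $i$ exactly $\pay(i)$, which is strictly increasing in $i$. Note also that $\truthidx$ is well defined. The index $k^*\in\hat{A}$ satisfies $\pay(k^*)\ge \hat{U}_{k^*}=\hat{u}$, so the set $\{i: U_i\ge\hat{u}\}$ is nonempty.

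\textbf{Step 1: $\hat{A}\subseteq\{k\ge\truthidx\}$.} Take any $k\in\hat{A}$. Then $\pay(k)\ge\hat{U}_k=\hat{u}$, so $k$ belongs to $\{i:U_i\ge\hat{u}\}$. Since $\truthidx$ is the minimum of this set, $k\ge\truthidx$.

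\textbf{Step 2: case analysis on $i$.} The overall utility available to type $i$ is $\max\{\pay(i),\max_{k\neq i}\hat{U}_k\}$.

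If $i>\truthidx$, then $\pay(i)>\pay(\truthidx)\ge\hat{u}\ge\hat{U}_k$ for every $k\neq i$, so $A_i=\{i\}$.

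If $i<\truthidx$, minimality of $\truthidx$ gives $\pay(i)<\hat{u}$. By Step 1, every $k\in\hat{A}$ satisfies $k\ge\truthidx>i$, so each such $k$ is a genuine misreport for type $i$ and yields $\hat{u}$. Any $k\notin\hat{A}$ with $k\neq i$ yields $\hat{U}_k<\hat{u}$, and $k=i$ yields $\pay(i)<\hat{u}$. Hence $A_i=\hat{A}$.

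If $i=\truthidx$, then $\pay(i)\ge\hat{u}$. If the inequality is strict, truth strictly beats every misreport, so $A_i=\{\truthidx\}$. If $\pay(\truthidx)=\hat{u}$, then truthful reporting and every $k\in\hat{A}\setminus\{\truthidx\}$ attain the maximum $\hat{u}$, while every other $k$ is strictly worse, so $A_i=\hat{A}\cup\{\truthidx\}$.

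\textbf{Step 3: the subtle point.} The one place needing care is that reporting $k=i$ gives $\pay(i)$ rather than $\hat{U}_i$. So whenever $i\in\hat{A}$, I must check that the truthful utility $\pay(i)$ is correctly accounted for. By Step 1 this can only happen when $i\ge\truthidx$.

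For $i>\truthidx$ there is nothing to check: truth wins strictly, as shown above. For $i=\truthidx$ we have $\pay(i)\ge\hat{u}$. If moreover $\truthidx\in\hat{A}$, then $\hat{u}=\hat{U}_{\truthidx}=\pay(\truthidx)-p_{\truthidx}\pen(\truthidx)\le\pay(\truthidx)$, and both cases of the statement remain consistent with the listed sets. I expect this bookkeeping to be the only real obstacle, and it is routine.
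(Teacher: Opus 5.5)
Your proof is correct and takes the same approach as the paper. You do a direct case split on $i$ relative to $\truthidx$, comparing the truthful payoff $\pay(i)$ with the type-independent misreport payoffs $\hat{U}_k\le \pay(k)$. You are slightly more complete than the paper, since you verify the containment $\hat{A}\subseteq\{k\ge\truthidx\}$ and the $i=\truthidx$ case explicitly, where the paper only says they ``follow similarly.''
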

An audit vector is \emph{strict} if $\hat{u}\notin\{U_i: i\in [m]\}$ so that every agent is either truthful or misreports as $\hat{A}$.  Additionally, a report strategy $\mQ$ is \emph{single-minded} with $\iota\le \kappa$ if all types $i\ge \iota$ are truthful and all types $i<\iota$ report as $\kappa$.\footnote{If $\iota = 0$, everyone is truthful.}

By \Cref{lem:br}, $Eqi(\vp)$ is non-empty and closed, so the minimum in \cref{eq:utiliy_max} is well-defined.  However, \Cref{prop:nomax} shows that the maximum of \cref{eq:utiliy_max} does not always exist.
\begin{figure}[ht]
    \centering
    \includegraphics[width=0.5\columnwidth]{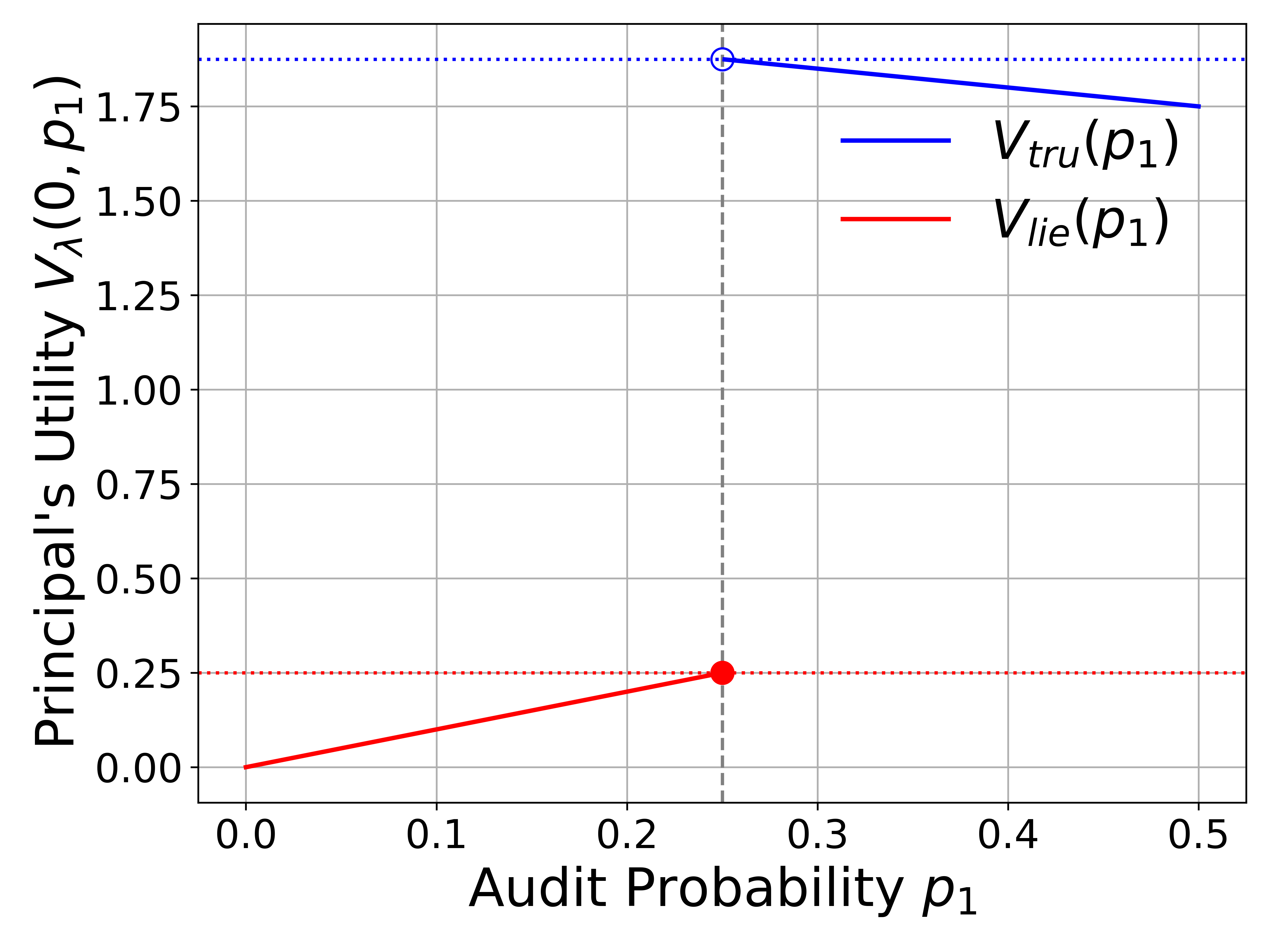}
    \caption{A non-adaptive audit game with unattainable optimum:  As in~\cref{prop:nomax}, we consider binary types ($m=2$) with $q_0 = q_1 = 1/2$, $\pay = (1,2)$, $\pen = (3,4)$, $\val = \begin{pmatrix}
3& 0\\
0& 4
\end{pmatrix}$, and $\lambda = 1$.  We vary the audit probability of the high type, using $\vp = (0, p_1)$ since auditing the low type is useless.  The principal’s utility if all agents misreport as the high type is $V_{lie}(p_1) = p_1$ (red), and if all are truthful is $V_{tru}(p_1) = \frac{4-p_1}{2}$ (blue). When $p_1 < \frac{1}{4}$, misreporting as the highest type is the unique equilibrium, for $p_1 > \frac{1}{4}$, truth-telling is the unique equilibrium, and at the threshold $p_1 = \frac{1}{4}$, any mixture is an equilibrium.  Therefore, the principal's worst-case equilibrium utility at $p_1 = \frac{1}{4}$ is $\frac{1}{4}$, but $\sup_\vp V_\lambda(\vp) = \lim_{p_1\to 1/4^+} V_\lambda(0,p_1) = \frac{15}{8}$ which is not attained by any $\vp$.}
    \label{fig:nomax}
\end{figure}
\begin{restatable}{proposition}{propnomax}\label{prop:nomax}
There exists a non-adaptive costly audit game so that $\sup_{\vp\in [0,1]^m}V_\lambda(\vp)<+\infty$ but  for all $\vp$,
$V_\lambda(\vp)<\sup_{\vp'\in [0,1]^m}V_\lambda(\vp').$
\end{restatable}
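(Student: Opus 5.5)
The plan is to prove \Cref{prop:nomax} directly on the instance of \Cref{fig:nomax}: $m=2$, $n=1$, $q_0=q_1=1/2$, $\pay=(1,2)$, $\pen=(3,4)$, $\val(0,0)=3$, $\val(1,1)=4$, $\val(0,1)=\val(1,0)=0$, and $\lambda=1$. First I will check the standing assumptions: \cref{eq:assum2} only constrains $i\le k\le l$ and holds since $\val(0,0)=3\ge 0=\val(0,1)$; \cref{eq:assum1} holds since $3\ge1$ and $4\ge 2$; \cref{eq:assum3} holds since $\lambda=1\le 3,4$. Boundedness of $\sup_\vp V_\lambda(\vp)$ is immediate, because every term of $V_\lambda(\vp,\mQ)$ is bounded on the compact set of $(\vp,\mQ)$.

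Next, for an arbitrary $\vp=(p_0,p_1)$, I will use \Cref{lem:br} to describe $Eqi(\vp)$. The misreport utilities are $\hat U_0=1-3p_0$ and $\hat U_1=2-4p_1$, while the truthful utilities are $U_0=1$ and $U_1=2$. A type-$1$ agent never strictly gains by reporting $0$, since its payoff there is at most $1<2$. So the only relevant deviation is type $0$ reporting $1$, which yields $2-4p_1$ compared with $1-3p_0\cdot 0=1$ for truth-telling. Hence type $0$ lies when $p_1<1/4$, tells the truth when $p_1>1/4$, and is indifferent when $p_1=1/4$. Type $1$ is always truthful, so $\hat q_0=q_0Q_{0,0}$.

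I will then compute the principal's worst-case utility. If type $0$ is truthful and type $1$ is truthful, then $V_\lambda=\tfrac12(3-1-p_0)+\tfrac12(4-2-p_1)=\tfrac{4-p_0-p_1}{2}$. If type $0$ reports $1$, that type contributes $\tfrac12(0-2+p_1(4-1))$, and the total is $\tfrac12(-2+3p_1)+\tfrac12(2-p_1)=p_1$. Combining the cases gives:
\begin{itemize}
\item for $p_1<1/4$, $V_\lambda(\vp)=p_1<1/4$;
\item for $p_1>1/4$, $V_\lambda(\vp)=\tfrac{4-p_0-p_1}{2}<\tfrac{15}{8}$;
\item at $p_1=1/4$, the worst mixture is all lying, so $V_\lambda(\vp)=\min\{1/4,(4-p_0-1/4)/2\}=1/4$. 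Since $V_\lambda(\vp,\mQ)$ is affine in $Q_{0,\cdot}$, the minimum sits at an endpoint.
\end{itemize}
Taking $p_0=0$ and $p_1\downarrow 1/4$ gives $\sup_\vp V_\lambda(\vp)=15/8$, and every case above lies strictly below this value. That completes the proof.

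The main point of care is the case analysis on type $0$'s incentive. The tie at $p_1=1/4$ must be resolved adversarially, and that choice is exactly what produces the discontinuity. I also need to check that $p_0$ never helps: raising it only adds audit cost on truthful type-$0$ reports, and it does not affect whether type $0$ lies, because type $0$ reporting $0$ is never penalized.
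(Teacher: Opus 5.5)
Your proof is correct and follows the paper's argument. You use the same binary instance from \Cref{fig:nomax} and the same threshold $p_1=\rho_1(U_0)=1/4$ where type $0$ switches from misreporting to truth-telling; adversarial tie-breaking there drops $V_\lambda$ to $1/4$, while $\sup_{\vp}V_\lambda(\vp)=15/8$ is only approached as $p_1\downarrow 1/4$, and the one minor difference (carrying a general $p_0$ through the computation rather than first reducing to $p_0=0$ as the paper does) makes your version slightly more self-contained.
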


\Cref{lem:equal_welldefined} establishes that the equalized audit vector is well-defined and corresponds to the $u$, $A$, and $\iota$ in \cref{lem:br}.

\begin{restatable}{lemma}{lemequalwelldefined}\label{lem:equal_welldefined}
    Given $\vp = \equal(u, A, \epsilon)$ with $\iota = \min\{i: \pay(i)\ge u\}$, if $u\notin \{U_i: i\in [m]\}$, $\vp$ is a strict audit vector, $\hat{u}(\vp) = u$, $\hat{A}(\vp) = {A}$, and $\truthidx(\vp) = \iota$.  
    $$A_i(\vp) = \begin{cases}
      \{i\}&\text{ if }i\ge \truthidx,\\
          A&\text{ if $i< \truthidx$}
      \end{cases}.$$
\end{restatable}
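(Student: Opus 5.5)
The plan is to compute $\hat{U}_k(\vp)$ coordinate by coordinate, read off $\hat{u}(\vp)$, $\hat{A}(\vp)$ and $\truthidx(\vp)$, and then invoke \Cref{lem:br}.

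\textbf{Well-definedness.} First check that $\vp\in[0,1]^m$. For $k<\iota$ we have $p_k=0$. For $k\ge\iota$ we have $\pay(k)\ge\pay(\iota)\ge u>u-\epsilon>0$. Since $\rho_k(w)=(\pay(k)-w)/\pen(k)$ with $0<w\le\pay(k)$, and $\pen(k)\ge\pay(k)$ by \cref{eq:assum1}, we get $\rho_k(w)\in[0,1)$. So both $\rho_k(u)$ and $\rho_k(u-\epsilon)$ are valid probabilities.

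\textbf{Misreport utilities.} Since $\hat{U}_k(\rho_k(w))=w$, the three cases of the definition give:
\begin{itemize}
\item for $k\in A$, $\hat{U}_k=u$;
\item for $k\ge\iota$ with $k\notin A$, $\hat{U}_k=u-\epsilon<u$;
\item for $k<\iota$, $\hat{U}_k=\pay(k)$, and $\pay(k)<u$ by minimality of $\iota$.
\end{itemize}
The key point is that $u\notin\{U_i\}$ is what makes this last inequality strict. Also $A\subseteq\{i\ge\iota\}$ is nonempty; this is implicit, and the single-element case $A=\{\kappa\}$ is the one used. Hence $\hat{u}(\vp)=u$ and $\hat{A}(\vp)=A$.

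\textbf{Minimal truthful type.} By definition $\truthidx(\vp)=\min\{i:\pay(i)\ge u\}=\iota$. Since no $\pay(i)$ equals $u$, $\hat{u}$ is not in $\{U_i\}$, so $\vp$ is strict.

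\textbf{Best responses.} Apply \Cref{lem:br}. For $i>\iota$ the best-response set is $\{i\}$, and for $i<\iota$ it is $\hat{A}=A$. For $i=\iota$, we have $U_\iota=\pay(\iota)>u=\hat{u}$ by strictness, so we are in the third case and $A_\iota=\{\iota\}$. Merging the cases $i>\iota$ and $i=\iota$ gives the displayed formula with $i\ge\truthidx$.

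The only delicate step is the strict inequality $\pay(k)<u$ for $k<\iota$, together with the strictness at $i=\iota$. Both follow from the hypothesis $u\notin\{U_i\}$ combined with the monotonicity of $\pay$. Apart from that, the argument is direct substitution.
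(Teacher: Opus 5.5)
Your proof is correct and follows the paper's argument: you check $\vp\in[0,1]^m$, compute $\hat{U}_k(\vp)$ coordinate by coordinate to read off $\hat{u}(\vp)=u$, $\hat{A}(\vp)=A$ and $\truthidx(\vp)=\iota$, and then obtain the best-response sets from \Cref{lem:br}, which you spell out more explicitly than the paper does. One small misattribution: $\pay(k)<u$ for $k<\iota$ already follows from the minimality of $\iota$, so the hypothesis $u\notin\{U_i: i\in[m]\}$ is needed only for $\pay(\iota)>u$, i.e., for strictness and for $A_\iota(\vp)=\{\iota\}$.
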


\subsubsection{Approximation by equalized and critical audit vectors}

We now show that any audit vector can be approximated by some strict equalized vector (\Cref{lem:equal_approx}), and some critical vector (\Cref{lem:critical_approx}).

\begin{restatable}{lemma}{lemequalapprox}\label{lem:equal_approx} 
For any $\vp\in [0,1]^m$, there is a strict equalized audit vector  $\vp' = \equal(u, \kappa, \epsilon)$ with $u> 0, \kappa\in [m]$, and $0<\epsilon<u$ so that $Eqi(\vp')\subseteq Eqi(\vp)$ and for all $\mQ\in Eqi(\vp')$ 
$$V_\lambda(\vp, \mQ)\le V_\lambda(\vp', \mQ)+n\epsilon.$$
\end{restatable}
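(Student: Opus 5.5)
The plan is to build $\vp'$ directly from the quantities $\hat u(\vp)$, $\hat A(\vp)$, $\truthidx(\vp)$ of \Cref{lem:br}. I would then use \Cref{lem:equal_welldefined} to identify $Eqi(\vp')$ exactly. If $\vp'=\equal(u,\kappa,\epsilon)$ is strict with $\iota=\min\{i:\pay(i)\ge u\}$, then $Eqi(\vp')$ consists of the single report strategy $\mQ$ that is single-minded with $(\iota,\kappa)$: every type $i\ge\iota$ is truthful and every type $i<\iota$ reports $\kappa$. So the inclusion $Eqi(\vp')\subseteq Eqi(\vp)$ reduces to checking two things: $\kappa\in\hat A(\vp)$, and $\iota$ is compatible with the best-response sets $A_i(\vp)$ of \Cref{lem:br}. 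The value comparison then only involves that one $\mQ$.

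\textbf{Construction by cases.} Let $\hat u=\hat u(\vp)$, pick any $\kappa\in\hat A(\vp)$, and fix a small $\delta\in(0,\epsilon]$.
\begin{itemize}
\item \emph{Strict case} ($\hat u>0$ and $\hat u\notin\{U_i\}$): take $u=\hat u$.
\item \emph{Indifferent case} ($\hat u>0$ and $\hat u=\pay(\truthidx)$): take $u=\hat u-\delta$, with $\delta$ small enough that $u>\pay(\truthidx-1)$ and $u\notin\{U_i\}$. Then $\iota=\truthidx(\vp)$, and the indifferent type $\truthidx$ is truthful. This is allowed in $Eqi(\vp)$ because its best-response set there is $\hat A\cup\{\truthidx\}$.
\item \emph{Degenerate case} ($\hat u\le 0$): every agent is truthful under $\vp$. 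Take $u=\delta$ with $\delta<\pay(0)$, so $\iota=0$, let $\kappa$ be arbitrary, and note that the all-truthful $\mQ$ lies in $Eqi(\vp)$.
\end{itemize}
In each case I also take the equalization slack to be at most $\epsilon$ and smaller than $u$. Checking the equilibrium inclusion then amounts to reading off \Cref{lem:br}: types $i<\iota$ have $\kappa\in\hat A(\vp)=A_i(\vp)$, and types $i\ge\iota$ have $i\in A_i(\vp)$.

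\textbf{Value comparison.} Under the single-minded $\mQ$, the reports that occur are $\kappa$ (from misreporters, and from type $\kappa$ itself) and $k\ge\iota$ (from truthful agents). Reports $k<\iota$ never occur, so the zero entries of $\vp'$ there do not matter.
\begin{itemize}
\item \emph{Reports of $\kappa$.} Each such report contributes the term $p_\kappa(\pen(i,\kappa)-\lambda)$ or $-\lambda p_\kappa$. In the strict case $p'_\kappa=\rho_\kappa(\hat u)=p_\kappa$, so nothing changes. In the indifferent case $p'_\kappa=\rho_\kappa(\hat u-\delta)>p_\kappa$. This raises penalty revenue by at least the extra cost, since $\lambda\le\pen(\kappa)$ by \cref{eq:assum3}, so the change only helps $\vp'$ on misreporters. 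For truthful type-$\kappa$ agents it costs at most $\lambda\delta/\pen(\kappa)\le\delta$.
\item \emph{Truthful reports $k\ne\kappa$.} These contribute only $-\lambda p_k$. Since $\hat U_k(\vp)\le\hat u$ we have $p_k\ge\rho_k(\hat u)$ (and $p_k\ge\pay(k)/\pen(k)$ in the degenerate case). By construction $p'_k\le\rho_k(\hat u-\epsilon')$ for a slack $\epsilon'\le\epsilon$. Hence $\lambda(p'_k-p_k)\le\lambda\epsilon/\pen(k)\le\epsilon$, again by \cref{eq:assum3}.
\end{itemize}
Summing over types weighted by $n q_i$, with $\sum_i q_i=1$, gives $V_\lambda(\vp,\mQ)\le V_\lambda(\vp',\mQ)+n\epsilon$.

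\textbf{Main obstacle.} The delicate step is the boundary cases. In the indifferent case the perturbation of $u$ must not change $\iota$ or push $u$ onto another $\pay(i)$, and the slack must respect $0<\epsilon<u$, which is why $\delta$ has to be chosen small relative to both $\epsilon$ and the gaps $\pay(\truthidx)-\pay(\truthidx-1)$. The degenerate case $\hat u\le 0$ must be handled separately, because there $u=\hat u$ is not a legal parameter. I would first settle the exact choice of $\delta$ and of the slack so that all these constraints hold simultaneously, and only then run the uniform cost bound above.
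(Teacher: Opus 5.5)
Your proposal is correct and follows essentially the same route as the paper. The paper uses the same three cases: the strict case with $\hat u(\vp)>0$, where $u=\hat u$; the all-truthful case $\hat u(\vp)\le 0$, where it takes $u=U_0/2$; and the non-strict case, where $u$ is shifted just below $\hat u=U_{\truthidx}$. It likewise uses \cref{lem:equal_welldefined} to pin down the unique single-minded equilibrium and bounds $V_\lambda(\vp,\mQ)-V_\lambda(\vp',\mQ)$ report by report via \cref{eq:assum3}.

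The one bookkeeping point is in the indifferent case: your shift $\delta$ plus the slack together must be at most $\epsilon$, not each separately. The paper arranges this by taking $\vp'=\equal(\hat u-\frac{\epsilon}{2},\kappa,\frac{\epsilon}{2})$.
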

To prove~\cref{lem:equal_approx}, we use \Cref{lem:equal_welldefined} to find a a strict equalized audit vector $\vp'$ so that $Eqi(\vp') = \{\mQ\}\subseteq Eqi(\vp)$ consists of a single-minded equilibrium with $\iota$ and $\kappa$.  Then we upper bound the difference of the principal's utilities with these two audit vectors, $V_\lambda(\vp, \mQ)-V_\lambda(\vp', \mQ)$ to get
\begin{equation}\label{eq:equal_approx0}
    n\sum_{i, k\in [m]} q_iQ_{i,k}\left(\pen(i,k)-\lambda\right)\left(p_k-p_k'\right).
\end{equation}
To minimize \cref{eq:equal_approx0}, consider two cases.  If $k\neq \kappa$, the smaller $p_k'$ yields a larger utility (which is the intuition of \cref{eq:rho}).  For type $\kappa$, the equalized audit vector ensures $p_\kappa' \approx p_\kappa$.  

\begin{restatable}{lemma}{lemcriticalapprox}\label{lem:critical_approx} 
For any $\epsilon'$ and $\vp = \equal(u, \kappa, \epsilon)$ with $\iota = \truthidx(\vp)$ and $\epsilon, \epsilon'<\frac{\gamma}{2}$, there are $\vp^- = \equal( U_\iota-\epsilon',\kappa, \epsilon)$ and $\vp^+ = \equal( U_{\iota-1}+\epsilon',\kappa, \epsilon)$ so that $Eqi(\vp^+) = Eqi(\vp^-) = Eqi(\vp)$ and 
$$V_\lambda(\vp,\mQ)\le \max\{V_\lambda(\vp^+,\mQ), V_\lambda(\vp^-,\mQ)\}+n\epsilon'$$ for all $\mQ\in Eqi(\vp)$.
\end{restatable}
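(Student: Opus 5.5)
The plan is to treat the principal's utility, with the equilibrium held fixed, as a function of the single scalar $u$. Along the family $u'\mapsto \equal(u',\kappa,\epsilon)$ this function is affine on the interval $(U_{\iota-1},U_\iota)$. An affine function on an interval is maximized near an endpoint, and the two critical vectors $\vp^{\pm}$ sit $\epsilon'$-close to those endpoints. Throughout I take $\vp$ to be strict, as it is when produced by \Cref{lem:equal_approx}, so $u\notin\{U_i\}$. Since $\iota=\min\{i:\pay(i)\ge u\}$, this gives $U_{\iota-1}<u<U_\iota$.

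\textbf{Step 1 (same equilibria).} Set $u^-=U_\iota-\epsilon'$ and $u^+=U_{\iota-1}+\epsilon'$. Since $\epsilon'<\gamma/2\le (U_\iota-U_{\iota-1})/2$, both values lie strictly inside $(U_{\iota-1},U_\iota)$, so neither equals any $U_i$. Also $\min\{i:\pay(i)\ge u^\pm\}=\iota$. By \Cref{lem:equal_welldefined}, $\vp^\pm$ are strict with $\hat A=\{\kappa\}$ and $\truthidx=\iota$, and they have the same best-response sets $A_i$ as $\vp$: truthful for $i\ge\iota$, and $\{\kappa\}$ for $i<\iota$. Because $Eqi$ is determined by these best-response sets, $Eqi(\vp^+)=Eqi(\vp^-)=Eqi(\vp)=\{\mQ\}$, where $\mQ$ is the single-minded strategy with $\iota,\kappa$. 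I also need $\vp^\pm$ to be well defined, i.e.\ $\epsilon<u^\pm$ and $u^\pm-\epsilon\ge0$. This is the one technical point to check. For $\iota\ge1$ it follows from $\epsilon<\gamma/2$ and $U_{\iota-1}\ge\gamma$. For $\iota=0$ it needs $\epsilon<\epsilon'$; if this fails I would shrink $\epsilon$ in the definition of $\vp^\pm$, which does not change the argument below.

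\textbf{Step 2 (affinity in $u'$).} Fix $\mQ$ and write $f(u')=V_\lambda(\equal(u',\kappa,\epsilon),\mQ)$ for $u'\in(U_{\iota-1},U_\iota)$. Each coordinate is either $0$, $\rho_k(u')$ or $\rho_k(u'-\epsilon)$, and each of these is affine in $u'$ with slope $0$ or $-1/\pen(k)$. Expanding $V_\lambda$ gives
\[
f(u') \;=\; \text{const} \;-\; n\lambda\sum_{i\ge\iota} q_i\, p_i(u') \;+\; n\sum_{i<\iota} q_i\,(\pen(\kappa)-\lambda)\,p_\kappa(u').
\]
So $f$ is affine. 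Its slope is $n\sum_{i\ge\iota}q_i\lambda/\pen(i)-n\sum_{i<\iota}q_i(\pen(\kappa)-\lambda)/\pen(\kappa)$. By \cref{eq:assum3} and $\lambda\ge0$, each ratio lies in $[0,1]$, so $|f'|\le n\sum_i q_i=n$.

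\textbf{Step 3 (endpoint comparison).} Suppose $f'\ge0$. If $u\le u^-$, then $f(u)\le f(u^-)$. Otherwise $u\in(u^-,U_\iota)$, so $f(u)\le f(u^-)+n(u-u^-)\le f(u^-)+n\epsilon'$. Symmetrically, if $f'<0$, then $f(u)\le f(u^+)+n\epsilon'$. Since $V_\lambda(\vp,\mQ)=f(u)$ and $V_\lambda(\vp^\pm,\mQ)=f(u^\pm)$, this gives the claimed bound.

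I expect the main obstacle to be Step 1's bookkeeping rather than the affine argument. The work there is checking that $\vp^\pm$ are legitimate equalized vectors, including the boundary case $\iota=0$, and that no new indifferences arise, so that the equilibrium set does not change.
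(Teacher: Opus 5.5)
Your proposal is correct and follows essentially the same route as the paper. Both arguments get identical equilibria from \Cref{lem:equal_welldefined}, show that $V_\lambda(\cdot,\mQ)$ is affine in $u$ with slope at most $n$ in absolute value, and compare against the endpoint vector chosen by the sign of that slope. Your extra care about $\iota=0$, where $\vp^+=\equal(\epsilon',\kappa,\epsilon)$ requires $\epsilon<\epsilon'$, addresses a boundary case the paper's proof skips; there the slope $n\lambda\sum_i q_i/\pen(i)$ is nonnegative, so $\vp^-$ alone already yields the inequality and no shrinking of $\epsilon$ is needed.
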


To prove \cref{lem:critical_approx}, we note that fixing $\kappa\in [m]$ and $\epsilon>0$, an equalized audit vector $\vp = \equal(u, \kappa, \epsilon)$ is parameterized by a single parameter $u\in \R$.  Moreover, the audit probabilities are affine in $u$, so the principal's utility is affine in $u$ by \cref{eq:equal_approx0}.  Therefore, we can optimize the principal's utility using the extreme value of $u\in (U_{\iota-1}, U_\iota)$ by~\cref{lem:equal_welldefined}.


\paragraph{Proof of Theorem~\ref{thm:opt_vector}}
Algorithm~\ref{alg:water_filling_algorithm} iterates over critical audit vectors with all combinations of $i\le k$ and computes the principal's worst-case utility.  By \cref{lem:br}, \FVal ~computes $V_\lambda(\vp)$ by considering whether a type is truthful or misreporting as $\hat{A}$.  Therefore, \Cref{alg:water_filling_algorithm} returns the optimal critical audit vector.

For the approximation guarantee, given any audit vector $\vp\in [0,1]^m$, by \Cref{lem:equal_approx}, there exists a strict equalized audit vector $\vp' = \equal(u, \kappa, \epsilon)$ with $\iota = \truthidx(\vp')$ so that
$V_\lambda(\vp)\le V_\lambda(\vp')+n\epsilon$.
By \Cref{lem:critical_approx}, there exists a critical audit vector $\vp'' = \equal^+(\iota, \kappa, \epsilon)$ or $\equal^-(\iota, \kappa, \epsilon)$ with $\iota\le \kappa$ so that 
$V_\lambda(\vp')\le V_\lambda(\vp'')+n\epsilon$.
Therefore, there exists some critical audit vector $$V_\lambda(\vp)\le V_\lambda(\vp'')+2n\epsilon,$$
and the algorithm is $2n\epsilon$-optimal.

The algorithm searches through all $2(\binom{m}{2}+m) = O(m^2)$ critical vectors.  By \cref{lem:br}, \FVal computes $V_\lambda(\vp)$ by computing the worst report in $A_i(\vp)$ that minimizes the principal's utility for all $i$.  This takes $O(m^2)$ for each audit vector. Therefore, the time complexity is in $O(m^4)$.  We can improve the running time of \Cref{alg:water_filling_algorithm} to $O(m^2)$ using dynamic programming for \FVal.

\begin{remark}We can improve the running time of \Cref{alg:water_filling_algorithm} to $O(m^2)$ using dynamic programming for \FVal as shown in~\cref{alg:fast_search}.  Because the equilibrium of any critical audit vector $\vp = \equal^+(i, k, \epsilon)$ or $\equal^-(i, k, \epsilon)$ is single-minded where agents with type indices strictly less than $i$ misreport as type $k$, and agents with type indices greater than or equal to $i$ report truthfully, and the principal's utility~\cref{eq:util_principal} becomes
\[
V_\lambda(\vp) = n \left[ M_{i,k}-(\pay(k)-p_{k}(\pen(k) - \lambda)) F_{i} + G_{i} \right],
\]
where $M_{i,k} = \sum_{j < i} q_j \val(j,k)$ is the total valuation from all misreporting agents with types less than $i$, $F_i = \sum_{j < i} q_j$ is the fraction of misreporting agents, and $G_i = \sum_{j \ge i} q_j (\val(j,j) - \pay(j)-p_j\lambda)$ is the utility from truthful agents with types greater than or equal to $i$.  We can use dynamic programming to compute the prefix sums $M_{i,k}, F_i$, and suffix sums $G_i$ for each $\iota \in [m]$, using $O(m^2)$ time.  Thus, we can evaluate the utility for any critical audit vector in  constant time by table lookups and make the running time in $O(m^2)$.  We provide the formal algorithm in the appendix.
\end{remark}

\subsection{No-Regret Auditing without a Prior}\label{sec:noregret}
One limitation of \Cref{alg:water_filling_algorithm} is assuming access to the prior $\vq$.  We provide a no-regret online learning algorithm when the prior $\vq$ is unknown and can vary in each round.

Let $V_\lambda(\vp, \mQ;\vq)$ be the principal’s (single-round) utility from \cref{eq:util_principal} under prior $\vq$, and $V_\lambda(\vp; {\vq}) := \min_{\mQ\in Eqi(\vp)}V_\lambda(\vp, \mQ; \vq)$.

Consider the principal and agents interacting over $T$ rounds. 
The principal knows $(n, m, \val, \pay, \pen)$ and $\lambda$ while Nature secretly chooses $\vec{\vq}:= (\vq^0,\dots,\vq^{T-1})$.  For round $t = 0,\dots,T-1$, 
\begin{enumerate}
    \item The principal with algorithm $\mathcal{A}$ samples an audit vector $\vp^t$ from a distribution $P^t$ based on the history $(\vp^0,v^0,\dots,\vp^{t-1},v^{t-1})$.
    \item After observing $\vp^t$, agents collectively choose the \emph{worst equilibrium} $\mQ^t\in \argmin_{\mQ\in Eqi(\vp^t)} V_\lambda(\vp^t, \mQ; \vq^t)$
    \item The principal gets $v^t = V_\lambda(\vp^t, \mQ^t; \vq^t) = V_\lambda(\vp^t;\vq^t)$.
\end{enumerate}
The principal designs an online learning algorithm $\mathcal{A}$ that maximizes her accumulative expected utility. 
Formally, the algorithm is evaluated by its (multi-agent) \emph{Stackelberg regret}~\cite{DBLP:journals/corr/abs-1710-07887,DBLP:journals/corr/abs-1911-04004}\footnote{Classical online‐Stackelberg work assumes a single agent (follower) who best-responds to the leader’s action. In our model the follower is a population of $n$ agents who play the worst equilibrium under the $n$-player game induced by the audit vector.}   against the optimal audit vector in hindsight which knows agents' prior $\vec{\vq}$.  We define \begin{equation}\label{eq:regret}
   Reg_T(\mathcal{A}, \vec{\vq}) = \sup_{\vp} \sum_{t\in [T]} V_\lambda(\vp; \vq^t)-\E_{\mathcal{A}}\left[\sum_{t\in [T]} V_\lambda(\vp^t; \vq^t)\right], 
\end{equation}
and $Reg_T(\mathcal{A}) = \sup_{\vec{\vq}} Reg_T(\mathcal{A}, \vec{\vq})$ 
where the randomness is over the choice of audit vectors.

\begin{restatable}{theorem}{thmnoregret}\label{thm:noregret}
Given any $(n,m,\val,\pay,\pen)$ and $\lambda$, the online learning algorithm $\mathcal{A}$ in \Cref{alg:exp3} has 
    $Reg_T(\mathcal{A}) = O(n\sqrt{Tm^2\log m}).$
\end{restatable}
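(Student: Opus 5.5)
We reduce the problem to an experts/bandit problem over the finite set of critical audit vectors and run EXP3 on it. Fix $\epsilon>0$, to be tuned later as a function of $T$, with $\epsilon<\gamma/2$. Let $\mathcal{P}_\epsilon=\{\equal^+(\iota,\kappa,\epsilon),\equal^-(\iota,\kappa,\epsilon):\iota\le\kappa\in[m]\}$ be the critical audit vectors. There are $K=O(m^2)$ of them. The central observation is that \Cref{def:equal}, \Cref{lem:equal_welldefined} and \Cref{lem:critical_approx} never use the prior $\vq$: a critical vector, its equilibrium set, and the approximating vectors depend only on $(\pay,\pen)$. The prior only enters the utility $V_\lambda(\cdot,\mQ;\vq)$, which is linear in $\vq$.

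\textbf{Step 1: uniform approximation.} For any fixed $\vp$, I will show there is a single $\vp''\in\mathcal{P}_\epsilon$ such that $V_\lambda(\vp;\vq)\le V_\lambda(\vp'';\vq)+2n\epsilon$ holds simultaneously for \emph{every} prior $\vq$. The chain \Cref{lem:equal_approx} $\to$ \Cref{lem:critical_approx} should give this, since $\vp'$ is chosen via $\hat u(\vp),\hat A(\vp),\truthidx(\vp)$, which are independent of $\vq$. The subtle point, and the main obstacle, is that \Cref{lem:equal_approx} picks $\kappa\in\hat A(\vp)$ as the worst misreport, and that choice can depend on $\vq$ through $\val$. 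Likewise \Cref{lem:critical_approx} picks $\vp^+$ or $\vp^-$ via a $\max$ that depends on $\vq$.

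I will handle this as follows. For fixed $\vp$, $\iota$, $\epsilon$, the utility of $\equal(u,\kappa,\epsilon)$ is affine in $u$. Hence $\sum_t V_\lambda(\cdot;\vq^t)$ over the fixed comparator is also affine, and its supremum is attained at an endpoint. So I apply the argument to the \emph{cumulative} objective $\sum_t V_\lambda(\vp;\vq^t)$ rather than round by round. The per-round worst equilibria $\mQ^t$ differ, but each is single-minded on $\hat A(\vp)$. For each $\kappa$, the linearity-in-$\vq$ decomposition gives $\sum_t V_\lambda(\vp;\vq^t)\le \sum_t V_\lambda(\vp'_{\kappa^t};\vq^t)+nT\epsilon$. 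I then need to pass from per-round $\kappa^t$ to a single $\kappa$. If this fails, a fallback is to note that the worst equilibrium under $\vp$ in round $t$ is weakly worse than reporting any fixed $\kappa\in\hat A$. Choosing the $\kappa$ that the worst-case selection uses on the comparator side should then make $\sum_t V_\lambda(\vp;\vq^t)\le\sum_t V_\lambda(\vp'_\kappa;\vq^t)+nT\epsilon$ for any fixed $\kappa\in\hat A(\vp)$. This holds because $V_\lambda(\vp;\vq^t)\le V_\lambda(\vp,\mQ_\kappa;\vq^t)$, where $\mQ_\kappa$ is the single-minded profile onto $\kappa$, which is an equilibrium of $\vp$. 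This fallback avoids the issue entirely, and I expect it to work. The conclusion is $\sup_{\vp}\sum_t V_\lambda(\vp;\vq^t)\le \max_{\vp''\in\mathcal{P}_\epsilon}\sum_t V_\lambda(\vp'';\vq^t)+2nT\epsilon$.

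\textbf{Step 2: bandit regret.} Each round's reward lies in a range of width $O(n)$, since $\val$, $\pay$, $\pen$ and $\lambda$ are bounded constants. Rescaling by $n$ and running EXP3 over the $K$ arms in $\mathcal{P}_\epsilon$ gives expected regret $O(n\sqrt{TK\log K})=O(n\sqrt{Tm^2\log m})$ against the best fixed arm. The bandit feedback $v^t$ is exactly the reward of the pulled arm, because the agents play the worst equilibrium.

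\textbf{Step 3: combine and tune.} Adding the two bounds gives $Reg_T\le O(n\sqrt{Tm^2\log m})+2nT\epsilon$. Choosing $\epsilon=\min\{\gamma/4,\,T^{-1/2}\}$ makes the discretization term $O(n\sqrt T)$, which is absorbed. The arm set depends on $\epsilon$ only through its values, not its count, so $K$ is unaffected.
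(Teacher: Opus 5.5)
Your argument is sound and follows the paper's skeleton: prior-independent critical vectors, a cumulative (summed-over-$t$) version of \Cref{lem:equal_approx,lem:critical_approx}, which is exactly \Cref{lem:critical_noregret}, and then EXP3. It differs in how it handles the discretization, and strictly speaking it proves the bound for a different algorithm than the one named in the statement.

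You fix one $\epsilon=\min\{\gamma/4,T^{-1/2}\}$ for the whole horizon, run EXP3 on a static arm set, and pay $2nT\epsilon=O(n\sqrt{T})$. \Cref{alg:exp3} instead plays template arms $\equal(\sigma,\epsilon_t)$ with $\epsilon_t=2^{-t}\epsilon_0$, so its EXP3 guarantee is against time-varying arm rewards. The paper therefore needs a step you never use. Write $V^t(\cdot)=V_\lambda(\cdot;\vq^t)$. For a fixed template $\sigma$ and every $\epsilon<\gamma/2$, the vector $\equal(\sigma,\epsilon)$ has the same unique single-minded equilibrium. By \cref{eq:equal_approx0}, its utility is then affine in $\epsilon$ with slope $O(n)$, which gives $|\sum_t (V^t(\equal(\sigma,\epsilon))-V^t(\equal(\sigma,\epsilon_t)))|=O(n\sum_t\epsilon_t+nT\epsilon)$ with $\sum_t\epsilon_t\le 2\epsilon_0$. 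Applying \Cref{lem:critical_noregret} and letting $\epsilon\to 0$ leaves a $T$-independent $O(n\epsilon_0)$ discretization cost.

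Your route is simpler: plain EXP3 on a fixed finite arm set, with no perturbation-in-$\epsilon$ argument. It also loses nothing in the rate, since the learning rate $\eta$ already requires knowing $T$. The paper's schedule buys a discretization error independent of the horizon, and it is what the statement, which is about \Cref{alg:exp3} specifically, actually uses. To prove the theorem as written, add the Lipschitz-in-$\epsilon$ comparison above.

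There are also two minor corrections to your Step 1.
\begin{itemize}
\item The obstacle you describe does not arise. \Cref{lem:equal_approx} takes $\kappa\in\hat A(\vp)$ arbitrary, not as the worst misreport. Your fallback bounds $V_\lambda(\vp;\vq^t)$ by the value of the single-minded equilibrium $\mQ_\kappa\in Eqi(\vp)$ and compares it with the unique equilibrium of $\equal(\cdot,\kappa,\epsilon)$. That is precisely the paper's argument.
\item The per-round worst equilibria under $\vp$ need not be single-minded. Section~\ref{sec:non-adaptive} does not assume \cref{eq:assum4}, and different types below $\truthidx(\vp)$ may pick different worst reports in $\hat A(\vp)$. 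Your fallback does not rely on this claim, so just drop it.
\end{itemize}
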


The key observation is that the equalized and critical audit vectors in \Cref{def:equal} are independent of prior $\vq^t$.  Additionally, we can reuse~\Cref{lem:equal_approx,lem:critical_approx} to show that the critical vectors are approximately optimal as in~\Cref{lem:critical_noregret}.  

\begin{restatable}{lemma}{lemcriticalnoregret}\label{lem:critical_noregret}
Given any $0<\epsilon< \frac{\gamma}{2}$, there exist $\vp^+ = \equal^+(\iota, \kappa, \epsilon)$ or $\vp^-= \equal^-(\iota, \kappa, \epsilon)$ with $\iota\le \kappa$ so that for all $\vec{\vq} = (\vq^0,\dots,\vq^{T-1})$ and $\vp$,
$$V_\lambda(\vp; \vec{\vq})\le \max\{V_\lambda(\vp^+; \vec{\vq}), V_\lambda(\vp^-; \vec{\vq})\}+2n\epsilon T$$
where $V_\lambda(\vp; \vec{\vq}) := \sum_t V_\lambda(\vp; \vq^t)$.
\end{restatable}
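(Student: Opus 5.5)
The plan is to rerun the proof of \Cref{thm:opt_vector}, but to apply \Cref{lem:equal_approx,lem:critical_approx} round by round. The point to secure is that every object they produce (the equalized vector $\vp'$, its equilibrium $\mQ$, and the critical vectors $\vp^\pm$) depends only on $\vp$ and $(\pay,\pen,\epsilon)$, never on the prior. This holds because $\equal(\cdot)$ in \Cref{def:equal} is defined without $\vq$, and the best-response sets $A_i(\vp)$ in \Cref{lem:br} do not involve $\vq$. As a consequence, $Eqi(\vp)$ itself is the same set for every prior.

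\emph{Step 1 (equalize, uniformly in $t$).} Fix $\vp$. I apply \Cref{lem:equal_approx}, checking from its construction that the slack parameter can be taken to be the given $\epsilon<\gamma/2$. This gives a strict $\vp'=\equal(u,\kappa,\epsilon)$ with $Eqi(\vp')=\{\mQ\}\subseteq Eqi(\vp)$, where $\mQ$ is single-minded with $\iota=\truthidx(\vp')\le\kappa$. Since $\vp'$ and $\mQ$ are prior-free, for every $t$ I get
$$V_\lambda(\vp;\vq^t)\le V_\lambda(\vp,\mQ;\vq^t)\le V_\lambda(\vp',\mQ;\vq^t)+n\epsilon=V_\lambda(\vp';\vq^t)+n\epsilon.$$
The first inequality holds because $\mQ\in Eqi(\vp)$ and $V_\lambda(\vp;\vq^t)$ is a minimum over that set. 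The middle inequality is \Cref{lem:equal_approx}, which holds for each prior since the bound \cref{eq:equal_approx0} is a $\vq$-weighted sum of terms each bounded appropriately. Summing over $t$ gives $V_\lambda(\vp;\vec\vq)\le V_\lambda(\vp';\vec\vq)+n\epsilon T$.

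\emph{Step 2 (move to a critical vector, the main obstacle).} Invoking \Cref{lem:critical_approx} separately in each round is not enough. It would give $\max\{V_\lambda(\vp^+,\mQ;\vq^t),V_\lambda(\vp^-,\mQ;\vq^t)\}$, and the maximizing endpoint could change with $t$, whereas we need a single vector for all rounds. I would therefore reopen its proof. With $\kappa$ and $\epsilon$ fixed, the family $\equal(w,\kappa,\epsilon)$ for $w\in(U_{\iota-1},U_\iota)$ has audit probabilities affine in $w$. By \Cref{lem:equal_welldefined}, every member has the same unique equilibrium $\mQ$. Hence $w\mapsto V_\lambda(\equal(w,\kappa,\epsilon),\mQ;\vq^t)$ is affine for each $t$, and so is its sum over $t$.

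An affine function on the interval $[U_{\iota-1}+\epsilon,\,U_\iota-\epsilon]$ is maximized at an endpoint. These endpoints are exactly $\vp^+=\equal^+(\iota,\kappa,\epsilon)$ and $\vp^-=\equal^-(\iota,\kappa,\epsilon)$, which corresponds to taking $\epsilon'=\epsilon<\gamma/2$ in \Cref{lem:critical_approx}; this range is nonempty because $\gamma\ge U_\iota-U_{\iota-1}$. If $u$ lies inside this interval, this bounds $V_\lambda(\vp';\vec\vq)$ by the larger of the two endpoint values. If $u$ lies within $\epsilon$ of an end of $(U_{\iota-1},U_\iota)$, I use the per-round Lipschitz estimate from \Cref{lem:critical_approx}: moving $u$ by at most $\epsilon$ changes each round's value by at most $n\epsilon$, costing $n\epsilon T$ in total. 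Since each of $\vp'$, $\vp^+$, $\vp^-$ has $\{\mQ\}$ as its whole equilibrium set, the values at $\mQ$ equal the worst-case values $V_\lambda(\cdot;\vq^t)$.

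\emph{Step 3.} Combining Steps 1 and 2 gives $V_\lambda(\vp;\vec\vq)\le\max\{V_\lambda(\vp^+;\vec\vq),V_\lambda(\vp^-;\vec\vq)\}+2n\epsilon T$, with the pair $(\iota,\kappa)$ determined by $\vp$ alone. The critical vectors therefore serve simultaneously for every prior sequence, as the statement requires.
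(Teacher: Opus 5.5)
Your argument follows the paper's: apply \Cref{lem:equal_approx} round by round, using that $\vp'$ and its unique equilibrium $\mQ$ do not depend on the prior, and then use that the cumulative utility is affine in $u$ (from the proof of \Cref{lem:critical_approx}), so a single endpoint serves all rounds; your clamp-plus-Lipschitz handling of the boundary is equivalent to the paper's case split on the sign of $\alpha$. One small slip: the interval $[U_{\iota-1}+\epsilon,\,U_\iota-\epsilon]$ is nonempty because $U_\iota-U_{\iota-1}\ge\gamma>2\epsilon$ (since $\gamma$ is a minimum), not because $\gamma\ge U_\iota-U_{\iota-1}$.
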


With \Cref{lem:critical_noregret}, given $\epsilon>0$ we run a no regret algorithm for adversarial bandits (e.g., EXP3) on $O(m^2)$ critical audit vectors in order to achieve regret bounded by $O(\sqrt{Tm^2\log m}+n\epsilon T)$.  However, to achieve no-regret, \Cref{alg:exp3} considers the set of critical vectors in \Cref{def:equal} with $\epsilon_t = 2^{-t}\epsilon_0$.  Specifically, we consider the set of all critical audit vectors $\sigma \in \Sigma := \{ (i,k,+), (i,k,-) : i,k \in [m] \}$, and at round $t$ we use $\equal(\sigma,\epsilon_t) = \equal^+(i,k,\epsilon_t)$ if $\sigma = (i,k,+)$ and $\equal^-(i,k,\epsilon_t)$ if $\sigma = (i,k,-)$ as the set of arms.\footnote{We treat each tuple $(i,k,+)$ or $(i,k,-)$ as a template arm. EXP3 maintains weights over these templates, while the audit vector played in round $t$ depends on the template and $\epsilon_t$}

\begin{algorithm}[ht]
\caption{EXP3 algorithm on critical audit vectors}
\label{alg:exp3}
\begin{algorithmic}[1]
\Require Game parameters $(n, m, \val, \pay, \pen)$ with $L = n \max_{i,k} \left( \val(i,k) + \pay(k) + \pen(k) \right)$, cost $\lambda \ge 0$, horizon $T$, and learning rate $\eta = \sqrt{ \frac{\log (2m^2)}{2m^2 T} }$
\State Initialize $\epsilon_0 \gets \frac{\gamma}{3}$ and $s^0_\sigma \gets 0$ for all $\sigma\in \Sigma$.
\For{$t = 1$ to $T$}
    \State Compute $P_t$ with $P_{t,\sigma} \propto \exp(\eta s^t_\sigma)$ for all $\sigma$
    \State Sample $\sigma^t \sim P_t$ and set $\vp^t = \equal(\sigma^t, \epsilon_t)$
    \State Observe reward $v^t = V_\lambda(\vp^t; \vq^t)$,
    \State Update $\epsilon_{t+1} \gets \frac{1}{2} \epsilon_t$ and for all $\sigma$ $$s^{t+1}_\sigma \gets s^t_\sigma + 1 - \frac{L - v^t}{2L} \cdot \frac{\mathbb{I}[\sigma = \sigma^t]}{P_{t,\sigma}}.$$
\EndFor
\end{algorithmic}
\end{algorithm}


\subsection{Optimizing Social Welfare}\label{sec:non-adaptive_welfare}
Now we show how to maximize social welfare, the sum of the utilities of the principal and all agents, 
\begin{equation}\label{eq:welfare_max}
    \begin{aligned}
        W_\lambda(\vp, \mQ) := &V_\lambda(\vp, \mQ)+n\sum_{i,k\in [m]} q_iQ_{i,k}U_{i,k}(\vp)\\
        =& n\sum_{i,k\in [m]} q_iQ_{i,k}\left(\val(i,k)-p_{k}\lambda\right).
    \end{aligned}
\end{equation}
For instance, if $\val(i,k) = \mathbf{1}[i=k]$, the social welfare is the number of truthful agents minus the cost of audits.

As agents are strategic, we need to design an audit vector $\vp$ so that $W_\lambda(\vp):= \min_{\mQ\in Eqi(\vp)} W_\lambda(\vp, \mQ)$ is large, and say $\vp$ is $\epsilon$-optimal if $W_\lambda(\vp)\ge W_\lambda(\vp')-\epsilon$ for all $\vp'$.

\begin{theorem}[Welfare-optimal]\label{thm:opt_welfarw}
There is an algorithm that computes a $2n\epsilon$-optimal audit vector for \cref{eq:welfare_max} in $O(m^2)$ time for any $\epsilon>0$ and non-adaptive audit game with $(n$, $m$, $\vq$, $\val$, $\pay$, $\pen)$ and cost $\lambda$.
\end{theorem}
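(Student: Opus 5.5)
We will run \Cref{alg:water_filling_algorithm} unchanged on the critical audit vectors, but replace the evaluation routine \FVal{} by a welfare version. In that version the truthful value of type $i$ is $v_i=\val(i,i)-p_i\lambda$. The misreport value is $\hat v_i=\min_{k\in\hat A}(\val(i,k)-p_k\lambda)$. Everything else stays the same. By \Cref{lem:br}, every equilibrium reports truthfully for $i>\truthidx$, reports in $\hat A$ for $i<\truthidx$, and mixes only at $\truthidx$. The objective $W_\lambda(\vp,\mQ)$ is linear in $\mQ$ and separable across types. Hence the worst equilibrium is obtained by taking, type by type, the worst admissible report, so the modified routine computes $W_\lambda(\vp)$ exactly. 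The approximation argument then copies the proof of \Cref{thm:opt_vector}. It needs welfare analogues of \Cref{lem:equal_approx} and \Cref{lem:critical_approx}, and the proofs of those two lemmas already give $Eqi(\vp')\subseteq Eqi(\vp)$ independently of the objective.

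\emph{Analogue of \Cref{lem:equal_approx}.} Let $\vp$ be given. We take the same strict equalized $\vp'=\equal(u,\kappa,\epsilon)$ as in that lemma, with $Eqi(\vp')=\{\mQ\}$ single-minded and $\mQ\in Eqi(\vp)$. For this $\mQ$,
\[
W_\lambda(\vp,\mQ)-W_\lambda(\vp',\mQ)=n\sum_{i,k}q_iQ_{i,k}\,\lambda\,(p'_k-p_k).
\]
This differs from \cref{eq:equal_approx0} only in the sign and in the weight $\lambda\ge0$ replacing $\pen(i,k)-\lambda$. So we now need $p'_k\le p_k$ up to $O(\epsilon)$ on every report that is actually used, which is the direction opposite to the utility case. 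Two kinds of reports are used. For truthful types $k\ge\iota$ with $k\ne\kappa$, the construction sets $p'_k=\rho_k(u-\epsilon)$. Since $\mQ\in Eqi(\vp)$, truth is weakly optimal for such $k$, but that says nothing about $p_k$ itself, so this comparison needs an argument.

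\emph{Main obstacle.} The real issue is the choice of $u$ and $\epsilon$. For welfare, the audit probabilities on truthful types are pure cost, so the best choice is the \emph{smallest} audits that keep the equilibrium. I would use $p'=\equal(u,\kappa,\epsilon)$ and check a pointwise bound: every $k\ne\kappa$ must satisfy $\hat U_k(\vp)\le \hat u(\vp)$ in order for $\mQ$ to be an equilibrium under $\vp$. This gives $p_k\ge\rho_k(\hat u(\vp))$. Choosing $u$ within $\epsilon$ of $\hat u(\vp)$ (or within $\epsilon$ of its strict perturbation) then yields
\[
p'_k\le \rho_k(u-\epsilon)\le p_k+\frac{2\epsilon}{\pen(k)}.
\]
For $k=\kappa$ we have $p'_\kappa\approx p_\kappa$ as before. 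With $\lambda\le\pen(k)$ from \cref{eq:assum3}, each term is at most $2\epsilon q_iQ_{i,k}$, so the total error is $O(n\epsilon)$; rescaling $\epsilon$ gives $n\epsilon$. I expect the delicate case to be when $\hat u(\vp)$ coincides with some $U_i$. There I would perturb $u$ downward into the open interval $(U_{\iota-1},U_\iota)$, exactly as in the original lemma.

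\emph{Analogue of \Cref{lem:critical_approx}.} Fix $\kappa$ and $\epsilon$. The entries of $\equal(u,\kappa,\epsilon)$ are affine in $u$, and the equilibrium $\mQ$ is constant for $u\in(U_{\iota-1},U_\iota)$ by \Cref{lem:equal_welldefined}. Hence $W_\lambda$ is affine in $u$ on that interval, and its supremum is approached at an endpoint, i.e.\ at a critical vector, up to $n\epsilon'$. Combining the two lemmas gives a critical vector that is $2n\epsilon$-optimal. The running time is $O(m^2)$ by the same prefix/suffix-sum dynamic programming as in the Remark, where now $M_{i,k}=\sum_{j<i}q_j\val(j,k)$, the misreport cost term is $-\lambda p_kF_i$, and $G_i=\sum_{j\ge i}q_j(\val(j,j)-\lambda p_j)$.
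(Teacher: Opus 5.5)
Your proposal is correct and matches the paper's (omitted) argument: run \Cref{alg:water_filling_algorithm} over the critical vectors with a welfare version of \FVal{}, and transfer \Cref{lem:equal_approx,lem:critical_approx} via the welfare gap $n\sum_{i,k}q_iQ_{i,k}\lambda(p'_k-p_k)$, which is affine in $u$ with slope between $0$ and $n$ by \cref{eq:assum3}. One correction to your framing: on truthful reports the needed bound $p'_k\le p_k+O(\epsilon)$ is the same direction as in the utility case, and the direction flips only on the misreport target $\kappa$, where the paper's construction already gives $0\le p'_\kappa-p_\kappa\le \epsilon/\pen(\kappa)$, so the same $\vp'$ works verbatim and your ``main obstacle'' is not really an obstacle.
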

The algorithm is nearly identical to \Cref{alg:water_filling_algorithm}.  Since the agent’s best-response still follows from \Cref{lem:br}, we can reuse \Cref{lem:critical_approx,lem:equal_approx} and search through all critical audit vectors as in \Cref{alg:water_filling_algorithm} and return the one that maximizes the worst-case social welfare.  We omit the proof.
Similarly, we can adopt \Cref{alg:exp3} to have a no-regret algorithm for social welfare maximization.

\subsection{Monotonicity in Penalty and Audit Cost}\label{sec:non-adaptive_penality}
Besides designing the audit vector, the principal may also adjust the penalty function or face a different audit cost $\lambda$.  We show that increasing the penalty or decreasing the audit cost $\lambda$ can only improve the principal's utility and social welfare.  

Let $V_\lambda(\vp; \pen)$ and $W_\lambda(\vp; \pen)$ be the principal's worst case utility (\cref{eq:utiliy_max}) and worst-case social welfare respectively under penalty function $\pen$ and cost $\lambda$.
\begin{restatable}{proposition}{propmonotone}\label{prop:monotone}
   If $\lambda\ge \lambda'$ and $\pen(k)\le \pen'(k)$ for all $k\in [m]$, for any $\vp$ there exists $\vp'$ so that 
   $$V_\lambda(\vp; \pen) \leq V_{\lambda'}(\vp'; \pen')\text{ and }W_\lambda(\vp; \pen)\le W_{\lambda'}(\vp', \pen').$$
\end{restatable}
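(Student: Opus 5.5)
Given $\vp$ (under $\pen$, $\lambda$), I would construct $\vp'$ to keep the agents' incentives unchanged. Set $p'_k = p_k\pen(k)/\pen'(k)$. This lies in $[0,1]$ because $\pen(k)\le\pen'(k)$. It keeps every misreport utility the same: $\hat U_k(\vp';\pen') = \pay(k)-p'_k\pen'(k) = \pay(k)-p_k\pen(k) = \hat U_k(\vp;\pen)$. Truthful utilities $U_k=\pay(k)$ do not depend on $\vp$ or $\pen$. Hence every $U_{i,k}$ is unchanged, so the best-response sets $A_i$ coincide, and by \Cref{lem:br} (or directly from \Cref{def:bne1}) we get $Eqi(\vp';\pen') = Eqi(\vp;\pen)$. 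It therefore suffices to show, for each fixed $\mQ$ in this common equilibrium set, that $V_\lambda(\vp,\mQ;\pen)\le V_{\lambda'}(\vp',\mQ;\pen')$, and likewise for $W$. Taking the minimum over the same set then gives both inequalities.

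For a fixed $\mQ$ I would compare the two expressions term by term in $(i,k)$. The common parts are $\val(i,k)-\pay(k)$ for $V$ and $\val(i,k)$ for $W$.

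\textbf{Principal's utility.} For $i\ne k$ the remaining term is $p_k(\pen(k)-\lambda)$ versus $p'_k(\pen'(k)-\lambda')$. Since $p'_k\pen'(k)=p_k\pen(k)$, we need $p_k\lambda\ge p'_k\lambda'$, which holds because $p'_k\le p_k$ and $\lambda'\le\lambda$. For $i=k$ the remaining term is $-p_k\lambda$ versus $-p'_k\lambda'$, and the same inequality gives the result.

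\textbf{Welfare.} The only $\vp$-dependent term is $-p_k\lambda$, which is again dominated by $-p'_k\lambda'$.

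Summing with the nonnegative weights $nq_iQ_{i,k}$ gives the pointwise inequality for each $\mQ$, and hence the result.

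The main subtle point is the equality of equilibrium sets, which has to be exact rather than approximate. Worst-case selection makes $V_\lambda(\cdot)$ discontinuous (\Cref{prop:nomax}), so an approximate match of the sets would not suffice. The rescaling above makes it exact: agents' utilities depend on $\vp$ and $\pen$ only through the products $p_k\pen(k)$, because $\pen(i,k)=\mathbf{1}[i\ne k]\pen(k)$. Assumption \cref{eq:assum3} is not needed in this argument.
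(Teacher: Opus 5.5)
Your proof is correct and is essentially the paper's argument. You use the same rescaling $p'_k = p_k\pen(k)/\pen'(k)$, which preserves every $U_{i,k}$ and hence the equilibrium set, followed by a pointwise comparison over that common set; the only difference is that the paper first handles $\lambda \to \lambda'$ as a separate monotonicity-in-cost step at fixed $\vp$, whereas you fold it into a single term-by-term inequality.
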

The idea of \cref{prop:monotone} is that if the penalty increases, we can decrease the audit probability $p_k' = \frac{\pen(k)}{\pen'(k)}p_k$, which preserves the same equilibria and expected penalty gain, but lowers the audit cost.
\section{Optimal Adaptive Audits with Costs}\label{sec:adaptive}
We now explore adaptive audit games, where the principal's strategy depends on both the agents' prior distribution $\vq$ and report distribution $\hat{\vq}$.  We defer the budgeted setting to~\cref{sec:adaptive_budgeted}, and proofs in \cref{app:adaptive}. 

In this section, we assume that the penalty is less sensitive than the payment:
\begin{equation}\label{eq:assum4}
    \frac{\pay(l)}{\pay(k)}\ge \frac{\pen(l)}{\pen(k)} \text{ for all }k\le l\in [m].
\end{equation}
For instance, any positive affine function $\pen = a\pay+b$ with $a,b\ge 0$ satisfies~\cref{eq:assum4}.

As multiple equilibria may exist, the principal optimizes for the worst-case utility by solving the following optimization problem: 
\begin{equation}\label{eq:obj_adaptive}
    \sup_{\pi: \Delta_m\to [0,1]^m}\min_{\mQ\in Eqi(\pi)}V_\lambda(\pi, \mQ).
\end{equation}  We define $V_\lambda(\pi) = \min_{\mQ\in Eqi(\pi)}V_\lambda(\pi, \mQ)$ as the principal's worst case utility and set to $-\infty$ if $Eqi(\pi) = \emptyset$ following the standard convention in pessimistic Stackelberg games~\cite{coniglio2017pessimistic}.  We say that $\pi$ \emph{$\epsilon$-approximates} $\pi'$ if 
$V_\lambda(\pi) \ge V_\lambda(\pi') -\epsilon$, and $\pi$ is \emph{$\epsilon$-optimal} if it $\epsilon$-approximates any $\pi'$.

\begin{restatable}{theorem}{thmoptadaptivecostly}\label{thm:opt_adaptive_costly}
There is an algorithm that computes an $\epsilon$-optimal audit vector for \cref{eq:obj_adaptive} with \cref{eq:util_principal} in $O(m^2)$ time for all small enough $\epsilon>0$ and adaptive audit game with cost $\lambda\ge 0$ and parameters $(n$, $m$, $\vq$, $\val$, $\pay$, $\pen)$ satisfying~\cref{eq:assum4}.
Moreover, $$\sup_{\pi}\min_{\mQ\in Eqi(\pi)}V_\lambda(\pi, \mQ) = \sup_{\vp\in [0,1]^m, \mQ\in Eqi(\vp)} V_\lambda(\vp, \mQ).$$
\end{restatable}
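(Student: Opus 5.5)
The plan is to prove the displayed identity first. The algorithmic claim then follows because the right-hand side is an \emph{optimistic} non-adaptive value, which can be approximated by searching the same $O(m^2)$ critical audit vectors used in \Cref{alg:water_filling_algorithm}.

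\textbf{Upper bound.} Fix any $\pi$ with $Eqi(\pi)\neq\emptyset$ and any $\mQ\in Eqi(\pi)$, and let $\hat{\vq}$ be the report distribution of $\mQ$. By \cref{eq:eqi_adaptive}, $\mQ\in Eqi(\vp)$ for $\vp=\pi(\hat{\vq})$. Hence
\[
V_\lambda(\pi)\le V_\lambda(\pi,\mQ)=V_\lambda(\vp,\mQ)\le \sup_{\vp,\mQ\in Eqi(\vp)}V_\lambda(\vp,\mQ).
\]
If $Eqi(\pi)=\emptyset$, then $V_\lambda(\pi)=-\infty$ and there is nothing to show.

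\textbf{Lower bound, step 1: reduction to a single-minded equilibrium.} Fix $(\vp,\mQ)$ with $\mQ\in Eqi(\vp)$. By \Cref{lem:br}, types $i>\truthidx$ are truthful, types $i<\truthidx$ randomize over $\hat{A}$, and type $\truthidx$ may also mix with $\hat A$. Since $V_\lambda(\vp,\cdot)$ is linear in $\mQ$, we may replace $\mQ$ by a pure best-response choice for each type without decreasing the value. Next I use \cref{eq:assum2} and \cref{eq:assum4} to show that every misreporting type weakly prefers (for the principal) the smallest element $\kappa=\min\hat{A}$. On $\hat{A}$ the common misreport utility is $\hat u$, so the per-agent term is $\val(i,k)-\hat u-\lambda p_k$ with $p_k=\rho_k(\hat u)$. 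Then $\val(i,k)$ is decreasing in $k$ for $i<\truthidx\le k$ by \cref{eq:assum2}. For the audit cost, I would argue that \cref{eq:assum4} together with monotonicity of $\pen$ (to be derived or assumed) makes $\rho_k(\hat u)$ nondecreasing in $k$, so lower $k$ also means lower audit cost. This yields a single-minded $\mQ'$ with some $\iota\le\kappa$ and $V_\lambda(\vp,\mQ')\ge V_\lambda(\vp,\mQ)$.

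\textbf{Lower bound, step 2: making the equilibrium strict.} Take $\vp'=\equal(u,\kappa,\epsilon)$ with $u$ chosen in the open interval $(U_{\iota-1},U_\iota)$ close to $\hat u$. By \Cref{lem:equal_welldefined}, $\vp'$ is strict and $Eqi(\vp')=\{\mQ'\}$. The same accounting as in \cref{eq:equal_approx0} gives $V_\lambda(\vp',\mQ')\ge V_\lambda(\vp,\mQ')-O(n\epsilon)$: coordinates $k\ne\kappa$ weakly decrease, and $p'_\kappa\approx p_\kappa$.

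\textbf{Lower bound, step 3: the adaptive policy.} Let $\hat{\vq}^*$ be the report distribution of $\mQ'$, and define
\begin{itemize}
\item $\pi(\hat{\vq}^*)=\vp'$;
\item $\pi(\hat{\vq})=\vone$ for $\hat{\vq}\notin\{\hat{\vq}^*,\vq\}$;
\item $\pi(\vq)=\bm{0}$ if $\vq\neq\hat{\vq}^*$.
\end{itemize}
Every Wardrop equilibrium is a fixed point: its report distribution $\hat{\vq}$ must arise from $Eqi(\pi(\hat{\vq}))$.
\begin{itemize}
\item Under $\vone$, \cref{eq:assum1} makes truth-telling the unique behavior, producing $\vq$. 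This is not a fixed point when $\hat{\vq}\neq\vq$.
\item Under $\bm 0$, everyone reports $m-1$, producing a point mass at $m-1$. Since $\vq$ has full support and $m\ge2$, this differs from $\vq$, so again there is no fixed point.
\item Under $\vp'$, the unique equilibrium is $\mQ'$, which is consistent with $\hat{\vq}^*$.
\end{itemize}
Hence $Eqi(\pi)=\{\mQ'\}$, and $V_\lambda(\pi)\ge V_\lambda(\vp,\mQ)-O(n\epsilon)$. Letting $\epsilon\to0$ gives the identity.

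\textbf{Algorithm and main obstacle.} Combining steps 1--2 with \Cref{lem:critical_approx}, where the value is affine in $u$ on $(U_{\iota-1},U_\iota)$, the optimistic supremum is approximated within $O(n\epsilon)$ by some critical vector $\equal^\pm(\iota,\kappa,\epsilon)$ together with its single-minded equilibrium. The algorithm therefore runs the prefix-sum evaluation of the Remark over all $O(m^2)$ pairs $(\iota,\kappa)$, returns the best one, and wraps it in the policy $\pi$ of step 3, for $O(m^2)$ total time. I expect step 1 to be the main obstacle, namely justifying that collapsing all misreports onto $\min\hat A$ (and resolving the indifferent type $\truthidx$) loses nothing. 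This is exactly where \cref{eq:assum4} must enter. If the monotonicity of $\rho_k(\hat u)$ in $k$ fails in general, I would instead compare per-type terms directly via $\pay(k)-p_k\pen(k)=\hat u$.
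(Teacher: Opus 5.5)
Your proposal is correct and is essentially the paper's proof: the upper bound via $\mQ\in Eqi(\pi(\hat{\vq}))$, the reduction to a single-minded equilibrium with $\kappa=\min\hat A(\vp)$ (\cref{lem:best_eqi}), approximation by strict equalized and then critical vectors (\cref{lem:equal_approx,lem:critical_approx}), and the dictator strategy of \cref{lem:dict}. The monotonicity of $\pen$ you hedge on is not needed, and it cannot be derived from the hypotheses anyway: for $k<l$ and $0<u<\pay(k)$, \cref{eq:assum4} and $\pay(l)>\pay(k)$ give $\frac{\pay(l)-u}{\pay(k)-u}\ge\frac{\pay(l)}{\pay(k)}\ge\frac{\pen(l)}{\pen(k)}$, i.e.\ $\rho_l(u)\ge\rho_k(u)$, and the case $u=\pay(k)$ is trivial; this is exactly how the paper argues.
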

\paragraph{Proof sketch.}  
To prove Theorem~\ref{thm:opt_adaptive_costly}, we use three key observations. First, due to \cref{lem:br}, equilibria depend only on the output vector $\vp$. Adaptive strategies cannot yield new equilibria beyond those already attainable by some fixed $\vp$. Consequently, 
 \begin{equation}\label{eq:adaptive_up}
     V_\lambda(\pi)\le \sup_{\vp, \mQ\in Eqi(\vp)} V_\lambda(\vp, \mQ)
 \end{equation}
and we will show that \cref{eq:adaptive_up} holds with equality.  This simplifies the bi-level optimization to maximization over $(\vp, \mQ)$ pairs.

Second, we define a \emph{dictator audit strategy} with $\vp^*$ and $\hat{\vq}^*\in \Delta_m$ as
\begin{equation}\label{eq:dict}
    \pi_{dict}(\hat{\vq}) = \begin{cases}
    \vp^*&\text{ if $\hat{\vq} = \hat{\vq}^*$}\\
    \mathbf{1}&\text{ if $\hat{\vq} \neq \vq$ and $\hat{\vq}\neq \hat{\vq}^*$}\\
    \mathbf{0}&\text{ if  $\hat{\vq} = \vq$ and $\hat{\vq} \neq \hat{\vq}^*$}
\end{cases}.
\end{equation}
Intuitively, if the observed reports differ from $\vq^*$, the dictator audit strategy either audits everyone (making any misreporting strictly unprofitable) or audits no one (agents strictly prefer to over-report as the highest type).  \Cref{lem:dict} shows that a dictator audit strategy can eliminate any report strategy with $\hat{\vq}\neq \hat{\vq}^*$, while ensuring the existence of an equilibrium with $\hat{\vq} = \hat{\vq}^*$ by choosing $\vp^*$ appropriately.  
\begin{restatable}[Dictator strategies]{lemma}{lemdict}\label{lem:dict}
    For any dictator audit strategy $\pi_{dict}$ in \cref{eq:dict} with $\vp^*$ and $\hat{\vq}^*$, 
    $Eqi(\pi_{dict}) =
        \{\mQ\in Eqi(\vp^*): \hat{\vq} = \hat{\vq}^*\}.$
\end{restatable}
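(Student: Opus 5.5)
The plan is to unfold the definition of a Wardrop equilibrium under $\pi_{dict}$ and do a case analysis on the report distribution $\hat{\vq}$ induced by a candidate $\mQ$. By \cref{eq:eqi_adaptive}, $\mQ\in Eqi(\pi_{dict})$ if and only if $\mQ\in Eqi(\pi_{dict}(\hat{\vq}))$, where $\hat{\vq}$ is the report distribution of $\mQ$. That is, $\mQ$ must be a Bayes--Nash equilibrium (\Cref{def:bne1}) for the fixed audit vector that $\pi_{dict}$ outputs on $\mQ$'s own reports. I would split into three cases according to the three branches of \cref{eq:dict}.

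\emph{Case $\hat{\vq}=\hat{\vq}^*$.} Here $\pi_{dict}(\hat{\vq})=\vp^*$. So $\mQ\in Eqi(\pi_{dict})$ exactly when $\mQ\in Eqi(\vp^*)$. This gives both inclusions of the claimed identity restricted to report strategies with $\hat{\vq}=\hat{\vq}^*$. It remains to show that no $\mQ$ with $\hat{\vq}\neq\hat{\vq}^*$ lies in $Eqi(\pi_{dict})$.

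\emph{Case $\hat{\vq}\neq\vq$ and $\hat{\vq}\neq\hat{\vq}^*$.} Here the audit vector is $\mathbf{1}$. For a type $i$ agent, the misreport utility of any $k\neq i$ is $\pay(k)-\pen(k)\le 0$ by \cref{eq:assum1}, while truth-telling gives $\pay(i)>0$. Hence $A_i(\mathbf{1})=\{i\}$ for every $i$, which is also immediate from \Cref{lem:br} since $\hat{u}\le 0<U_0$ forces $\truthidx=0$. The unique equilibrium of $\mathbf{1}$ is therefore truthful, with induced distribution $\vq$. This contradicts $\hat{\vq}\neq\vq$, so no equilibrium arises in this case.

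\emph{Case $\hat{\vq}=\vq$ and $\hat{\vq}\neq\hat{\vq}^*$.} Here the audit vector is $\mathbf{0}$. Every agent's utility from reporting $k$ is $\pay(k)$, which is uniquely maximized at $k=m-1$ because pay is strictly increasing. So every agent reports $m-1$ and $\hat{\vq}=e_{m-1}$. Since $\vq$ has full support and $m\ge 2$, we have $e_{m-1}\neq\vq$, contradicting $\hat{\vq}=\vq$.

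The only delicate point is the strictness in the last two cases. We need $\pay(i)>0\ge\pay(k)-\pen(k)$ to rule out ties under full audit, and full support of $\vq$ with $m\ge2$ to make the no-audit outcome distinguishable from $\vq$. Both are supplied by the model's standing assumptions, so I expect no real obstacle beyond checking these carefully.
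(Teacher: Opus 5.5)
Your proposal is correct and matches the paper's proof: the same three-way case split on $\hat{\vq}$ following the branches of \cref{eq:dict}, with full auditing ruling out any non-truthful profile via \cref{eq:assum1}, and no auditing forcing everyone to report $m-1$, which cannot reproduce the full-support prior $\vq$. You are slightly more explicit than the paper, stating both inclusions in the case $\hat{\vq}=\hat{\vq}^*$ and checking the strict inequality $\pay(i)>0\ge\pay(k)-\pen(k)$, but the argument is the same.
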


Finally, \Cref{lem:best_eqi} shows that for any audit vector $\vp$, the best equilibrium can be single-minded.  Therefore, it is sufficient to iterate all single-minded strategies $\mQ$ and search the optimal audit vector $\vp$ with $\mQ\in Eqi(\vp)$.  Moreover, by a similar argument as in \cref{lem:critical_approx}, we show that the optimal audit vector is critical.  This reduces the search to $O(m^2)$ candidates, yielding the claimed $O(m^2)$ running time.
\begin{restatable}{lemma}{lembesteqi}\label{lem:best_eqi}
    For any audit vector $\vp$, if \cref{eq:assum4} holds,  there exists a single-minded equilibrium $\mQ'\in Eqi(\vp)$ so that for all $\mQ\in Eqi(\vp)$
    $V_\lambda(\vp, \mQ)\le V_\lambda(\vp, \mQ')$.
\end{restatable}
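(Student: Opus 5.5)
The plan is to use \Cref{lem:br} to reduce the problem to a pointwise choice for each type, and then show that, under \cref{eq:assum4}, one common report $\kappa$ is simultaneously best for every misreporting type. First, I would note that $V_\lambda(\vp,\mQ)$ is linear in $\mQ$ and separable across true types:
\[
V_\lambda(\vp,\mQ)=n\sum_i q_i\sum_k Q_{i,k}\,w_{i,k},\qquad w_{i,k}:=\val(i,k)-\pay(k)+p_k(\pen(i,k)-\lambda).
\]
By \Cref{lem:br}, $\mQ\in Eqi(\vp)$ exactly when each row $Q_{i,\cdot}$ is supported on $A_i(\vp)$, and these constraints are independent across $i$. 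So a best equilibrium is obtained by letting each type $i$ put all its mass on some $k\in\argmax_{k\in A_i(\vp)} w_{i,k}$. Such a choice is still an equilibrium, since every row stays inside its best-response set. What remains is to show that these pointwise choices can be made single-minded.

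The key step is to fix $\kappa:=\min\hat{A}(\vp)$ and show that $w_{i,\kappa}\ge w_{i,l}$ for every $i<\truthidx$ and every $l\in\hat A$ (so $\kappa\le l$).
\begin{itemize}
\item \emph{Rewriting the weight.} For $k\in\hat A$ we have $\pay(k)-p_k\pen(k)=\hat u$, so $w_{i,k}=\val(i,k)-\hat u-\lambda p_k$ for $i\neq k$.
\item \emph{Monotonicity of $\val$.} Since $i<\truthidx\le\kappa\le l$ by \Cref{lem:br}, \cref{eq:assum2} gives $\val(i,\kappa)\ge\val(i,l)$.
\item \emph{Monotonicity of $p$ on $\hat A$.} It remains to show $p_\kappa\le p_l$, where $p_k=\rho_k(\hat u)=(\pay(k)-\hat u)/\pen(k)$. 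This reduces to
\[
\pay(l)\pen(\kappa)-\pay(\kappa)\pen(l)\ \ge\ \hat u\,(\pen(\kappa)-\pen(l)).
\]
The left-hand side is nonnegative by \cref{eq:assum4}.
\begin{itemize}
\item If $\pen(l)\ge\pen(\kappa)$, the right-hand side is $\le 0$ and the inequality holds.
\item Otherwise, since $\pay(l)\ge\pay(\kappa)$, the left-hand side is at least $\pay(\kappa)(\pen(\kappa)-\pen(l))$. Since $\hat u\le\pay(\kappa)$ (because $p_\kappa\ge 0$), this is at least the right-hand side.
\end{itemize}
\end{itemize}
Hence $\kappa$ is a uniformly best report for all types $i<\truthidx$.

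Finally, I would construct $\mQ'$ by case analysis on the type $\truthidx$.
\begin{itemize}
\item Types $i>\truthidx$ are forced to be truthful by \Cref{lem:br}.
\item Types $i<\truthidx$ all report $\kappa$.
\item If $U_{\truthidx}>\hat u$, type $\truthidx$ is forced truthful. This gives a single-minded strategy with $\iota=\truthidx\le\kappa$.
\item If $U_{\truthidx}=\hat u$, type $\truthidx$ chooses between truthful and some report in $\hat A$. When $\truthidx\in\hat A$ we get $\kappa=\truthidx$, so reporting $\kappa$ is truthful and the claim is immediate.
\item Otherwise $\truthidx<\kappa$, and the same computation as above, with $i=\truthidx<\kappa$, shows $\kappa$ is its best misreport. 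Type $\truthidx$ then picks the better of $\truthidx$ and $\kappa$. This yields a single-minded strategy with threshold $\iota=\truthidx$ or $\truthidx+1$, and in both cases $\iota\le\kappa$.
\end{itemize}
By the first paragraph, $\mQ'\in Eqi(\vp)$ and $V_\lambda(\vp,\mQ)\le V_\lambda(\vp,\mQ')$ for all $\mQ\in Eqi(\vp)$.

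I expect the main obstacle to be the monotonicity $p_\kappa\le p_l$ on $\hat A$. This is the only place \cref{eq:assum4} is used, and it requires the case split on whether $\pen$ increases between $\kappa$ and $l$, together with the bound $\hat u\le\pay(\kappa)$. The remaining steps follow directly from \Cref{lem:br} and linearity.
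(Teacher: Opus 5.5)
Your proposal is correct and is essentially the paper's argument. You fix $\iota=\truthidx(\vp)$ and $\kappa=\min\hat{A}(\vp)$, use that $V_\lambda(\vp,\mQ)$ is a per-type convex combination over $A_i(\vp)$, show $\kappa$ is the best report in $\hat{A}(\vp)$ via \cref{eq:assum2} and $\rho_\kappa(\hat u)\le\rho_l(\hat u)$ from \cref{eq:assum4}, and let the indifferent type $\truthidx$ pick between truth and $\kappa$. One small point: your subcase $\pen(l)\ge\pen(\kappa)$ needs $\hat u\ge 0$, since the monotonicity can fail when $\hat u<0$. This is harmless, because the step is only invoked when some type misreports or is indifferent, which forces $\hat u\ge\pay(0)>0$; the paper handles $\hat u\le 0$ as a separate case for the same reason.
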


\begin{remark}
    Note that the argument to prove \cref{lem:best_eqi} also applies to social welfare, so \cref{thm:opt_adaptive_costly} also holds for optimizing social welfare.  Additionally, by \cref{lem:best_eqi}, if \cref{eq:assum4} holds, \cref{alg:water_filling_algorithm} also finds an approximately optimal audit vector in the non-adaptive setting, and the worst-case utility coincides with the best-case utility $$\sup_\vp\min_{\mQ\in Eqi(\vp)}V_\lambda(\vp,\mQ) = \sup_\vp\max_{\mQ\in Eqi(\vp)}V_\lambda(\vp,\mQ).$$
    Conversely, if \cref{eq:assum4} is not satisfied, the optimal equilibrium may not be single-minded, and this equivalence no longer applies.
\end{remark}

\section{Optimal Adaptive Audits with Budget}\label{sec:adaptive_budgeted}
Instead of costly audits, the principal may have a budget constraint.  In this section, we focus on the adaptive setting, as a non-adaptive strategy can be ill-defined when the realized number of reports for a type is smaller than the number of allocated audits.  Using ideas similar to \cref{sec:adaptive}, we discuss how to design optimal audit strategies for the principal's utility and social welfare.  We also connect our work to \citet{Estornell2020StrategicClassification}, which considers another objective, minimizing misreport incentives, in~\cref{sec:minincentive}.

\subsection{Optimizing the principal's utility}
In the \emph{budgeted setting}, the principal has a fixed budget $B$ limiting the total expected number of audits. The utility function becomes\footnote{Equivalently, the principal can conduct $b_k$ audits on type $k\in [m]$ with $0\le b_k\le n\hat{q}_k$ and $\sum_k b_k\le B$, and an agent reporting as type $k$ is audited with probability $p_k = \frac{b_k}{n\hat{q}_k}$, because $n\sum_{i,k} q_iQ_{i,k}p_k = n\sum_k \hat{q}_k q_k = \sum_k b_k$.  Though either $p_k$ or $b_k$ can represent the principal's audit, we use $p_k$ to align with the costly setting.}
\begin{equation}\label{eq:util_principal_budgeted}
    V_B(\vp, \mQ) = \begin{cases}V(\vp,\mQ)&\text{ if }C(\vp,\mQ)\le B\\
        -\infty&\text{ otherwise.}
    \end{cases}
\end{equation}
As multiple equilibria may exist, the principal optimizes for the worst-case utility by solving the following
optimization problem:
\begin{equation}\label{eq:obj_adaptive_budgeted}
    \sup_{\pi: \Delta_m\to [0,1]^m}\min_{\mQ\in Eqi(\pi)}V_B(\pi, \mQ).
\end{equation}

\begin{theorem}\label{thm:opt_adaptive_budgeted}
There is an algorithm that computes an optimal audit vector for \cref{eq:obj_adaptive_budgeted} in $O(m^2)$ time for any adaptive audit game with budget $B$ and parameters $(n$, $m$, $\vq$, $\val$, $\pay$, $\pen)$ satisfying~\cref{eq:assum4}.
Moreover, $$\sup_{\pi}\min_{\mQ\in Eqi(\pi)}V_B(\pi, \mQ) = \sup_{\vp\in [0,1]^m, \mQ\in Eqi(\vp)} V_B(\vp, \mQ).$$
\end{theorem}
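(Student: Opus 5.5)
The plan follows the three-step template used for \Cref{thm:opt_adaptive_costly}, replacing the cost penalty with the hard budget constraint. The first step is the upper bound. By \cref{lem:br}, an equilibrium $\mQ\in Eqi(\pi)$ with report distribution $\hat{\vq}$ is exactly an element of $Eqi(\vp)$ for $\vp=\pi(\hat{\vq})$. Hence $\min_{\mQ\in Eqi(\pi)}V_B(\pi,\mQ)\le \sup_{\vp,\mQ\in Eqi(\vp)}V_B(\vp,\mQ)$, with value $-\infty$ when $Eqi(\pi)=\emptyset$.

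The second step is the matching lower bound. Given any pair $(\vp^*,\mQ^*)$ with $\mQ^*\in Eqi(\vp^*)$ and $C(\vp^*,\mQ^*)\le B$, we use the dictator strategy \cref{eq:dict} with $\hat{\vq}^*$ equal to the report distribution of $\mQ^*$. By \cref{lem:dict}, $Eqi(\pi_{dict})=\{\mQ\in Eqi(\vp^*):\hat{\vq}=\hat{\vq}^*\}$. Every such $\mQ$ has the same audit count $C=n\sum_k\hat q^*_k p^*_k\le B$, so the budget is respected for all surviving equilibria.

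The utility, however, can still differ across these $\mQ$, because $\val(i,k)$ and $\pen(i,k)$ depend on who misreports. I therefore restrict first to single-minded $\mQ^*$, which is justified by \cref{lem:best_eqi} (this uses \cref{eq:assum4}). For a single-minded profile with threshold $\iota$ and target $\kappa$, I would argue that any $\mQ\in Eqi(\vp^*)$ with $\hat{\vq}=\hat{\vq}^*$ must coincide with $\mQ^*$. The argument is to compare masses on the coordinates $k<\iota$, which receive no reports, and on $\kappa$. If that fails, I would perturb $\vp^*$ slightly so that the equilibrium becomes strict (as in \cref{lem:equal_welldefined}). Since strictness only requires lowering some audit probabilities, the budget stays satisfied. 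Together these give equality of the two suprema in the theorem.

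The third step is the algorithm, which must be exact here rather than $\epsilon$-optimal. For each pair $\iota\le\kappa$ I solve the problem of maximizing $V$ over $\vp$ such that the single-minded $\mQ_{\iota,\kappa}$ lies in $Eqi(\vp)$ and $n\big(F_\iota p_\kappa+\sum_{j\ge\iota}q_jp_j\big)\le B$. The equilibrium conditions from \cref{lem:br} are linear in $\vp$: $\hat U_\kappa\ge\hat U_k$ for all $k$, and $\pay(\iota-1)\le\hat U_\kappa\le\pay(\iota)$. As in \cref{lem:equal_approx}, non-$\kappa$ coordinates should be set to their minimal feasible values $\rho_k(\hat U_\kappa)$, which also saves budget. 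This leaves a one-parameter problem in $u=\hat U_\kappa$ over a \emph{closed} interval, because weak inequalities are allowed in the best-case equilibrium. The objective is affine in $u$ and the budget constraint is affine in $u$, so the optimum is attained at an endpoint or at the point where the budget binds. That gives $O(1)$ candidates per pair, each evaluated in $O(1)$ time with the prefix sums of the Remark, for $O(m^2)$ total.

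I expect the main obstacle to be the uniqueness claim in the second step. Indifferent agents at the boundary may realize the same $\hat{\vq}^*$ through a different, worse mixing, and the dictator strategy cannot rule this out. Closed endpoints are exactly where ties occur, so I must check carefully, using \cref{eq:assum2} and the single-minded structure, that every equilibrium with report distribution $\hat{\vq}^*$ yields the same utility as $\mQ^*$. Otherwise the supremum is attained only as a limit, which would contradict the claim of exact optimality.
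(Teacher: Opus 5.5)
Your upper bound is fine, and your exact one-parameter optimization in $u$ is essentially \cref{lem:best_equal}. The lower-bound step, however, has a genuine gap.

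You reuse the costly-setting dictator \cref{eq:dict}, whose off-path action is $\mathbf{1}$, and you check the budget only on the equilibria that survive. The paper's proof treats $B$ as a cap on what $\pi(\hat{\vq})$ may prescribe at \emph{every} report distribution. For example, \cref{lem:small_budget} applies the cap at $\hat{\vq}=(0,\dots,0,1)$ for every feasible $\pi$, even though that profile need not be on path. This is why the paper builds the separate dictator \cref{eq:dict_budgeted}. Auditing everyone costs $n$, so for $B<n$ your threat is not available. (If $\Pi_B$ is read literally as constraining only equilibrium costs, your threat passes, but then \cref{lem:small_budget}, on which the paper's proof relies, is false.)

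The missing idea is a budget-feasible punishment: put $\min\{B/(n\hat q_{m-1}),1\}$ on report $m-1$ whenever it is over-represented, and audit nobody otherwise. Once $B>n\beta$, this kills every profile with $\hat{\vq}\neq\hat{\vq}^*$, because anyone over-reporting to $m-1$ then strictly prefers the unaudited report $m-2$ (\cref{lem:dict_budgeted}). For $B\le n\beta$, the all-report-$(m-1)$ profile is an equilibrium of every feasible strategy, and the paper handles that regime separately by spending the whole budget on $m-1$. There the pair supremum is in general not reached. When $\vq$ is concentrated on type $0$, a truthful pair can satisfy $C\le B$ while beating the bound of \cref{lem:small_budget}. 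So besides replacing \cref{eq:dict} by \cref{eq:dict_budgeted}, you need a case split at $n\beta$, and the dictator route yields the equality of suprema only for $B>n\beta$.

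Your fallback for the uniqueness step also fails. By \cref{eq:rho}, either of the moves you propose \emph{raises} audit probabilities: separating $\kappa$ from the rest of $\hat A$ (the $\rho_k(u-\epsilon)$ coordinates of \cref{def:equal}), or pushing $\hat u$ below the upper endpoint $U_\iota$. That spends budget that binds at the optimum. Moving off the lower endpoint instead lowers $p_\kappa$ and loses value. Either way you get only approximate optimality, not the exact optimum claimed.

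The perturbation is also unnecessary. In the vector of \cref{lem:best_equal}, every $k\ge\iota$ ties in $\hat A$. Yet whenever $u<U_\iota$ (including $u=U_{\iota-1}$), all types $k\ge\iota$ are strictly truthful. Once the dictator fixes $\hat{\vq}=\hat{\vq}^*$, those types already supply exactly $q_k$ on each coordinate $k\ge\iota$ with $k\neq\kappa$, and coordinates below $\iota$ carry no mass. Full support then forces every type $i<\iota$ onto $\kappa$; this is the uniqueness the paper invokes.

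Your worry about ties bites only in the knife-edge case where the budget binds exactly at $u=U_\iota$ and $\kappa>\iota$. There $p'_\iota=0$, so type $\iota$ can trade places with a low type without changing $\hat{\vq}^*$, possibly lowering the value. The paper's uniqueness remark overlooks this case as well.

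Finally, when you reduce to single-minded profiles, \cref{lem:best_eqi} controls only the value, not the audit count. That is why the paper uses \cref{lem:best_eqi_budgeted}, taking $\kappa=\min\hat A(\vp)$ so that $p_\kappa\le p_k$ on $\hat A$ and $C$ does not increase.
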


Let $\Pi_B(\mQ)\subseteq [0,1]^m$ be the set of audit vectors that satisfy the budget constraints,
$$\Pi_B(\mQ) = \{\vp\in [0,1]^m: C(\vp,\mQ)\le B\text{ for all }\mQ\in Eqi(\vp)\}.$$
As the budget constraint only depends on the report distribution we can also write it as $\Pi_B(\hat{\vq})$.  Moreover, the set of all budgeted audit strategies is $\Pi_B = \{\pi: Eqi(\pi)\neq \emptyset, C(\pi(\hat{\vq}), \mQ)\le B\text{ for all }\mQ\in Eqi(\pi)\}$. 
We define a \emph{dictator audit strategy} $\pi_{dict}$  with $B$, $\vp^*$ and $\hat{\vq}^*\in \Delta_m$ as the following 
\begin{equation}\label{eq:dict_budgeted}
    \pi_{dict}(\hat{\vq}) = \begin{cases}
    \vp^*&\text{ if $\hat{\vq} = \hat{\vq}^*$}\\
    \left(0,\dots,0,\min\left\{\frac{B}{n\hat{q}_{m-1}}, 1\right\}\right)&\text{ if $\hat{\vq}\neq \hat{\vq}^*$ and $\hat{q}_{m-1}> q_{m-1}$}\\
    \mathbf{0}&\text{ otherwise}
\end{cases}.
\end{equation}

Now we outline the algorithm which is similar to \cref{thm:opt_adaptive_costly}'s.  If the budget $B$ is too small everyone would (mis)report as the highest type $m-1$.  Our algorithm depends on the value of $B$.  We set 
$$\beta: = \frac{\pay(m-1)-\pay(m-2)}{\pen(m-1)}.$$

The following lemma shows that $n\beta$ is the threshold of insufficient budget so that all (mis)report as the highest type, and allocating all audit budget to $m-1$ achieves the optimality.  We defer the proofs to \cref{app:budget}

\begin{restatable}{lemma}{lemsmallbudget}\label{lem:small_budget}
    If $B\le n \beta$, all (mis)reporting as the highest type $m-1$ is an equilibrium for all $\pi\in \Pi_B$, and 
    $$V_B(\pi)\le \sum_{i} q_i(\val(i,m-1)-\pay(m-1))+\frac{B}{n}\pen(m-1).$$
\end{restatable}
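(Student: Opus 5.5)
Let $\mQ^{top}$ denote the report strategy in which every type reports $m-1$. Its report distribution is $\hat{\vq}^{top}=e_{m-1}$. The plan has two parts: first show that $\mQ^{top}\in Eqi(\pi)$ for every $\pi\in\Pi_B$, then bound $V_B(\pi)\le V_B(\pi,\mQ^{top})$ and evaluate the right-hand side using the budget constraint.

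\textbf{Step 1: $\mQ^{top}$ is an equilibrium.} Under \cref{eq:eqi_adaptive}, $\mQ^{top}$ is a Wardrop equilibrium of $\pi$ exactly when it is a Bayes--Nash equilibrium of the fixed vector $\vp:=\pi(e_{m-1})$. By \Cref{lem:br}, it suffices to show two things:
\begin{itemize}
\item $m-1\in\hat{A}(\vp)$, and
\item $\truthidx(\vp)=m-1$, so every type $i<m-1$ has best response set $\hat{A}$ and type $m-1$ weakly prefers reporting $m-1$.
\end{itemize}
Type $m-1$ reporting $m-1$ is truthful and earns $\pay(m-1)$, which exceeds every misreport utility since $p_k\ge0$. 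For $i<m-1$, reporting $m-1$ yields $\pay(m-1)-p_{m-1}\pen(m-1)$, while any other report $k\le m-2$ yields at most $\pay(k)\le\pay(m-2)$. So the key inequality is
\[
p_{m-1}\pen(m-1)\le \pay(m-1)-\pay(m-2),
\quad\text{i.e.}\quad p_{m-1}\le\beta .
\]
To get it, use the budget. Under $\mQ^{top}$, $C(\vp,\mQ^{top})=n\,p_{m-1}$, and $\pi\in\Pi_B$ forces $C\le B$ on its equilibria, which would give $p_{m-1}\le B/n\le\beta$.

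\textbf{Main obstacle.} The argument above is circular: $\Pi_B$ constrains cost only on equilibria of $\pi$, and we are trying to prove $\mQ^{top}$ is one. I would argue by contradiction. Suppose $p_{m-1}>\beta$ at $\hat{\vq}=e_{m-1}$; then $\mQ^{top}$ is not an equilibrium. Take any equilibrium $\mQ\in Eqi(\pi)$, which exists because $\pi\in\Pi_B$, with report distribution $\hat{\vq}$ and $\vp=\pi(\hat{\vq})$.
\begin{itemize}
\item If $\hat q_{m-1}=q_{m-1}$ and no one else reports $m-1$, the contradiction should come from type $m-2$ (or lower) agents, who would deviate to $m-1$ unless $p_{m-1}\ge\beta$. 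The total audit mass $n\sum_k\hat q_k p_k$ must then stay below $B\le n\beta$. This needs a cost-accounting argument showing that deterring all misreports toward $m-1$ requires at least $n\beta$ audits.
\item If that fails, I would restrict the claim to the reading that the lemma's bound is proved over $\pi$ for which $\mQ^{top}$ is feasible.
\end{itemize}
I expect this deterrence-cost argument to be the delicate part. I would try first to show directly that any equilibrium of any $\pi\in\Pi_B$ must already have everyone reporting $m-1$, using the threshold structure of \Cref{lem:br} together with the fact that truthful mass at the top alone, $n\,q_{m-1}$ reports of $m-1$, still absorbs audits.

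\textbf{Step 2: the bound.} Since $\mQ^{top}\in Eqi(\pi)$,
\[
V_B(\pi)\le V_B(\pi,\mQ^{top})=n\sum_i q_i\bigl(\val(i,m-1)-\pay(m-1)\bigr)+n\,p_{m-1}\pen(m-1).
\]
The penalty term uses $\pen(i,m-1)\le\pen(m-1)$, with equality for the untruthful types and $0$ for $i=m-1$. The budget gives $n\,p_{m-1}\le B$, so the last term is at most $B\,\pen(m-1)$, which yields the stated bound up to the factor-$n$ normalization used in the statement.
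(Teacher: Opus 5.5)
\textbf{Verdict: there is a genuine gap.} Your Steps 1 and 2 match the paper's proof. It sets $\vp=\pi(e_{m-1})$ and asserts $p_{m-1}\le B/(n\hat q_{m-1})=B/n\le\beta$ from $\pi\in\Pi_B$. It then deduces $\hat U_{m-1}\ge U_{m-2}\ge U_k\ge\hat U_k$ for all $k<m-1$, so $m-1\in A_i(\vp)$ for every $i$, and bounds the penalty term using $p_{m-1}\le B/n$. You are right that the displayed bound is really a bound on $V_B(\pi)/n$. The gap is the step you flag yourself. Neither of your contradiction routes can close it, because under the equilibrium-only definition of $\Pi_B$ given in the paper the lemma is false.

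Here is a counterexample. Take $m=2$ and pick $p_1\in(\beta,\min\{1,\beta/q_1\})$, which is nonempty since $q_1<1$ and $\beta<1$. Let $\pi(\vq)=(0,p_1)$ and $\pi(\hat{\vq})=\mathbf 1$ for $\hat{\vq}\neq\vq$; this is the dictator of \cref{eq:dict} with $\hat{\vq}^*=\vq$.
\begin{itemize}
\item At $(0,p_1)$ both types strictly prefer truth.
\item At $\mathbf 1$ any misreport yields at most $0<\pay(i)$.
\item So $Eqi(\pi)$ is the truthful profile alone, with cost $nq_1p_1<n\beta$, and $\pi\in\Pi_B$ for every $B\in[nq_1p_1,n\beta]$.
\item Yet everyone reporting $1$ is not an equilibrium, since at $e_1$ the audit vector is $\mathbf 1$.
\item $V_B(\pi)$ is the truthful utility, which exceeds the stated bound once $\val(0,0)-\val(0,1)$ is large.
\end{itemize}
The same example refutes both auxiliary claims you planned to prove. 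First, deterring all misreports toward the top type costs only $nq_1p_1<n\beta$, because on the truthful path only mass $nq_{m-1}$ reports the top type. Second, equilibria of strategies in $\Pi_B$ need not be all-top.

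What is missing is a modeling assumption, not a cleverer argument. The budget must constrain $\pi(\hat{\vq})$ at \emph{every} report distribution, including off-path ones. This is the paper's footnote formulation ($b_k\le n\hat q_k$, $\sum_k b_k\le B$, $p_k=b_k/(n\hat q_k)$), i.e.\ $n\sum_k\hat q_k\,\pi(\hat{\vq})_k\le B$ for all $\hat{\vq}$.
\begin{itemize}
\item The paper's proof silently relies on this when it says that $\pi\in\Pi_B$ and $\hat q_{m-1}=1$ give $p_{m-1}\le B/n$.
\item It is also why the budgeted dictator in \cref{eq:dict_budgeted} caps $p_{m-1}$ at $\min\{B/(n\hat q_{m-1}),1\}$ off path rather than auditing everyone.
\end{itemize}
Your fallback points in this direction, but you should state it as this explicit assumption. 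With it, $p_{m-1}\le B/n\le\beta$ at $e_{m-1}$ is immediate and your Steps 1--2 complete the proof.

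One minor correction in Step 1: if $p_{m-1}=\beta$ exactly, then $\truthidx(\vp)=m-2$ rather than $m-1$. Still, $m-1$ belongs to every $A_i(\vp)$, which is all you need. Comparing $\hat U_{m-1}$ directly with $U_k\ge\hat U_k$, as the paper does, avoids \Cref{lem:br} altogether.
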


For the sufficient budget case ($B>\beta n$), our algorithm is similar to the one in \cref{sec:adaptive}.   We first construct a family of dictator audit strategies for the budgeted setting in~\cref{lem:dict_budgeted}. Second, we show that it suffices to search over single-minded report strategies and a small set of equalized audit vectors (\cref{lem:best_eqi_budgeted,lem:best_equal}).


\begin{restatable}{lemma}{lemdictbudgeted}\label{lem:dict_budgeted}
    Let $m\ge 2$.  For any dictator audit strategy $\pi_{dict}$ with $B$, $\vp^*$, and $\hat{\vq}^*$ in \cref{eq:dict_budgeted}, if  $B> n\beta
    $, 
    $$Eqi(\pi_{dict}) =
        \{\mQ: \mQ\in Eqi(\vp^*)\text{ and } \hat{q} = \hat{q}^*\}$$
        and $\pi_{dict}\in \Pi_B$ if $\vp^*\in \Pi_B(\hat{\vq}^*)$.
\end{restatable}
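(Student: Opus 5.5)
We prove the two claims of \Cref{lem:dict_budgeted} separately. For the set equality, the inclusion $\supseteq$ is immediate. If $\mQ\in Eqi(\vp^*)$ has report distribution $\hat{\vq}^*$, then $\pi_{dict}(\hat{\vq}^*)=\vp^*$. The Wardrop condition \cref{eq:eqi_adaptive} then coincides with the Bayes--Nash condition of \Cref{def:bne1} at $\vp^*$, so $\mQ\in Eqi(\pi_{dict})$.

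For $\subseteq$, take $\mQ\in Eqi(\pi_{dict})$ with report distribution $\hat{\vq}$. If $\hat{\vq}=\hat{\vq}^*$, the same observation gives $\mQ\in Eqi(\vp^*)$. It therefore suffices to rule out $\hat{\vq}\neq\hat{\vq}^*$, which we do in two cases.

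\textbf{Case $\hat{q}_{m-1}\le q_{m-1}$.} Here $\pi_{dict}(\hat{\vq})=\mathbf{0}$, so every agent's utility is $\pay(k)$. Since $\pay$ is strictly increasing, $A_i(\mathbf 0)=\{m-1\}$ for every $i$, and every agent must report $m-1$. This gives $\hat{q}_{m-1}=1>q_{m-1}$, because $\vq$ has full support and $m\ge 2$. This contradicts $\hat{q}_{m-1}\le q_{m-1}$.

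\textbf{Case $\hat{q}_{m-1}>q_{m-1}$.} Here $\vp=(0,\dots,0,p)$ with $p=\min\{B/(n\hat{q}_{m-1}),1\}$. Since $\hat{q}_{m-1}\le 1$, we have $p\ge\min\{B/n,1\}$, and since $B>n\beta$ this gives $p>\beta$, using $\beta<1$ by \cref{eq:assum1}.

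We now compare reports. For a type $i<m-1$, misreporting as $m-1$ yields $\pay(m-1)-p\pen(m-1)<\pay(m-2)$, which is strictly less than reporting $m-2$ (unaudited). Reporting $m-2$ is truthful if $i=m-2$ and an unaudited misreport otherwise, so it is at least as good for every type below $m-1$. Hence no type $i<m-1$ reports $m-1$, by \Cref{lem:br} or directly. Types $m-1$ report truthfully, since truth gives them $\pay(m-1)$, the maximum attainable.

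So exactly the mass $q_{m-1}$ reports $m-1$, giving $\hat{q}_{m-1}=q_{m-1}$, which contradicts $\hat{q}_{m-1}>q_{m-1}$.

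The step I expect to be the main obstacle is this second case. The audit probability depends on $\hat{q}_{m-1}$ itself, so the estimate $p>\beta$ must hold uniformly over all $\hat{\vq}$, which is why we use $\hat{q}_{m-1}\le 1$. We also need care with the edge case $p=1$ and with strict versus weak inequalities.

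For the budget claim, we must show $C(\pi_{dict}(\hat{\vq}),\mQ)\le B$ for all $\mQ\in Eqi(\pi_{dict})$. Every such $\mQ$ has $\hat{\vq}=\hat{\vq}^*$ and lies in $Eqi(\vp^*)$, so the cost is $C(\vp^*,\mQ)\le B$ by the assumption $\vp^*\in\Pi_B(\hat{\vq}^*)$. Off-path outputs are never realized in equilibrium, but they also satisfy the budget anyway: the middle branch costs $n\hat{q}_{m-1}p\le B$, and $\mathbf 0$ costs nothing. Finally, $Eqi(\pi_{dict})\neq\emptyset$ is needed for membership in $\Pi_B$. This is implicit in the hypothesis $\vp^*\in\Pi_B(\hat{\vq}^*)$, interpreted as $\hat{\vq}^*$ being induced by some equilibrium of $\vp^*$, and we state that interpretation explicitly.
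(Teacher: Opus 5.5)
Your proposal is correct and follows essentially the same route as the paper: the inclusion $\supseteq$ is immediate, the off-path output $\mathbf{0}$ is ruled out because every agent would then report $m-1$, the branch $(0,\dots,0,p)$ is ruled out because $p>\beta$ makes reporting $m-2$ strictly better than reporting $m-1$ for every type below $m-1$, and the budget claim follows because every equilibrium lands on $\vp^*$. Two of your details are cleaner than the paper's. The uniform bound $p\ge\min\{B/n,1\}>\beta$ (using $\beta<1$) covers both branches of the minimum at once, and you point out that $\pi_{dict}\in\Pi_B$ also requires $Eqi(\pi_{dict})\neq\emptyset$, i.e., that $\hat{\vq}^*$ is induced by some equilibrium of $\vp^*$, which the paper's proof leaves implicit.
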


To achieve the upper bound in \cref{thm:opt_adaptive_budgeted}, the following lemma shows that we can reduce the space of candidate report strategies to single-minded ones.  Although similar to \cref{lem:best_eqi}, \cref{lem:best_eqi_budgeted} requires a different audit vector to improve the principal’s utility and satisfy the budget constraint.

\begin{restatable}{lemma}{lembesteqibudgeted}\label{lem:best_eqi_budgeted}
    For any $\vp$ and $\mQ\in Eqi(\vp)$, there exist a single-minded $\mQ'$ with $\iota = \truthidx(\vp), \kappa = \min \hat{A}(\vp)$ and $\vp'$ where
    $$p_k' = \begin{cases}
    0& k<\iota\\
    \rho_k(\hat{u}(\vp)) &k\ge \iota
\end{cases}$$
so that $\mQ'\in Eqi(\vp')$, 
$$C(\vp',\mQ')\le C(\vp,\mQ)\text{ and }V(\vp',\mQ')\ge V(\vp,\mQ).$$
\end{restatable}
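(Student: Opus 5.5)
The plan is to build $\mQ'$ and $\vp'$ explicitly and compare them with $(\vp,\mQ)$ one true type at a time. Throughout, write $u=\hat u(\vp)$, $\iota=\truthidx(\vp)$, $\hat A=\hat A(\vp)$ and $\kappa=\min\hat A$, and let $\mQ'$ have all types $i\ge\iota$ truthful and all types $i<\iota$ report $\kappa$. By \Cref{lem:br}, $\hat A\subseteq\{k\ge\iota\}$, so $\kappa\ge\iota$ and $\mQ'$ is single-minded. The argument has three steps: (i) check $\mQ'\in Eqi(\vp')$; (ii) compare costs; (iii) compare values. All three rest on \Cref{lem:br}, which says that under $\mQ$ types $i>\iota$ are truthful, types $i<\iota$ report into $\hat A$, and type $\iota$ splits between truthful reporting and $\hat A$ (the latter only if $U_\iota=u$).

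\textbf{Equilibrium.} Under $\vp'$, every $k\ge\iota$ has $\hat U_k(\vp')=\pay(k)-\rho_k(u)\pen(k)=u$. Every $k<\iota$ has $p'_k=0$, so $\hat U_k(\vp')=\pay(k)\le\pay(\iota-1)<u$, using the minimality of $\iota$. Hence $\hat u(\vp')=u$, $\hat A(\vp')=\{k\ge\iota\}\ni\kappa$, and $\truthidx(\vp')=\iota$. By \Cref{lem:br}, types $i<\iota$ may report $\kappa$ and types $i\ge\iota$ may report truthfully (type $\iota$ possibly with a tie). So $\mQ'\in Eqi(\vp')$. I also record that $p'_k\le p_k$ for all $k$, with equality on $\hat A$. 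Indeed, for $k\ge\iota$ we have $\hat U_k(\vp)\le u$, i.e.\ $p_k\ge\rho_k(u)$, with equality exactly when $k\in\hat A$.

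\textbf{Key monotonicity.} The main technical point is that $k\mapsto\rho_k(u)$ is nondecreasing on $k\ge\iota$. Write $\rho_k(u)=\pay(k)/\pen(k)-u/\pen(k)$ and take $k\le l$. \Cref{eq:assum4} gives $\pay(l)/\pen(l)\ge\pay(k)/\pen(k)$. If $\pen(l)\ge\pen(k)$, then also $u/\pen(l)\le u/\pen(k)$, and we are done. If instead $\pen(l)<\pen(k)$, use $\pay(l)-u\ge \pay(k)-u\ge0$ to get
\[
\frac{\pay(l)-u}{\pen(l)}\ge\frac{\pay(l)-u}{\pen(k)}\ge\frac{\pay(k)-u}{\pen(k)}.
\]
Consequently every $k\in\hat A$ satisfies $p_k=\rho_k(u)\ge\rho_\kappa(u)=p'_\kappa$, and also $p_k\ge\rho_\iota(u)=p'_\iota$.

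\textbf{Cost.} I compare audits type by type. A truthful type $i>\iota$ contributes $p_i\ge p'_i$. A type $i<\iota$ reports some $k\in\hat A$ with $p_k\ge p'_\kappa$. For type $\iota$, truthful mass pays $p_\iota\ge p'_\iota$, and misreporting mass pays $p_k\ge p'_\iota$ for some $k\in\hat A$. Under $\mQ'$ type $\iota$ is entirely truthful with cost $p'_\iota$, so each piece is dominated. Summing with the weights $nq_i$ gives $C(\vp',\mQ')\le C(\vp,\mQ)$.

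\textbf{Value.} A misreport $i\to k$ with $k\in\hat A$ contributes $\val(i,k)-\pay(k)+p_k\pen(k)=\val(i,k)-u$. By \cref{eq:assum2}, using $i<\iota\le\kappa\le k$, this is at most $\val(i,\kappa)-u$, which is exactly the contribution of $i$ under $(\vp',\mQ')$. Truthful contributions $\val(i,i)-\pay(i)$ do not depend on $\vp$, so they are unchanged. For type $\iota$, any misreport under $\mQ$ requires $\pay(\iota)=u$. That misreport contributes $\val(\iota,k)-u\le\val(\iota,\iota)-\pay(\iota)$ by \cref{eq:assum2}, so moving it to truthful reporting does not decrease value. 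Hence $V(\vp',\mQ')\ge V(\vp,\mQ)$.
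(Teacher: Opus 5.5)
Your proposal is correct and follows essentially the same argument as the paper. You build the single-minded $\mQ'$, use the monotonicity $\rho_\kappa(\hat u)\le\rho_k(\hat u)$ for $k\in\hat A$ (the paper's \cref{eq:best_eqi0}) for the cost bound, and use $\hat U_k=\hat U_\kappa$ with \cref{eq:assum2} for the value bound. Your explicit treatment of an indifferent type $\iota$ that misreports is more careful than the paper, whose proof simply treats all $i\ge\iota$ as truthful. One small fix: your first monotonicity case ($\pen(l)\ge\pen(k)$) needs $u\ge 0$. This is harmless, since misreporting mass exists only when $u\ge\pay(0)>0$, but you should say so.
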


\begin{restatable}{lemma}{lembestequal}\label{lem:best_equal}
    For any single-minded $\mQ$ with $\iota, \kappa$ and $\Pi_B(\mQ)\neq\emptyset$, there exists  $u \in \mathbb{R}$ defined in~\cref{eq:best_equal0} so that $\vp'$ with
    \begin{equation}\label{eq:opt_adaptive_budgeted1}
    p_k' = \begin{cases}
    0& k<\iota\\
    \rho_k(u) &k\ge \iota
\end{cases}
\end{equation} satisfies
$V(\vp',\mQ)\ge V(\vp,\mQ)$ for all $\vp \in \Pi_B(\mQ)$.
\end{restatable}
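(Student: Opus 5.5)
Fix the single-minded $\mQ$ with parameters $\iota\le\kappa$. The budget cost and the value then depend on $\vp$ only through a few coordinates. The misreporting mass is $F_\iota=\sum_{j<\iota}q_j$; it reports $\kappa$, and each type $j\ge\iota$ reports truthfully. Hence
\[
C(\vp,\mQ)=n\Big(F_\iota p_\kappa+\sum_{j\ge\iota} q_j p_j\Big)\quad\text{(with $q_\kappa$ also counted in the sum)},
\qquad
V(\vp,\mQ)=\text{const}+nF_\iota\,p_\kappa\pen(\kappa),
\]
because truthful agents pay no penalty. So for fixed $\mQ$ the objective is increasing in $p_\kappa$ alone. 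The other coordinates only enter through the equilibrium constraint $\mQ\in Eqi(\vp)$ and the budget. The plan is to rewrite the problem ``maximize $p_\kappa$ subject to $\vp\in\Pi_B(\mQ)$'' in terms of the misreport utility $u=\hat{U}_\kappa(\vp)=\pay(\kappa)-p_\kappa\pen(\kappa)$. Maximizing $p_\kappa$ is the same as minimizing $u$, and the claim is that the constraint set is an interval in $u$ whose relevant endpoint defines \cref{eq:best_equal0}.

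\textbf{Step 1: equilibrium constraints in terms of $u$.} By \cref{lem:br}, $\mQ\in Eqi(\vp)$ requires three things. First, $\kappa\in\hat{A}(\vp)$, i.e.\ $\hat U_k(\vp)\le u$ for all $k$, which means $p_k\ge\rho_k(u)$ for $k\ne\kappa$. Second, types $i<\iota$ weakly prefer misreporting, i.e.\ $u\ge\pay(\iota-1)$. Third, types $i\ge\iota$ weakly prefer truth, i.e.\ $u\le\pay(\iota)$. Coordinates $k<\iota$ must also satisfy $\hat U_k\le u$, which holds with $p_k=0$ since $\pay(k)\le\pay(\iota-1)\le u$. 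Given $u$, the cheapest feasible choice is therefore the vector $\vp'$ of \cref{eq:opt_adaptive_budgeted1}: $p_k'=0$ for $k<\iota$ and $p_k'=\rho_k(u)$ for $k\ge\iota$. It attains the same $p_\kappa$ and pointwise no larger coordinates on the reported types. Hence $C(\vp',\mQ)\le C(\vp,\mQ)$ and $V(\vp',\mQ)=V(\vp,\mQ)$ for the same $u$. This mirrors the argument of \cref{lem:best_eqi_budgeted}.

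\textbf{Step 2: optimize over $u$.} The cost $c(u):=C(\vp'(u),\mQ)=n\big(F_\iota\rho_\kappa(u)+\sum_{j\ge\iota}q_j\rho_j(u)\big)$ is affine and strictly decreasing in $u$. The feasible set is therefore $\{u\in[\pay(\iota-1),\pay(\iota)]: c(u)\le B\}$, an interval $[u_B,\pay(\iota)]$. Define
\[
u:=\max\Big\{\pay(\iota-1),\ \text{the solution of } c(u)=B\Big\},
\]
which I take to be \cref{eq:best_equal0}. The hypothesis $\Pi_B(\mQ)\ne\emptyset$ guarantees $u\le\pay(\iota)$. For any $\vp\in\Pi_B(\mQ)$, its $\hat u(\vp)$ is feasible, so $\hat u(\vp)\ge u$. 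Since $V$ is decreasing in $u$, combining with Step 1 gives $V(\vp,\mQ)\le V(\vp'(\hat u(\vp)),\mQ)\le V(\vp'(u),\mQ)$.

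\textbf{Main obstacle.} The hard part is that $\Pi_B(\mQ)$ requires the budget to hold for \emph{all} equilibria of $\vp$, not just $\mQ$, and Step 2 must check this for $\vp'$ itself. At $u=\pay(\iota-1)$ or $u=\pay(\iota)$, ties create other equilibria, which can shift type $\iota-1$ or $\iota$ or move misreporters to other $k\in\hat A$. My plan is to note that every equilibrium of $\vp'$ differs from $\mQ$ only on the indifferent types. I would then use \cref{eq:assum4}, under which $\rho_k(u)$ is nonincreasing in $k$ for $k\ge\iota$ (the same monotonicity used in \cref{lem:best_eqi}). This implies that the audit probability faced by any reallocated mass is at most that of the configuration defining $c$, so $c(u)$ bounds the cost of every equilibrium. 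If that fails at a boundary tie, I would instead define $u$ in \cref{eq:best_equal0} as the worst-case-cost root, taking the maximum over the finitely many tie configurations, which keeps the interval structure.
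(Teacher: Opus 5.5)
Your Steps 1 and 2 already form a complete proof and follow the paper's argument. The paper likewise writes the cost of the vector in \cref{eq:opt_adaptive_budgeted1} as a decreasing affine function $C(u)$ and sets $u=\max\{U_{\iota-1},C^{-1}(B)\}$. It then shows $C(\hat u(\vp))\le C(\vp,\mQ)\le B$ for every $\vp\in\Pi_B(\mQ)$, since $p_\kappa=\rho_\kappa(\hat u)$ and $p_k\ge\rho_k(\hat u)$ for $k\ge\iota$. From this it deduces $U_{\iota-1}\le u\le\hat u(\vp)\le U_\iota$, and it concludes because $V(\cdot,\mQ)$ depends on the audit vector only through $p_\kappa$.

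You should drop your ``main obstacle'' paragraph, because as written it is incorrect.

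The paper effectively reads $\Pi_B(\mQ)$ as $\{\vp:\mQ\in Eqi(\vp),\ C(\vp,\mQ)\le B\}$ and only verifies $C(\vp',\mQ)\le B$. That is all \cref{thm:opt_adaptive_budgeted} needs. The dictator strategy of \cref{lem:dict_budgeted} discards every equilibrium whose report distribution differs from $\hat{\vq}^*$, and the cost depends only on the report distribution.

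Under the stricter ``all equilibria'' reading, your sketch does not work, for two reasons.
First, by \cref{eq:best_eqi0}, \cref{eq:assum4} makes $\rho_k(u)$ strictly \emph{increasing} in $k$, not nonincreasing.
Second, $\vp'$ puts every $k\ge\iota$ into $\hat A(\vp')$ for every $u$, not only at boundary ties. So the profile in which all types $i<\iota$ report $m-1$ is also an equilibrium of $\vp'$, and its cost strictly exceeds $c(u)$ whenever $\iota\ge1$ and $\kappa<m-1$.

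Hence, when the budget binds, $\vp'$ would not lie in that stricter set. Your fallback of raising $u$ to a worst-case root does not restore the inequality either. Strictly equalized vectors $\equal(\hat u,\kappa,\delta)$, with $\hat u$ between the two roots and small $\delta$, have $\mQ$ as their unique equilibrium, satisfy the budget, and have a larger $p_\kappa$.
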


\begin{proof}[Proof of \cref{thm:opt_adaptive_budgeted}]

If $B\le \beta n$, we allocate all audit budget to $m-1$ which achieves the upper bound in \cref{lem:small_budget}.  Otherwise, by \cref{lem:best_eqi_budgeted}, we can go through all single-minded report strategies and compute, via Lemma \ref{lem:best_equal}, an optimal audit vector $\vp'\in \Pi_B(\mQ)$, such that the optimal pair $\vp^*, \mQ^*$ satisfy
$$V_B(\vp^*,\mQ^*) = \sup_{\mQ\in Eqi}\sup_{\vp\in \Pi_B(\mQ)}V_B(\vp,\mQ).$$
because $V_B(\vp,\mQ) = V(\vp,\mQ)$ if $\vp\in \Pi_B(\mQ)$.
We return a dictator strategy $\pi^*$ with $B, \vp^*$ and the report distribution of $\mQ^*$.  By \cref{lem:dict_budgeted,lem:br}, 
$$V_B(\pi^*) = \min_{\mQ: \mQ\in Eqi(\vp^*), \hat{q} = \hat{q}^*}V_B(\vp^*,\mQ) = V_B(\vp^*,\mQ^*)$$ 
because a given report distribution supports a unique equilibrium. Therefore $\pi^*$ achieves the upper bound in \cref{thm:opt_adaptive_budgeted}.

Finally, for time complexity, there are $O(m^2)$ single-minded report strategies, and we can use dynamic programming to solve all $u$ in~\cref{lem:best_equal} in $O(m^2)$.  We then use $O(m^2)$ to find the optimal single-minded strategy as \cref{alg:water_filling_algorithm}.
\end{proof}
\subsection{Minimizing the Misreport Incentive}\label{sec:minincentive}
We study the misreport incentive objective in~\citet{Estornell2020StrategicClassification} in our model, and discuss the connection to theirs.

The \emph{misreport incentive} of an audit strategy $\pi: \Delta_m\to [0,1]^m$ is 
\begin{equation*}
  \epsilon_{MI}(\pi):=\max_{i,k}\hat{U}_k(\pi(\vq))-{U}_i  
\end{equation*}
which is an agent's largest gain from misreporting.  

Since $\pi$ only affects misreport utility $\hat{U}_k(\vp) = U_k-p_k\pen(k)$, minimizing $\epsilon_{MI}$ is equivalent to finding $\vp = \pi_k(\hat{\vq})$ for each $\hat{\vq}$ such that 
\begin{equation}\label{eq:mi_nonatomic}
    \begin{aligned}
    \min_\vp \max_k& (U_k-\pen(k)p_k)\\
    &\text{ subject to } n\sum_k p_k \hat{q}_k\le B
\end{aligned}
\end{equation}
We can use linear programming to solve the optimization problem in~\cref{eq:mi_nonatomic} for a given report $\hat{\vq}$ and get the following theorem.\footnote{
Note that the algorithm implements $\pi$ as an efficient oracle: given any report $\hat{\vq}$, it computes $\pi(\hat{\vq})$ in polynomial time.  However, we cannot enumerate all outputs of $\pi$, as the domain of reports is exponential.}  Additionally, we may use the classical water-filling algorithm which sorts $k$ according to $U_k$ and allocate budget to minimize the largest misreport utility according to budget constraint.
\begin{theorem}
There is an efficient algorithm that computes the optimal audit strategy $\pi$ to minimize the misreport incentive~\cref{eq:mi_nonatomic} for all $(n,m,\vq,\val,\pay,\pen)$ and $B$.
\end{theorem}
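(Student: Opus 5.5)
The statement asks for an efficient algorithm that computes the optimal audit strategy $\pi$ for the misreport-incentive objective. Since $\epsilon_{MI}$ and the budget constraint only involve the audit vector output at the observed report distribution, it suffices to describe, for each $\hat{\vq}$, an efficient oracle computing $\pi(\hat{\vq})$ as an optimal solution of \cref{eq:mi_nonatomic}. The function $\pi$ is then defined pointwise, so optimality of $\pi$ reduces to optimality of each output. The subtracted term $\max_i(-U_i)=-\pay(0)$ is a constant independent of $\pi$. Hence minimizing $\epsilon_{MI}$ is exactly minimizing $\max_k (U_k-\pen(k)p_k)$ under the budget constraint, which is the reduction stated before \cref{eq:mi_nonatomic}.

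I would first write the epigraph linear program. It minimizes $t$ over $(t,\vp)$ subject to $U_k-\pen(k)p_k\le t$ for all $k$, $n\sum_k p_k\hat q_k\le B$, and $0\le p_k\le 1$. The program is feasible (take $\vp=\mathbf 0$ and $t=\max_k U_k$), and $t$ is bounded below by $\min_k(U_k-\pen(k))$, so an optimum exists and is attained. Polynomial-time LP solvers then give the result directly. This is the main argument, and there is no real obstacle here beyond noting that the domain of $\pi$ is infinite, which is why we present $\pi$ as an oracle.

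To give a concrete combinatorial algorithm, I would also verify the water-filling characterization. For a target level $t$, the cheapest feasible audit vector is $p_k(t)=\min\{1,\max\{0,(U_k-t)/\pen(k)\}\}$, which is exactly $\rho_k(t)$ clipped to $[0,1]$. Its cost $c(t)=n\sum_k\hat q_k p_k(t)$ is continuous and nonincreasing in $t$, and it is piecewise linear with breakpoints at $U_k$ and $U_k-\pen(k)$.

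The optimal value is therefore the smallest $t$ with $c(t)\le B$. This $t$ is also at least $\max_k(U_k-\pen(k))$, since $p_k\le 1$ forces $U_k-\pen(k)p_k\ge U_k-\pen(k)$. To compute it, sort the $O(m)$ breakpoints and scan the intervals. On each interval $c$ is linear, so the crossing point can be solved in closed form. This takes $O(m\log m)$ time.

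The only step needing care is the exchange argument showing this $\vp(t)$ is optimal. Suppose any feasible $\vp$ attains level $t$. Then it must satisfy $p_k\ge (U_k-t)/\pen(k)$ and $p_k\ge 0$, so $p_k\ge p_k(t)$ coordinatewise. Consequently its cost is at least $c(t)$, and no feasible $\vp$ can achieve a level below the minimal $t$ with $c(t)\le B$.
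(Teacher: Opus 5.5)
Your proposal is correct and takes essentially the paper's approach: solve the linear program \cref{eq:mi_nonatomic} separately for each report distribution $\hat{\vq}$, so that $\pi$ is realized as a polynomial-time oracle, with water-filling as a combinatorial alternative. One small fix to your water-filling step: the smallest $t$ with $c(t)\le B$ may not exist (e.g.\ when $B\ge n$) or may lie below $t_0=\max_k(U_k-\pen(k))$, so you should state the optimal level as the smallest $t\ge t_0$ with $c(t)\le B$, which amounts to restricting your breakpoint scan to $t\ge t_0$.
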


Finally, under the strategic classification setting of \citet{Estornell2020StrategicClassification}--- 
every type is either above the acceptance threshold (payment $1$, penalty $1+b$) or below it (payment $0$, no penalty)--- the optimal solution to \cref{eq:mi_nonatomic} assigns the same auditing probability to all types above the threshold, so they are audited uniformly. This coincides with \citet[Theorem 2]{Estornell2020StrategicClassification}, which, however, is under the atomic setting.

\section{Simulations}
Thus far, we have analyzed the optimal audit policy theoretically. We now provide simulations to illustrate how the optimal policy and the resulting equilibria depend on key model parameters: the prior distribution (\cref{sec:exprior}), the audit cost, the penalty margin (\cref{sec:excost}), and the resolution of the type space (\cref{sec:extypes}).

\subsection{Effect of the prior}\label{sec:exprior}
We begin with small three-type examples that vary the prior.  \Cref{fig:heat_map} illustrates the effect of the prior $\vq$.  In the lower-left corner, most agents have the lowest type (type $0$), which admits the truthful equilibrium $(0,1,2)$.  At the top corner, most agents have type $2$, and it becomes preferable to allow everyone to report the highest type $(2,2,2)$ rather than impose huge audit costs to enforce truth-telling.   Similarly, in the lower-right corner, it is optimal to allow type $0$ to misreport as type $1$.  Finally, we note that the principal-optimal policy in  \cref{fig:heat_map1} is stricter than the welfare-optimal one in \cref{fig:heat_map2}, and yields a larger truth-telling region.   This is because misreports impose greater costs on the principal than on overall welfare.



\begin{figure}[ht]
    \centering
    \begin{subfigure}{0.5\columnwidth}
        \includegraphics[width=0.8\columnwidth]{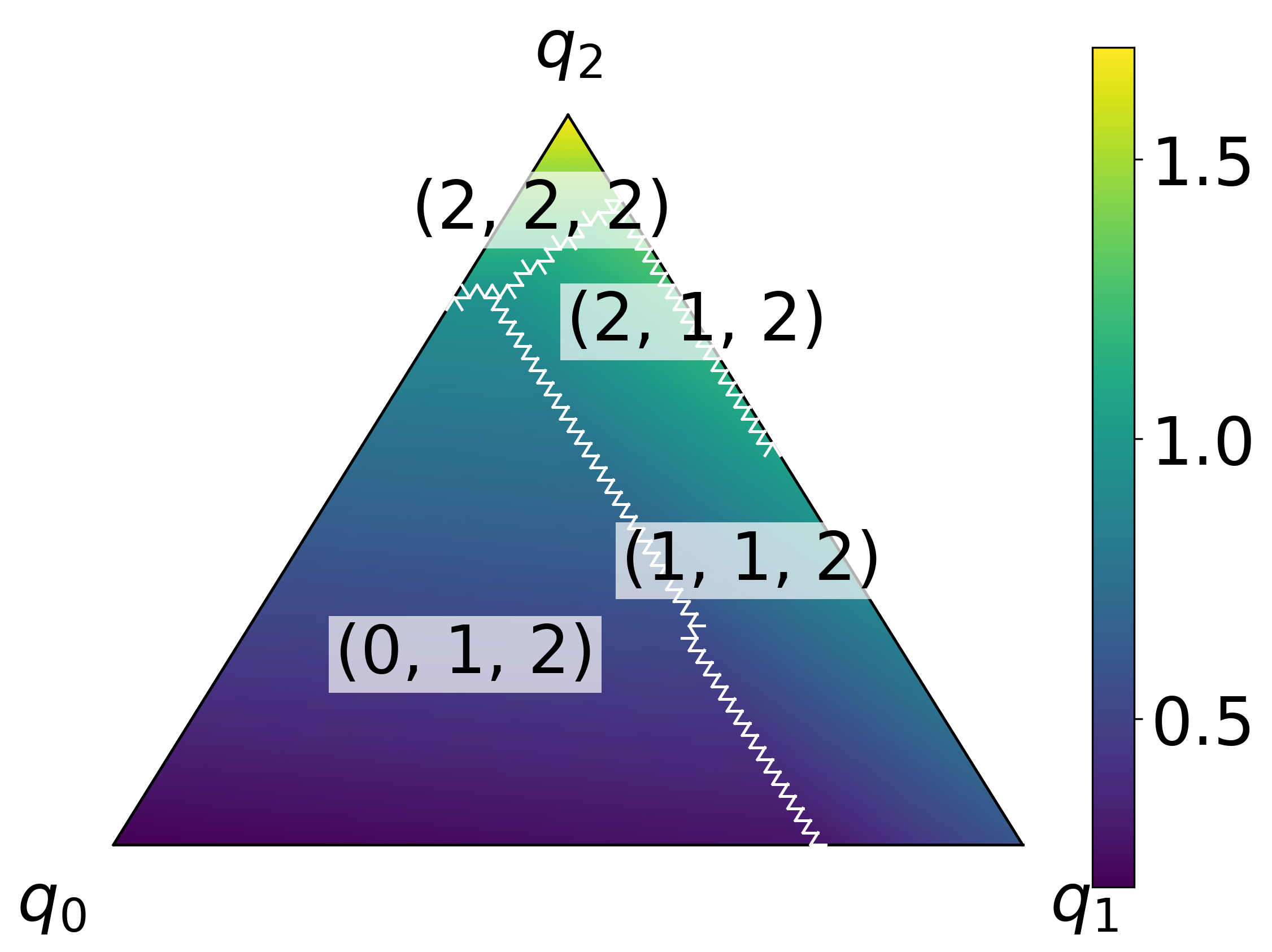}
        \caption{Principal's utility}\label{fig:heat_map1}
    \end{subfigure}~
    \begin{subfigure}{0.5\columnwidth}
        \includegraphics[width=0.8\columnwidth]{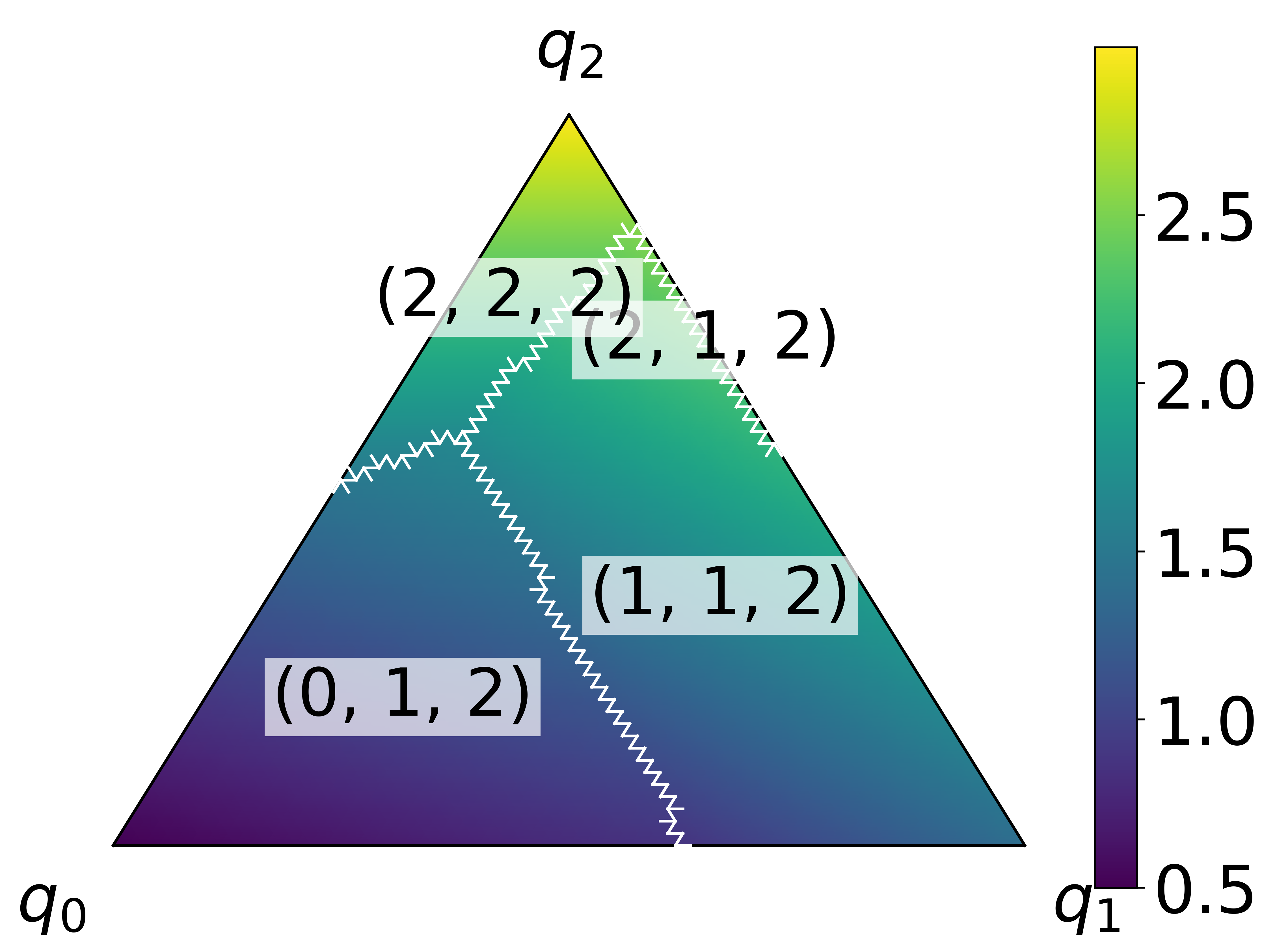}
        \caption{Social Welfare}\label{fig:heat_map2}
    \end{subfigure}
    \caption{Effect of prior $\vq$: There are three types $m = 3$ with $n = 1$, $\val = \mathrm{diag}(0.5,1.4,3.0)$, $\pay = (0.3,0.8,1.3)$, $\pen =(1.0,1.2,1.4)$, and $\lambda = 0.7$.   Each point corresponds to a prior vector $\vq = (q_0, q_1, q_2)$, and the color encodes the principal’s optimal utility by \cref{thm:opt_vector} with $\epsilon = 10^{-3}$ in \cref{fig:heat_map1}, and the optimal social welfare by  \cref{thm:opt_welfarw} in \cref{fig:heat_map2}.  We also indicate the region of the worst equilibrium.}\label{fig:heat_map}
\end{figure}

\Cref{fig:pay} shows that the effect of the payment function is non-monotone when all other parameters are fixed.  In \cref{fig:pay1}, the worst equilibrium is always truth-telling, and increasing the payments monotonically decreases the principal’s utility. In contrast, in \cref{fig:pay2}, when the type-1 payment is small, the equilibrium is still truth-telling and welfare decreases. However, for large type-1 payment, type $0$ agents begin to misreport as type $1$, and increasing $\pay(1)$ reduces audit probability $p_2$ and increases welfare.

\subsection{Impact of audit cost and penalty margin}\label{sec:excost}
In this section, we study the impact of audit cost $\lambda$ and penalty margin $b$ for affine penalty in~\cref{fig:m3_cost_sweep,fig:m3_penalty_sweep}.  
There are $m=3$ ordered types, total agent mass $n=1$, prior $\vq=(q_0,q_1,q_2)=(0.6488,\,0.3333,\,0.0179)$, the principal’s valuation matrix
\[
\val=\begin{pmatrix}
2.2 & 0.7 & 0.0\\
1.9 & 3.4 & 1.9\\
1.6 & 1.1 & 4.6
\end{pmatrix}.
\]
where the rows are the true type and columns are reported types, and the payment vector is $\pay=(1,2,3)$.
We consider affine penalties of the form $\pen=\pay+b$ where $\pen(k)=\pay(k)+b$
and costly audit with~$\lambda$. For each sweep value $\lambda$ or $b$, we compute an $\epsilon$-optimal non-adaptive audit vector for the principal's utility (U-opt) using \cref{alg:water_filling_algorithm} or social welfare (W-opt) as in~\cref{thm:opt_welfarw} with $\epsilon=10^{-3}$, and evaluate outcomes at the worst Bayes--Nash equilibrium as~\cref{eq:utiliy_max}.    (the same in the rest of the subsections)

\Cref{fig:m3_cost_sweep} varies the audit cost $\lambda$ from $0.6$ to $0.9$ with affine penalties $\pen=\pay+1.5$, and plots the principal's utility and social welfare in \cref{fig:m3_U_lam,fig:m3_W_lam}, respectively. Over this range, the optimal audit profile stays in the same regime under both objectives (utility and welfare) and is visually unchanged (\cref{fig:m3_pU_lam,fig:m3_pW_lam}): the lowest report (blue) is never audited, while the audit probabilities on the two higher reports (green and orange) remain constant. As a result, both the principal's utility and social welfare decrease with $\lambda$ because audits become more expensive, consistent with \cref{prop:monotone}.

\Cref{fig:m3_penalty_sweep} varies the penalty margin $b$ while fixing $\lambda=0.7$. \Cref{fig:m3_U_b,fig:m3_W_b} show that increasing $b$ improves both the principal's utility and social welfare, again in line with \cref{prop:monotone}. At the policy level, larger penalties substitute for auditing under both objectives: the audit probabilities on the two higher reports (green and orange) decrease as $b$ grows (\cref{fig:m3_pU_b,fig:m3_pW_b}), while auditing the lowest report (blue) remains negligible.

\begin{figure*}[t]
    \centering
    \begin{subfigure}[t]{0.4\textwidth}
        \centering
        \includegraphics[width=\linewidth]{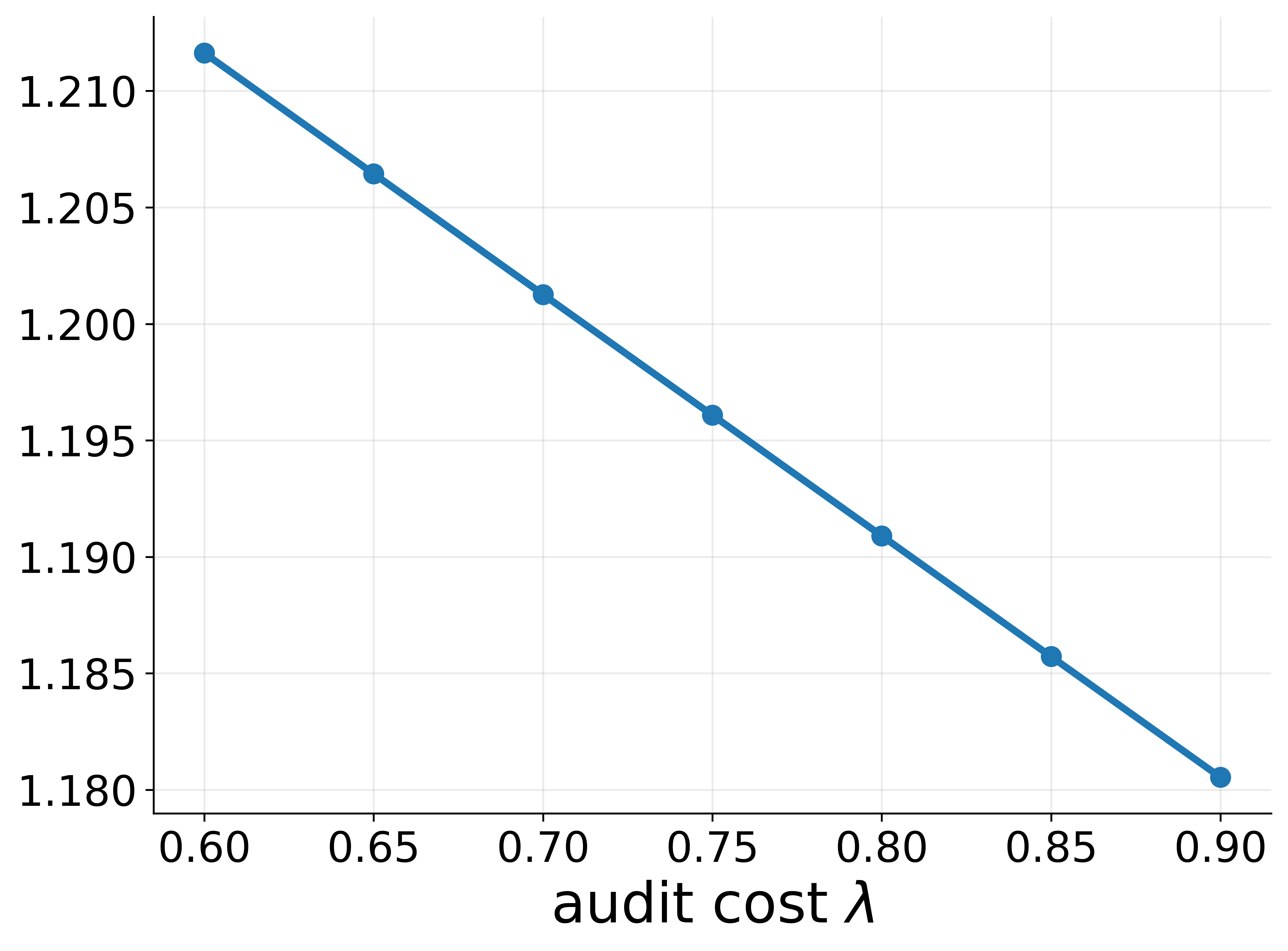}
        \caption{U-opt Principal's utility}
        \label{fig:m3_U_lam}
    \end{subfigure}\hfill
    \begin{subfigure}[t]{0.4\textwidth}
        \centering
        \includegraphics[width=\linewidth]{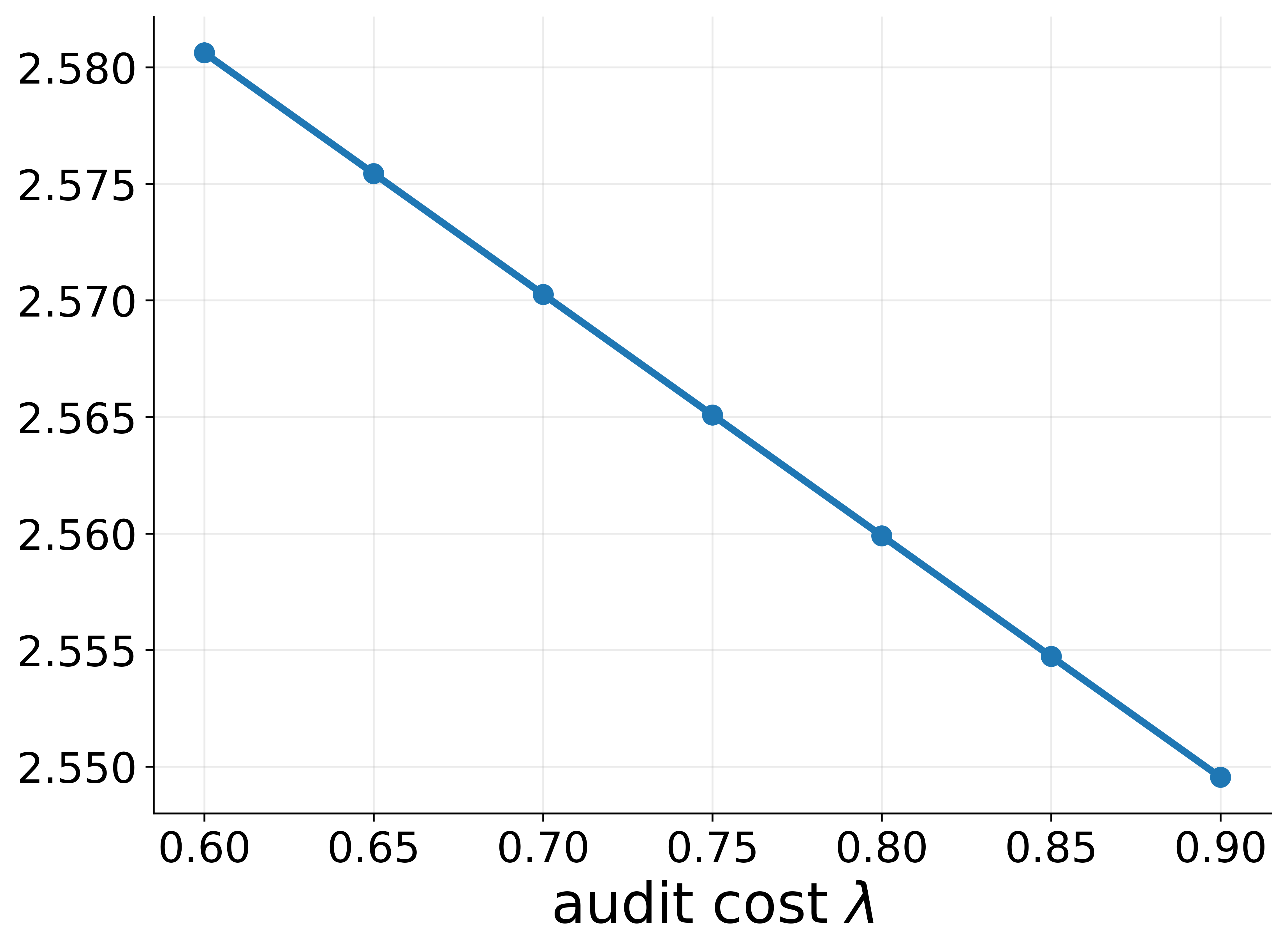}
        \caption{W-opt Social welfare}
        \label{fig:m3_W_lam}
    \end{subfigure}\hfill
    \begin{subfigure}[t]{0.4\textwidth}
        \centering
        \includegraphics[width=\linewidth]{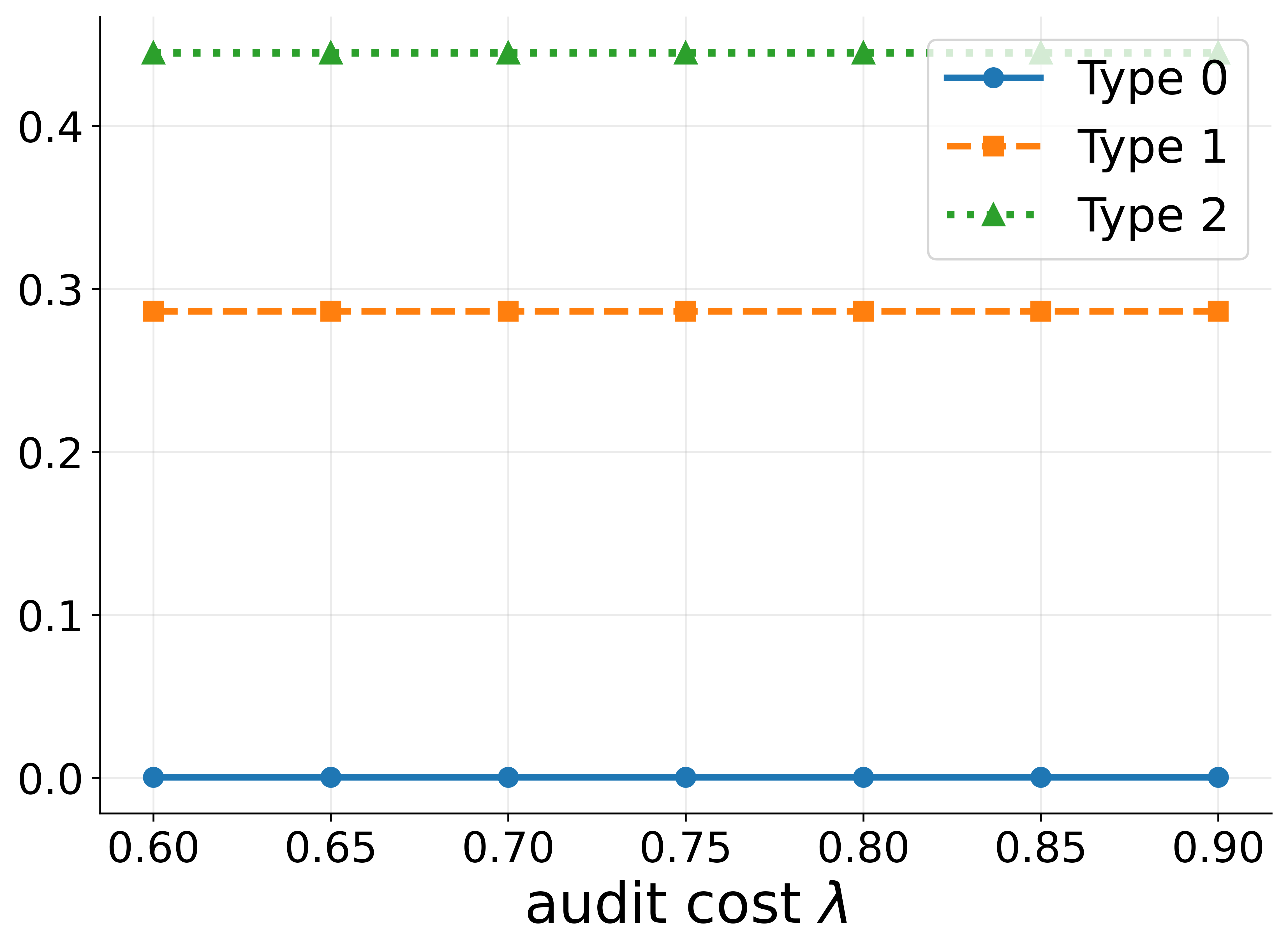}
        \caption{U-opt audit probabilities}
        \label{fig:m3_pU_lam}
    \end{subfigure}\hfill
    \begin{subfigure}[t]{0.4\textwidth}
        \centering
        \includegraphics[width=\linewidth]{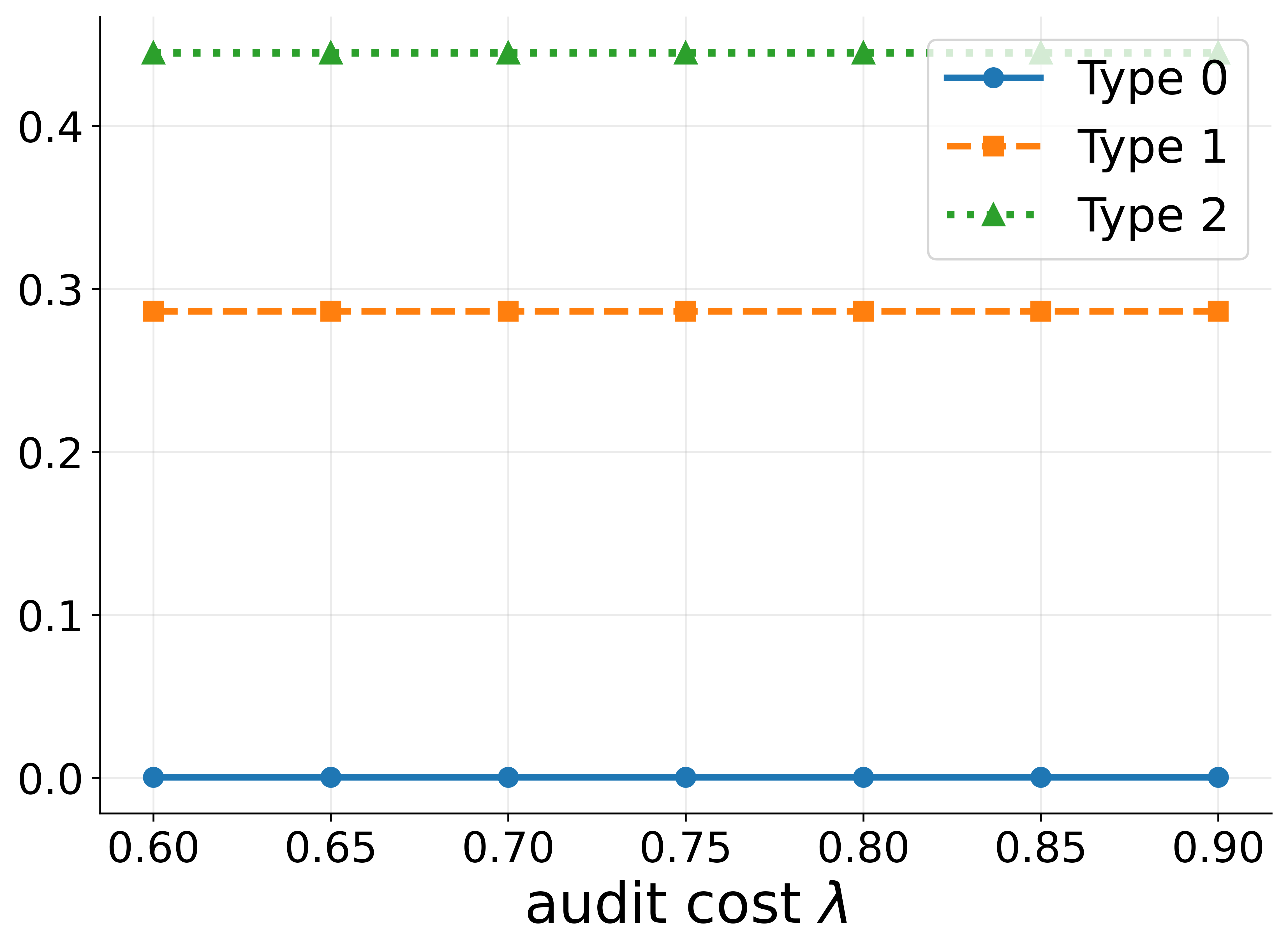}
        \caption{W-opt audit probabilities }
        \label{fig:m3_pW_lam}
    \end{subfigure}

    \caption{Effect of the per-audit cost $\lambda$ with $\pen=\pay+1.5$.}
    \label{fig:m3_cost_sweep}
\end{figure*}

\begin{figure*}[t]
    \centering
    \begin{subfigure}[t]{0.4\textwidth}
        \centering
        \includegraphics[width=\linewidth]{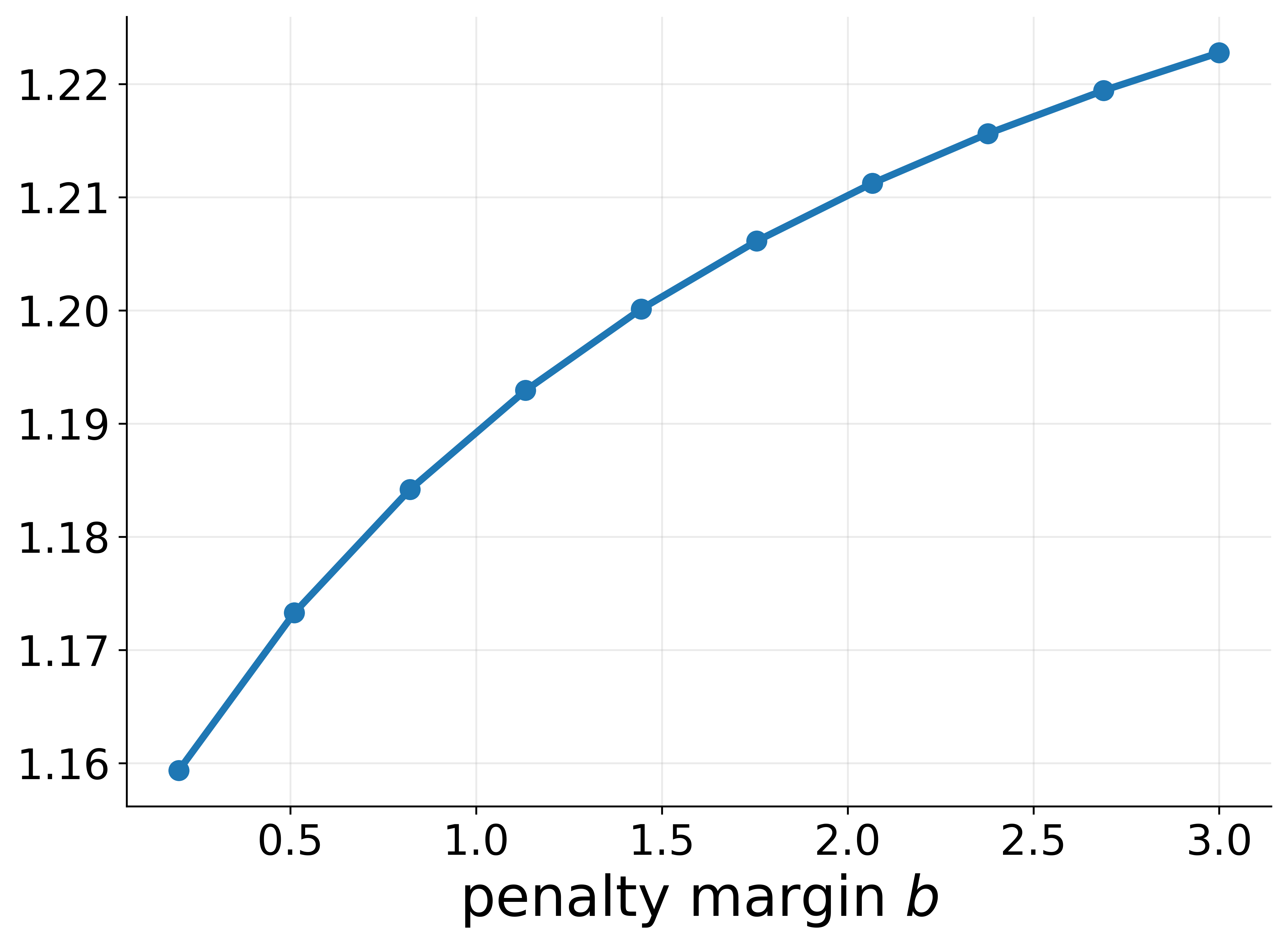}
        \caption{U-opt Principal's utility}
        \label{fig:m3_U_b}
    \end{subfigure}\hfill
    \begin{subfigure}[t]{0.4\textwidth}
        \centering
        \includegraphics[width=\linewidth]{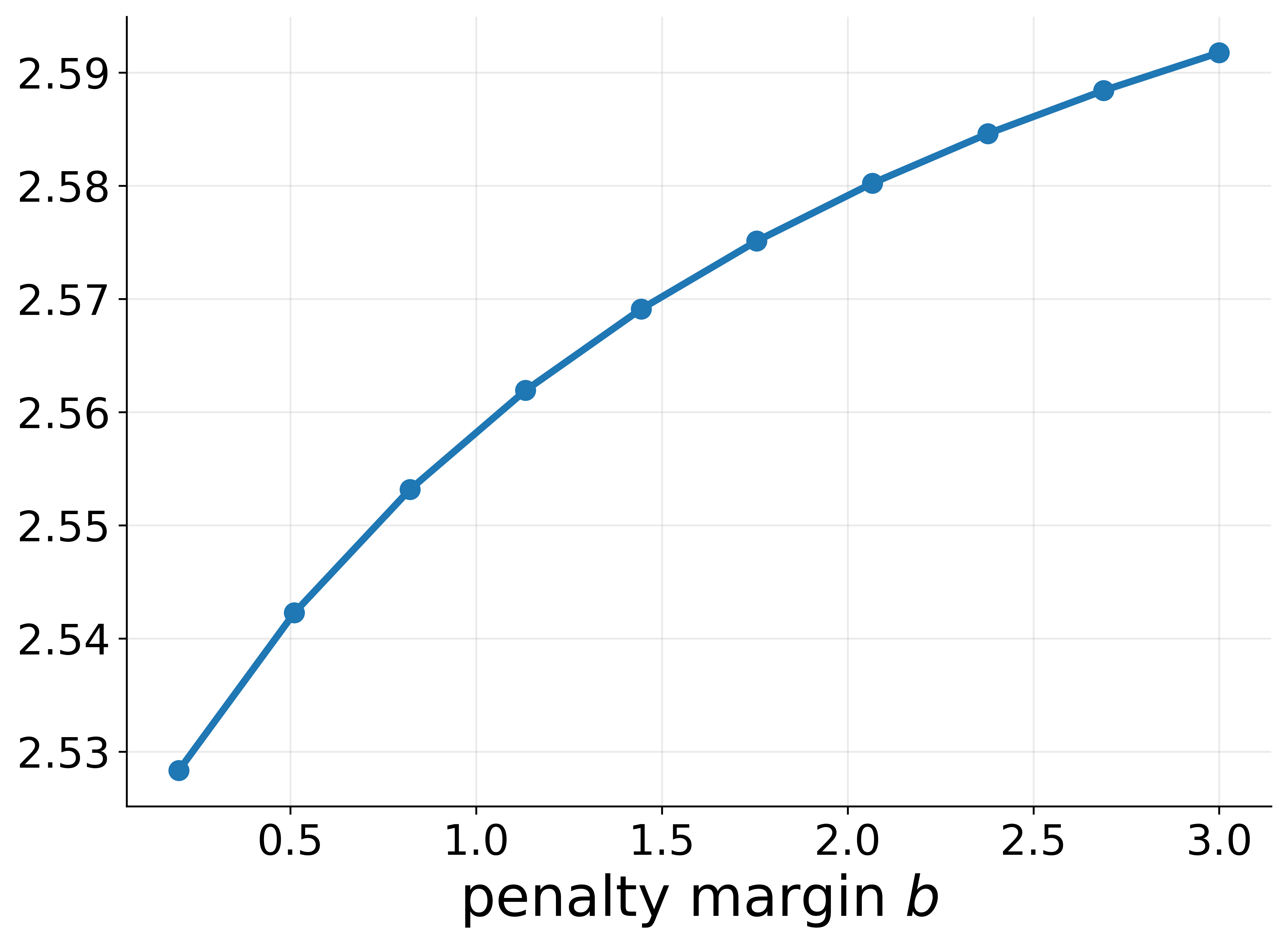}
        \caption{W-opt Social welfare}
        \label{fig:m3_W_b}
    \end{subfigure}\hfill
    \begin{subfigure}[t]{0.4\textwidth}
        \centering
        \includegraphics[width=\linewidth]{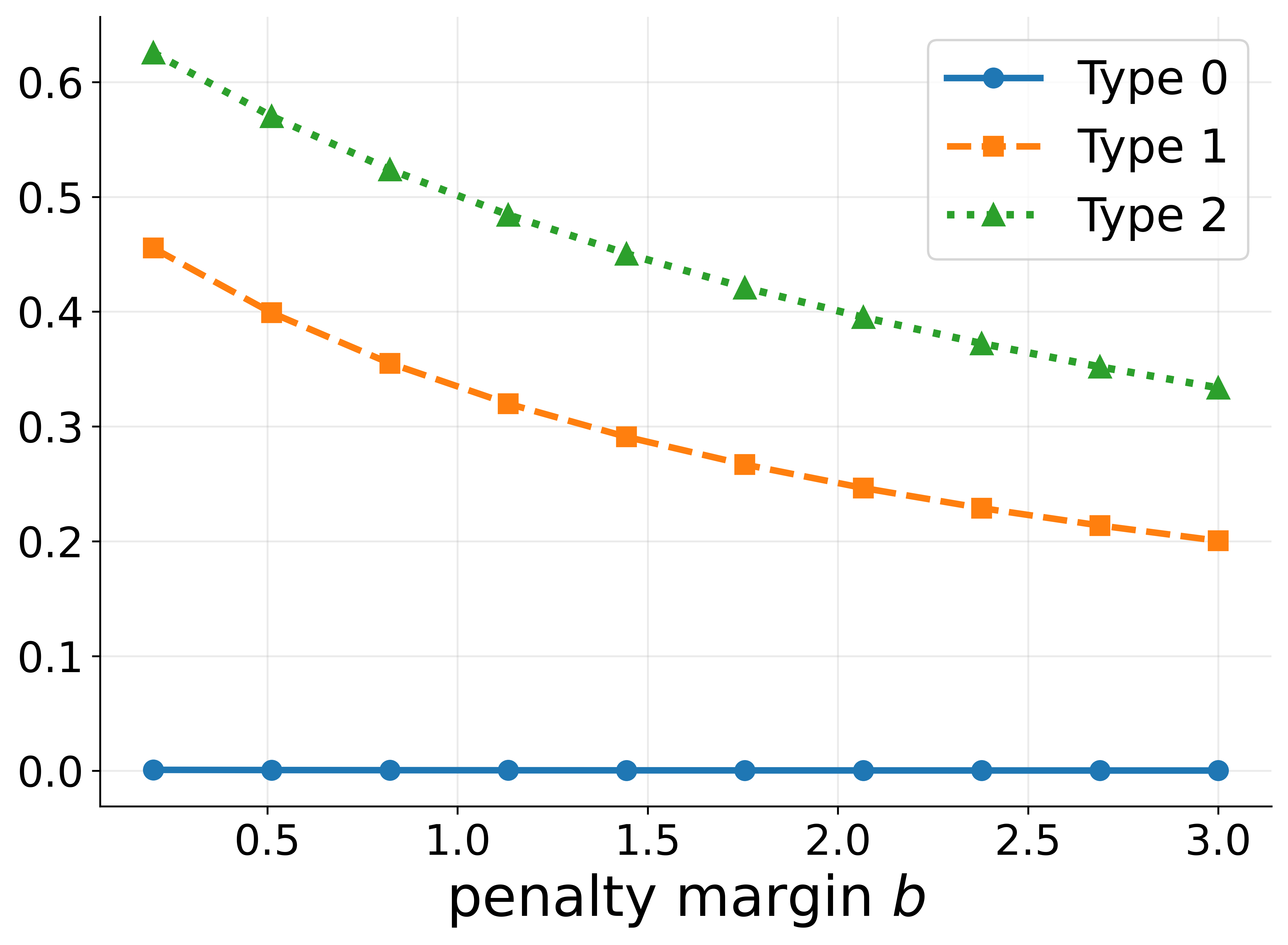}
        \caption{U-opt audit probabilities}
        \label{fig:m3_pU_b}
    \end{subfigure}\hfill
    \begin{subfigure}[t]{0.4\textwidth}
        \centering
        \includegraphics[width=\linewidth]{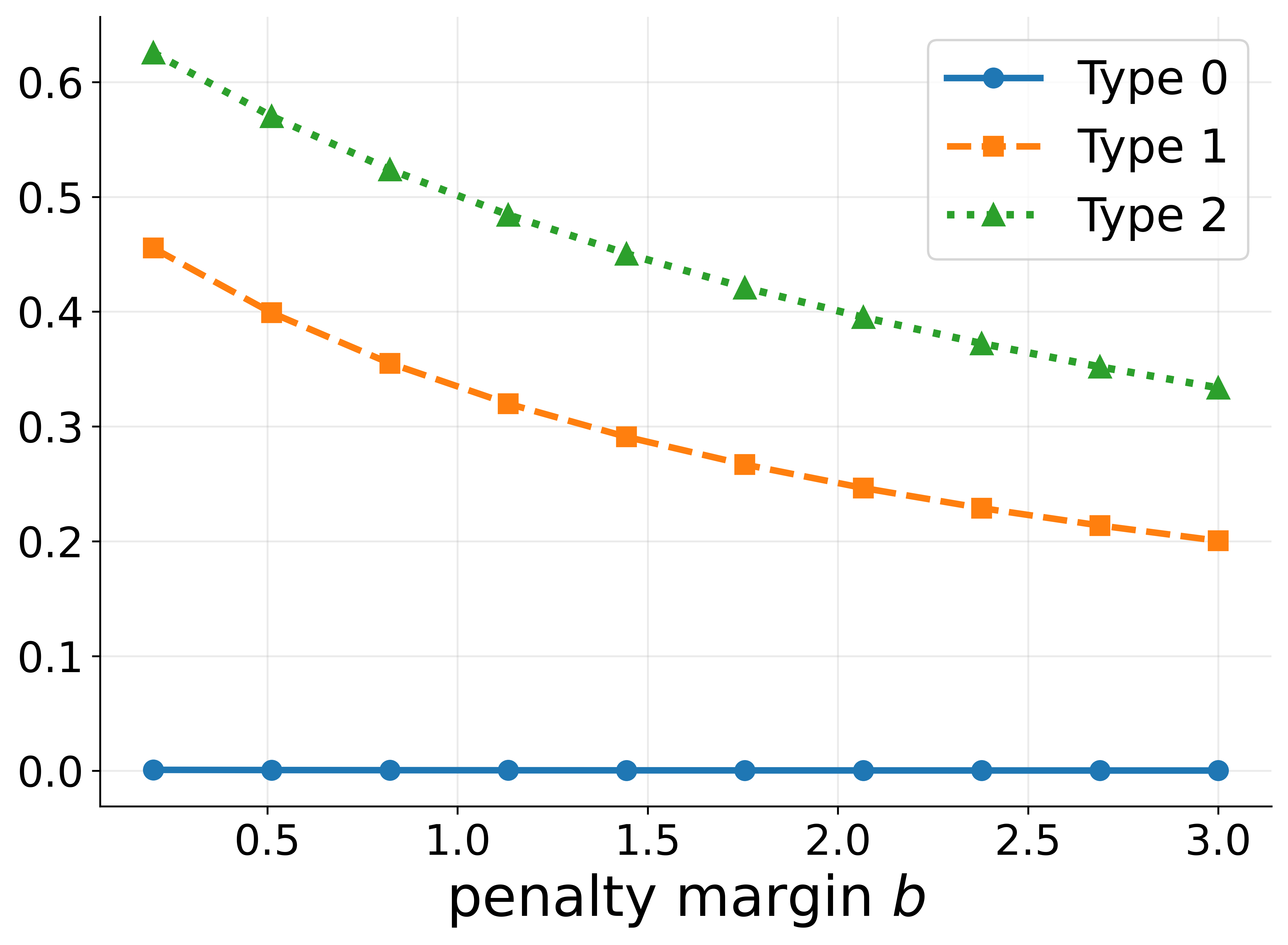}
        \caption{W-opt audit probabilities}
        \label{fig:m3_pW_b}
    \end{subfigure}

    \caption{Effect of the penalty margin $b$ with $\pen=\pay+b$ and $\lambda=0.7$.}
    \label{fig:m3_penalty_sweep}
\end{figure*}



\subsection{Resolution effect: sweeping the number of types \texorpdfstring{$m$}{m}}\label{sec:extypes}
In this section, we study how the number of types $m$ affects the principal's utility and social welfare. Intuitively, we consider a continuous population indexed by a real-valued position $x\in[0,1]$ and ask what happens when this continuous space is partitioned into $m$ ordered buckets. This captures settings in which an underlying continuous attribute---for example, annual income for a tax return---is represented by a finite set of categories. A larger $m$ gives a finer discretization of the same environment. Our goal is to understand whether this additional resolution improves objectives or primarily refines the new induced equilibrium under an optimal non-adaptive audit policy.

Specifically, we fix a continuous environment and vary only its discretization. We consider a unit-mass population ($n=1$) with true position $x$ drawn from the uniform prior on $[0,1]$. For all $x$, payments are affine,
\[
\pay(x)=1+2x,
\]
the penalty for a detected misreport is
\[
\pen(x)=\pay(x)+2,
\]
and the principal's valuation is
\[
\val(x,y)=2+2x-1|x-y|.
\]
The per audit cost is $\lambda=2.5$.

For each resolution $m$, we partition $[0,1]$ into $m$ equal-width intervals
\[
I_i=\Big[\frac{i}{m},\frac{i+1}{m}\Big),\qquad i\in\{0,1,\dots,m-1\},
\]
and use these bins as the discrete types. We construct the resulting $m$-type instance by taking bin expectations. As the prior on $x$ is uniform, each bin has mass
\[
q_i=\frac{1}{m}.
\]
Let
\[
x_i:=\mathbb{E}[X\mid X\in I_i]=\frac{2i+1}{2m}
\]
denote the mean position of bin $i$. Then the discrete payment is
\[
\pay(i)=\mathbb{E}[\pay(X)\mid X\in I_i]
      =1+2x_i
      =1+\frac{2i+1}{m},
\]
and the discrete penalty is
\[
\pen(i)=\pay(i)+2.
\]

The discrete valuation matrix is defined similarly:
\[
\val(i,k)
=
\mathbb{E}[\val(X,Y)\mid X\in I_i,\;Y\in I_k]
=
\begin{cases}
2+2x_i-\frac{|i-k|}{m}, & i\neq k,\\[6pt]
2+2x_i-\frac{1}{3m}, & i=k.
\end{cases}
\]

This construction keeps the underlying continuous model fixed and changes only the resolution m of the discretization.

For each $m\in\{2,3,\dots,200\}$, we compute an $\epsilon$-optimal non-adaptive audit vector for the principal's utility objective (U-opt) and for the social welfare objective (W-opt), with $\epsilon=10^{-6}$. As throughout the paper, outcomes are evaluated at the worst Bayes--Nash equilibrium. 




\begin{figure*}[ht]
    \centering

    \begin{subfigure}[t]{0.38\textwidth}
        \centering
        \includegraphics[width=\linewidth]{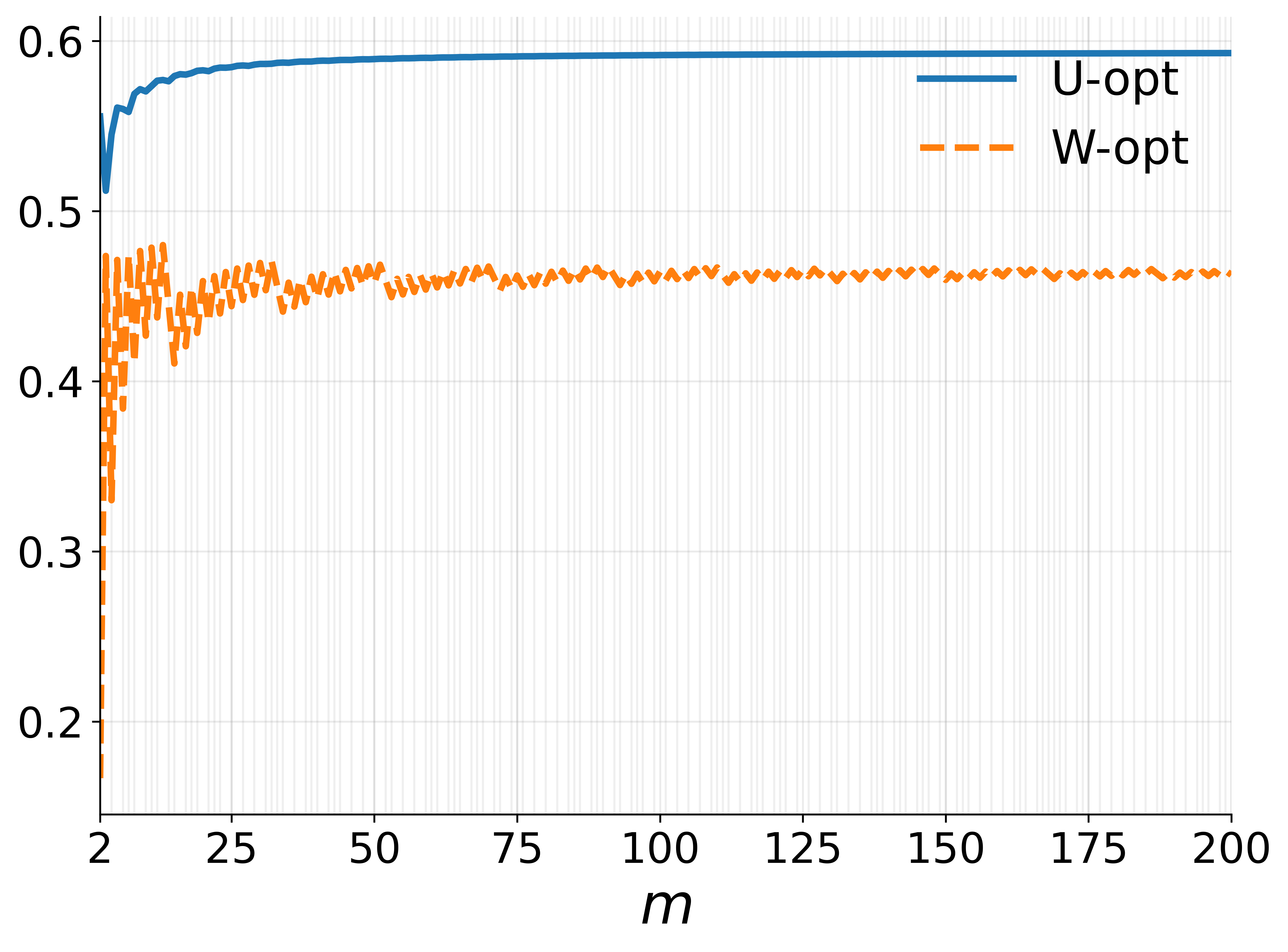}
        \caption{Principal utility.}
        \label{fig:numtypes_U}
    \end{subfigure}\hfill
    \begin{subfigure}[t]{0.38\textwidth}
        \centering
        \includegraphics[width=\linewidth]{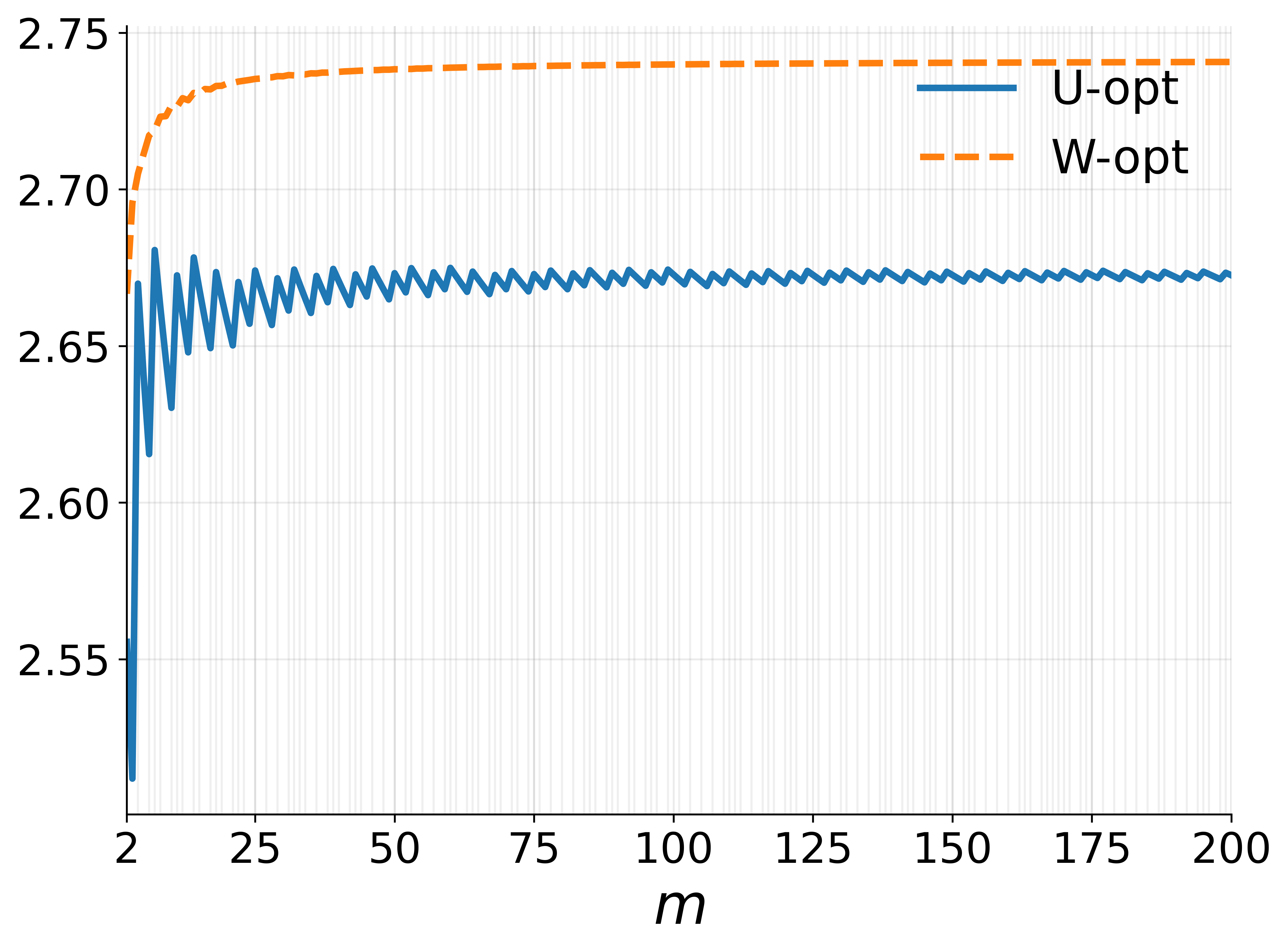}
        \caption{Social welfare.}
        \label{fig:numtypes_W}
    \end{subfigure}\hfill
    \begin{subfigure}[t]{0.38\textwidth}
        \centering
        \includegraphics[width=\linewidth]{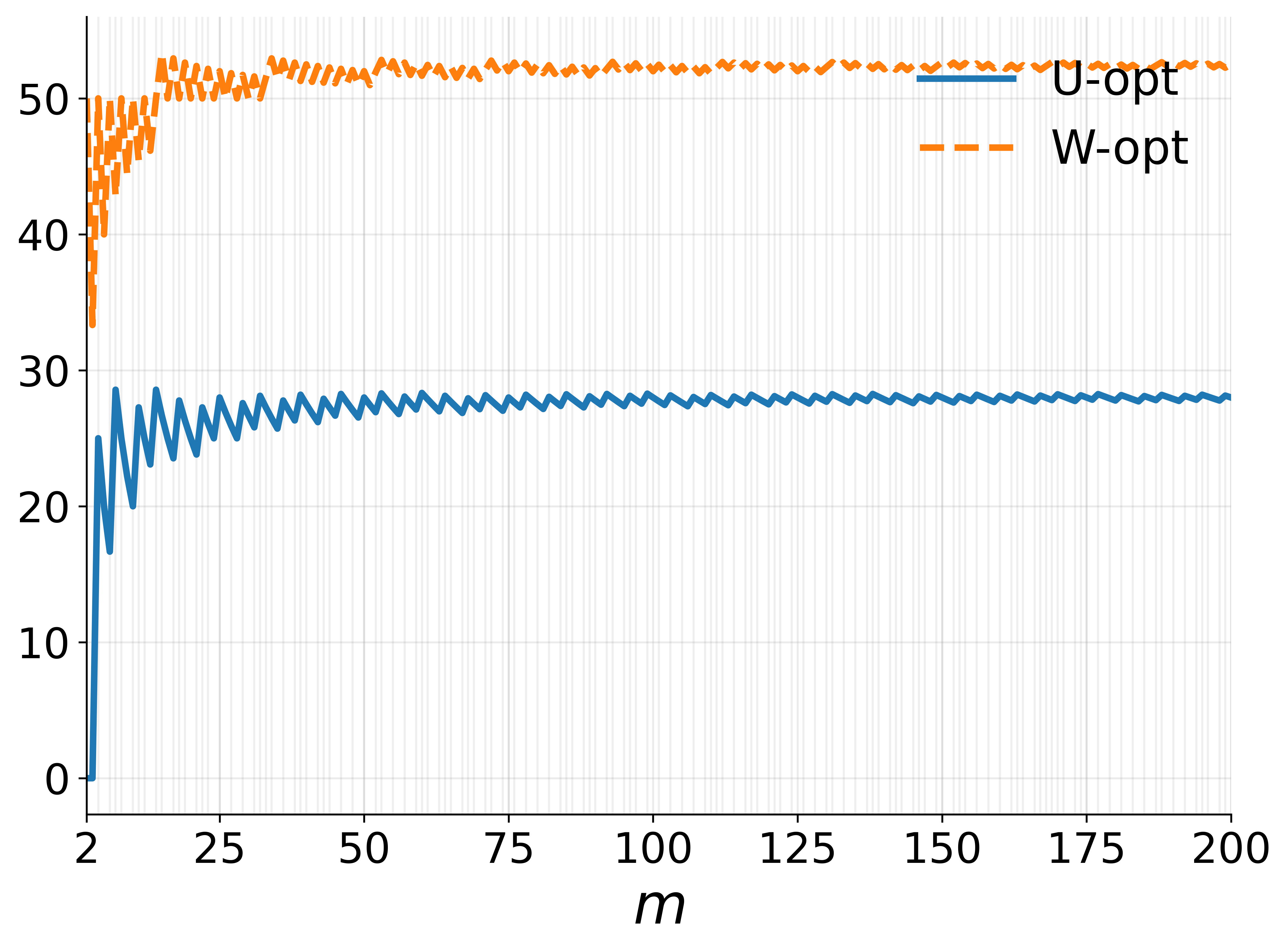}
        \caption{Misreporting mass.}
        \label{fig:numtypes_mis}
    \end{subfigure}\hfill
    \begin{subfigure}[t]{0.38\textwidth}
        \centering
        \includegraphics[width=\linewidth]{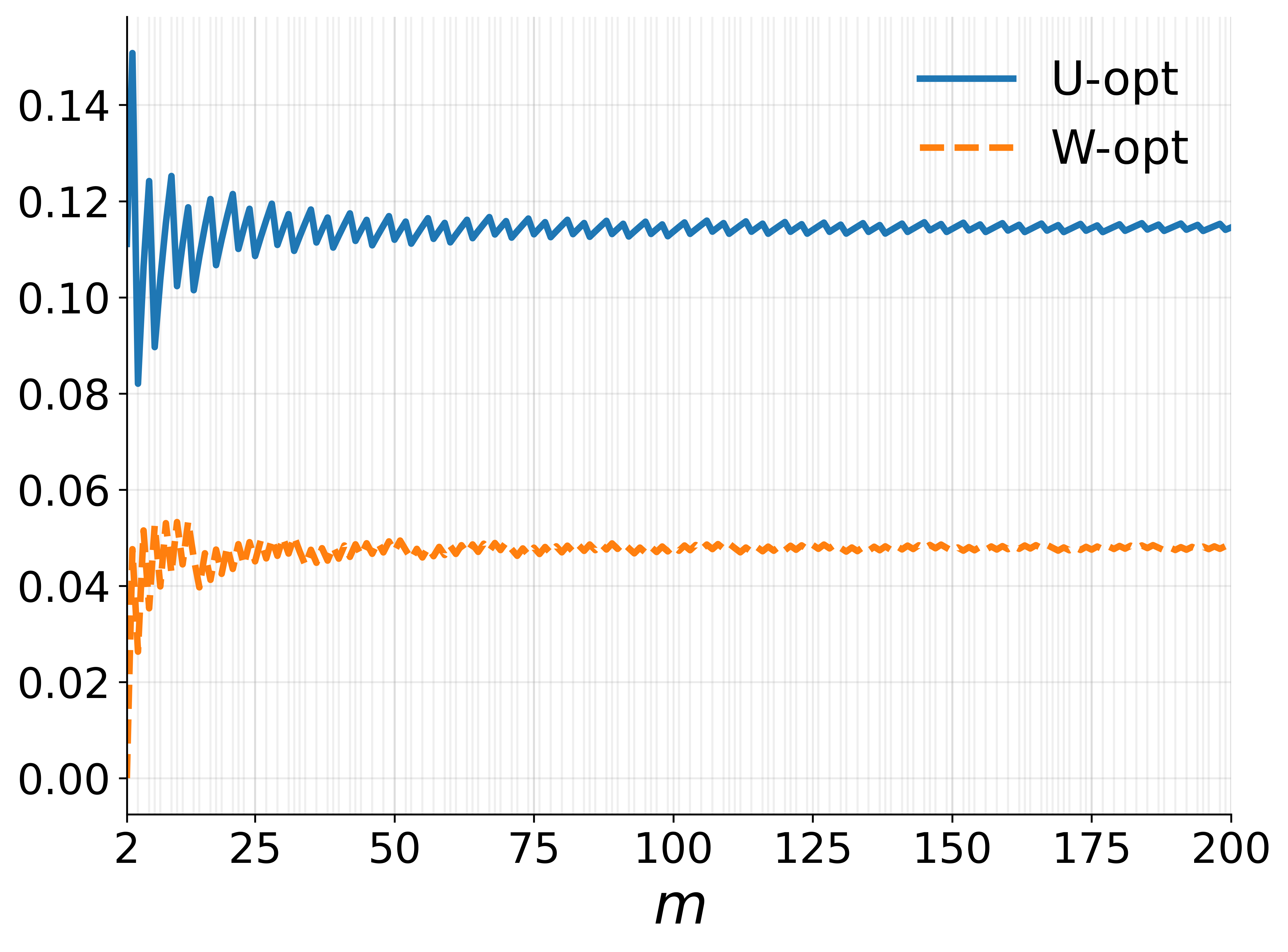}
        \caption{Audit rate.}
        \label{fig:numtypes_audit}
    \end{subfigure}

    \vspace{0.35em}

    \begin{subfigure}[t]{0.38\textwidth}
        \centering
        \includegraphics[width=\linewidth]{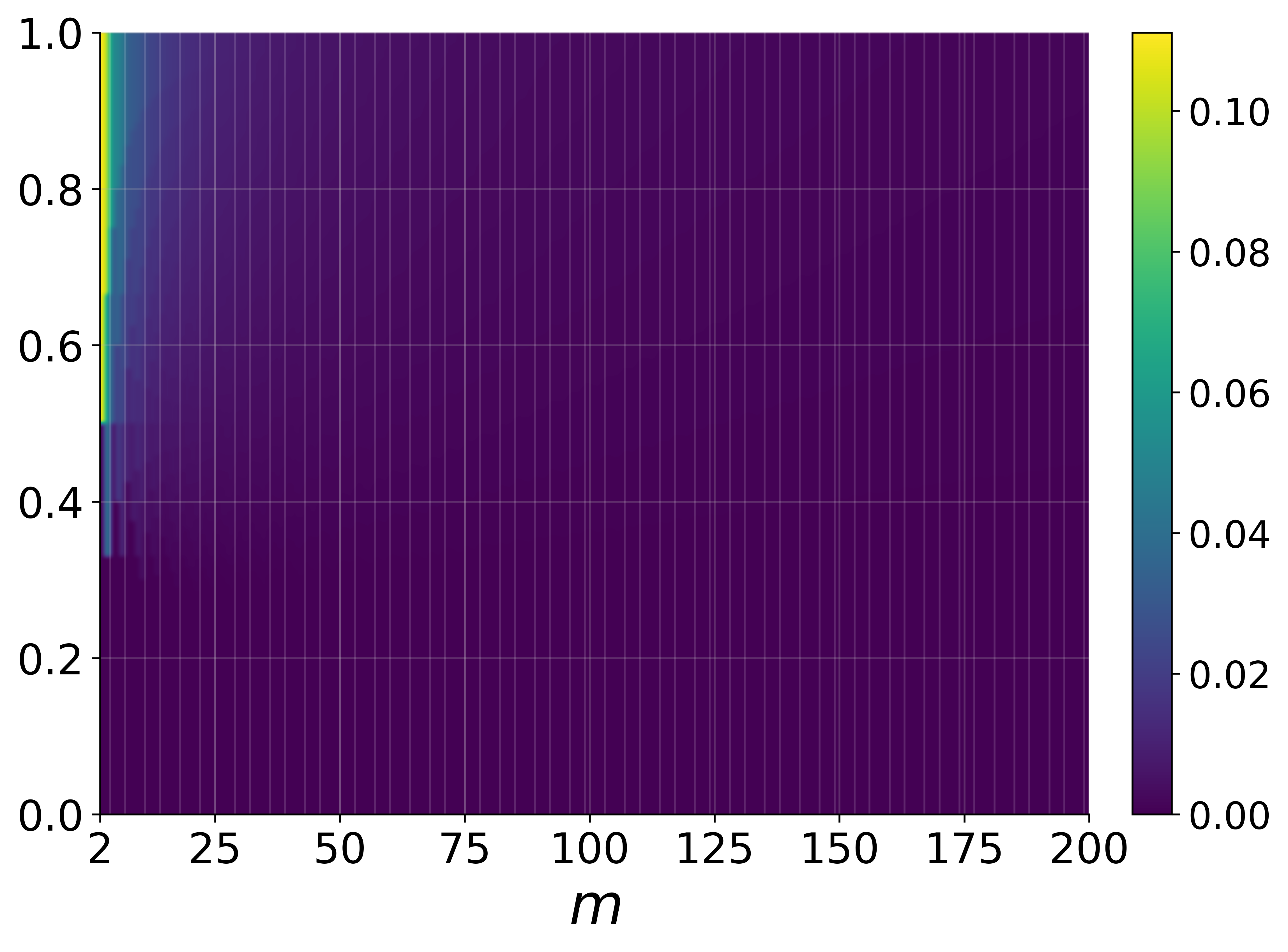}
        \caption{U-opt Audit probabilities.}
        \label{fig:numtypes_auditmass_Uopt}
    \end{subfigure}\hfill
    \begin{subfigure}[t]{0.38\textwidth}
        \centering
        \includegraphics[width=\linewidth]{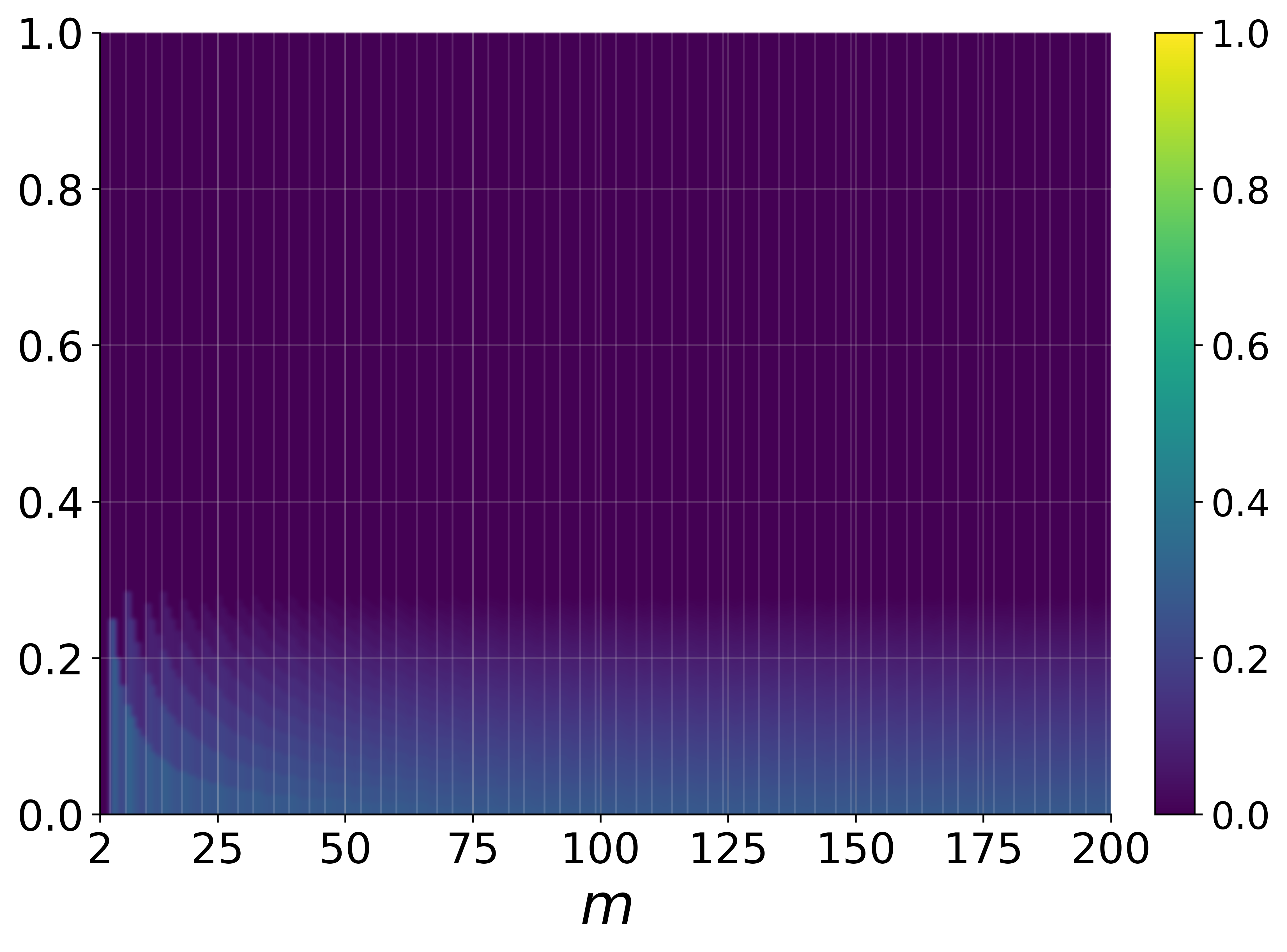}
        \caption{U-opt Distortion.}
        \label{fig:numtypes_dist_Uopt}
    \end{subfigure}\hfill
    \begin{subfigure}[t]{0.38\textwidth}
        \centering
        \includegraphics[width=\linewidth]{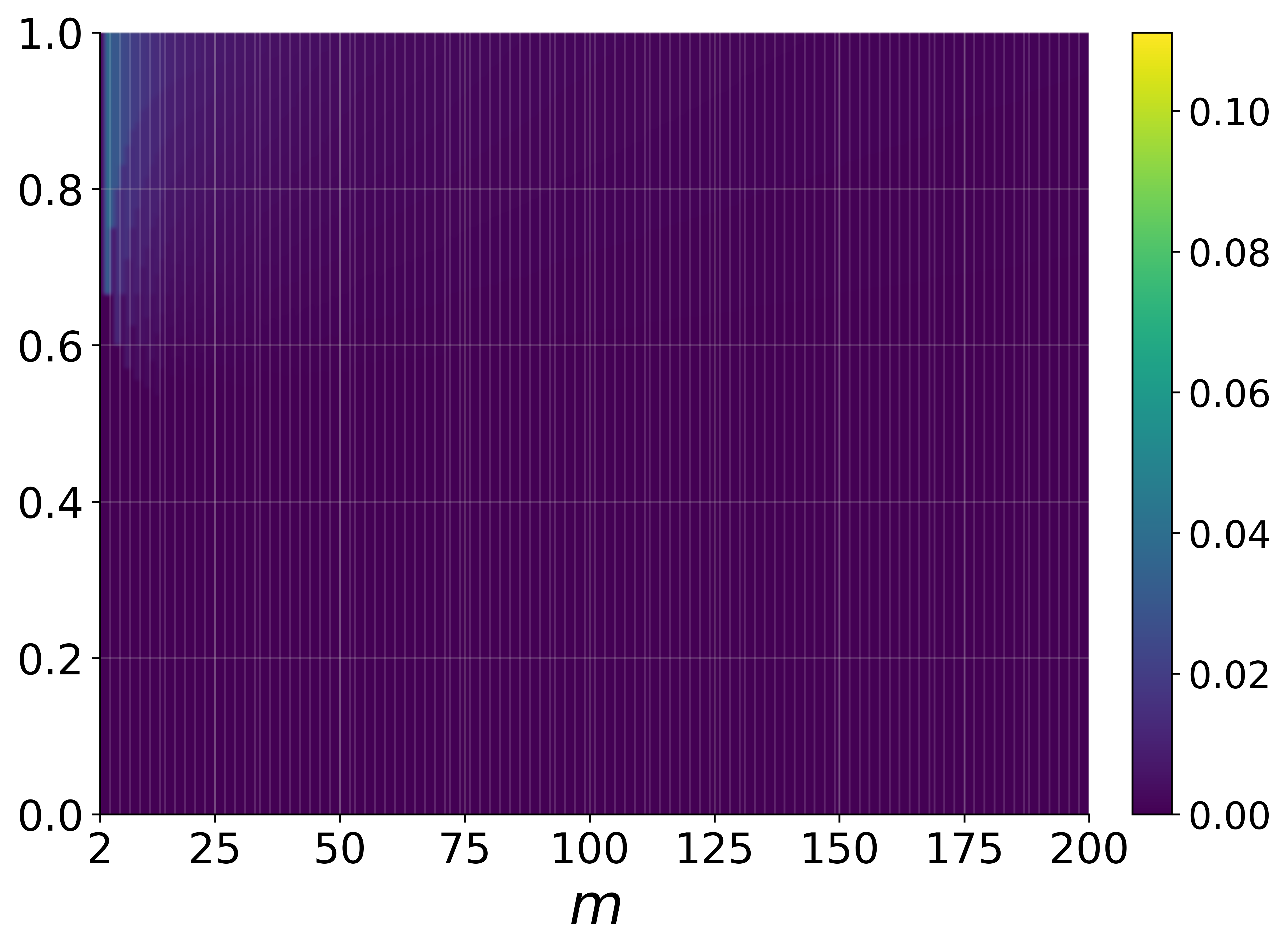}
        \caption{W-opt Audit probabilities.}
        \label{fig:numtypes_auditmass_Wopt}
    \end{subfigure}\hfill
    \begin{subfigure}[t]{0.38\textwidth}
        \centering
        \includegraphics[width=\linewidth]{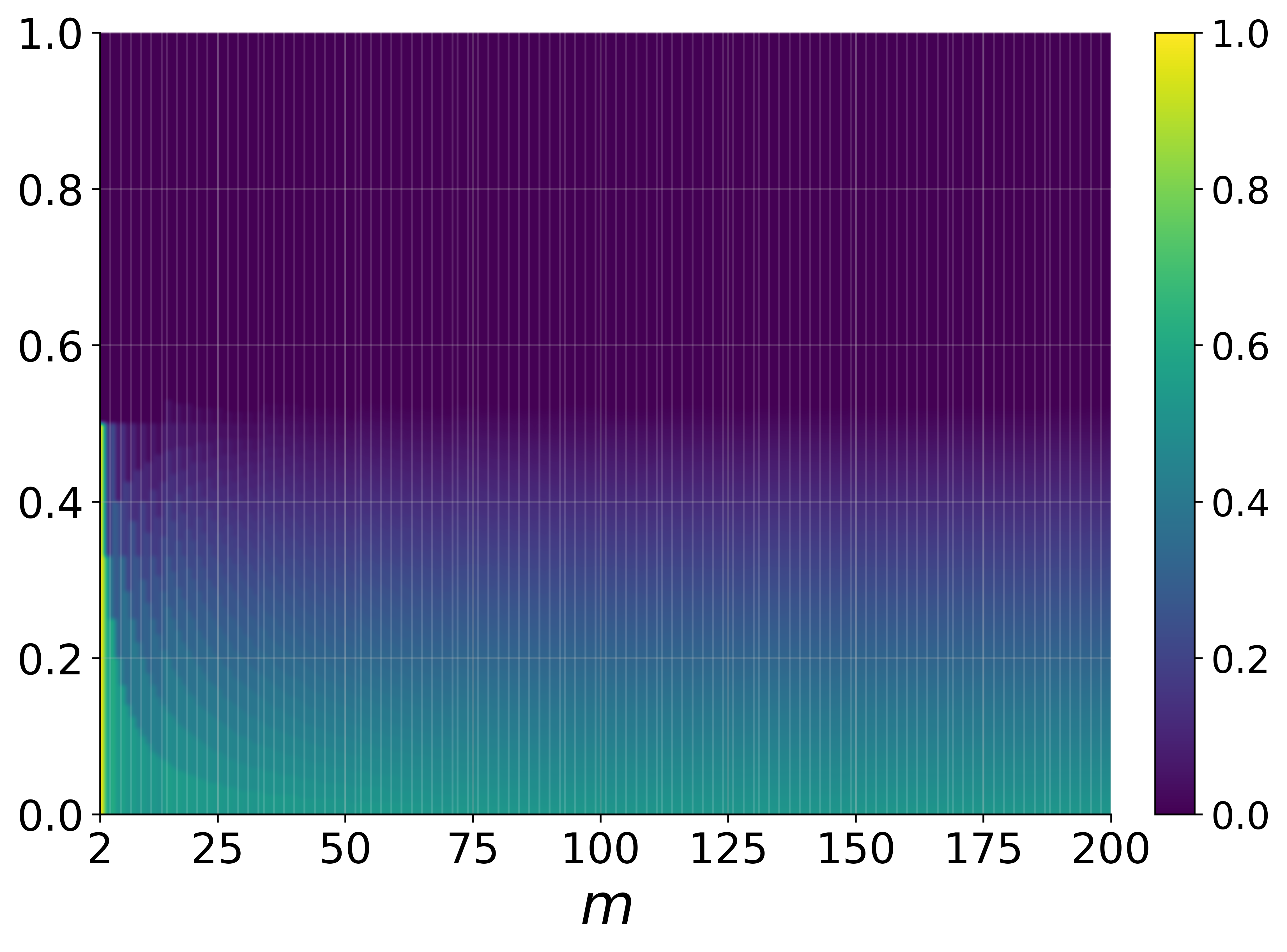}
        \caption{W-opt Distortion.}
        \label{fig:numtypes_dist_Wopt}
    \end{subfigure}

    \caption{Effect of resolution (number of types $m$). Top row: outcomes under the utility-optimal (solid) and welfare-optimal (dashed) policies. Bottom row: audit-mass and distortion heatmaps under the utility-optimal and welfare-optimal policies, respectively.}
    \label{fig:numtypes_8panels}
\end{figure*}

\Cref{fig:numtypes_8panels} shows that both objectives improve quickly as the number of types $m$ increases from very coarse discretizations, and then flatten out. \Cref{fig:numtypes_U} shows that the principal's utility under the utility-optimal policy rises rapidly for small $m$ and changes little once the type space is moderately fine. \Cref{fig:numtypes_W} shows the same pattern for social welfare under the welfare-optimal policy. 
The visible zigzag pattern is due to  discretization: when the grid becomes finer, new intermediate reports become available, which can change the equilibrium reporting pattern. For example, when the discretization is refined from $m=2$ to $m=4$, the lower half of the population is split into two finer types, and type $0$ may now pool to report $1$. This intermediate pattern cannot be represented in the two-type model, where that entire lower half is collapsed into a single type. These newly available report patterns can locally affect utility or welfare, which produces the small ups and downs across adjacent values of $m$. 

\Cref{fig:numtypes_mis,fig:numtypes_audit} show that the U-opt policy is generally stricter, as it sustains a higher audit rate and consequently reduces the mass of misreports relative to the W-opt policy.  Specifically, given the worst-case equilibrium $\mQ$, the misreporting mass is the fraction of misreported agents \(\sum_{i \ne k} q_i \mQ(i,k),\) and the audit rate is the fraction of audited agents\(\sum_{i,k} q_i \mQ(i,k) p_k\). \Cref{fig:numtypes_auditmass_Uopt,fig:numtypes_auditmass_Wopt} show similar patterns from the policy side: under both objectives, audit probabilities $p_k$ are larger for higher reports, while the U-opt policy assigns positive audit probability over a wider range of reports. For the distortion heatmaps, we plot the type-wise distortion \(\sum_k \mQ(i,k)\frac{|i-k|}{m-1}\) which is the normalized misreported distance.  \Cref{fig:numtypes_dist_Uopt,fig:numtypes_dist_Wopt} plot the distortion and show the complementary pattern on the equilibrium side.  The distortion is concentrated in a lower tail of types, and this distorted region is visibly thinner under the U-opt policy than under the W-opt policy.

\Cref{fig:numtypes_8panels} also shows that increasing the number of types mainly refines the discrete approximation of the same underlying environment. Once the resolution is moderately fine, further increases in $m$ have limited effect on utility or welfare, although a finer grid can still introduce new report patterns, which explains the remaining zigzag variation. The main qualitative difference between the two objectives remains the same throughout: the utility-optimal policy audits more aggressively and suppresses more misreporting, whereas the welfare-optimal policy saves audit cost and tolerates more distortion.

\begin{figure}[ht]
    \centering
    \begin{subfigure}{0.5\columnwidth}
        \includegraphics[width=0.8\columnwidth]{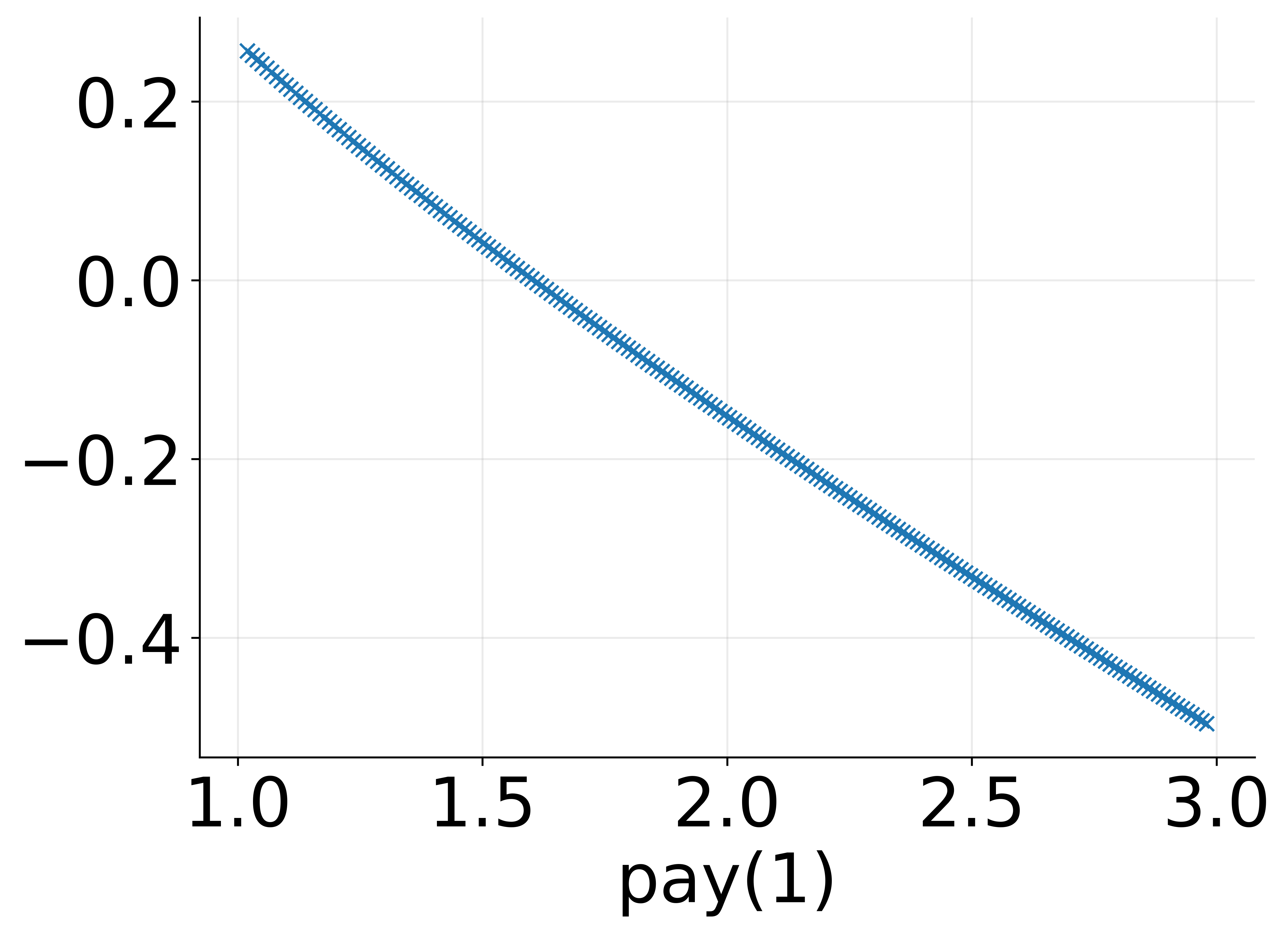}
        \caption{Principal's utility}\label{fig:pay1}
    \end{subfigure}~
    \begin{subfigure}{0.5\columnwidth}
        \includegraphics[width=0.8\columnwidth]{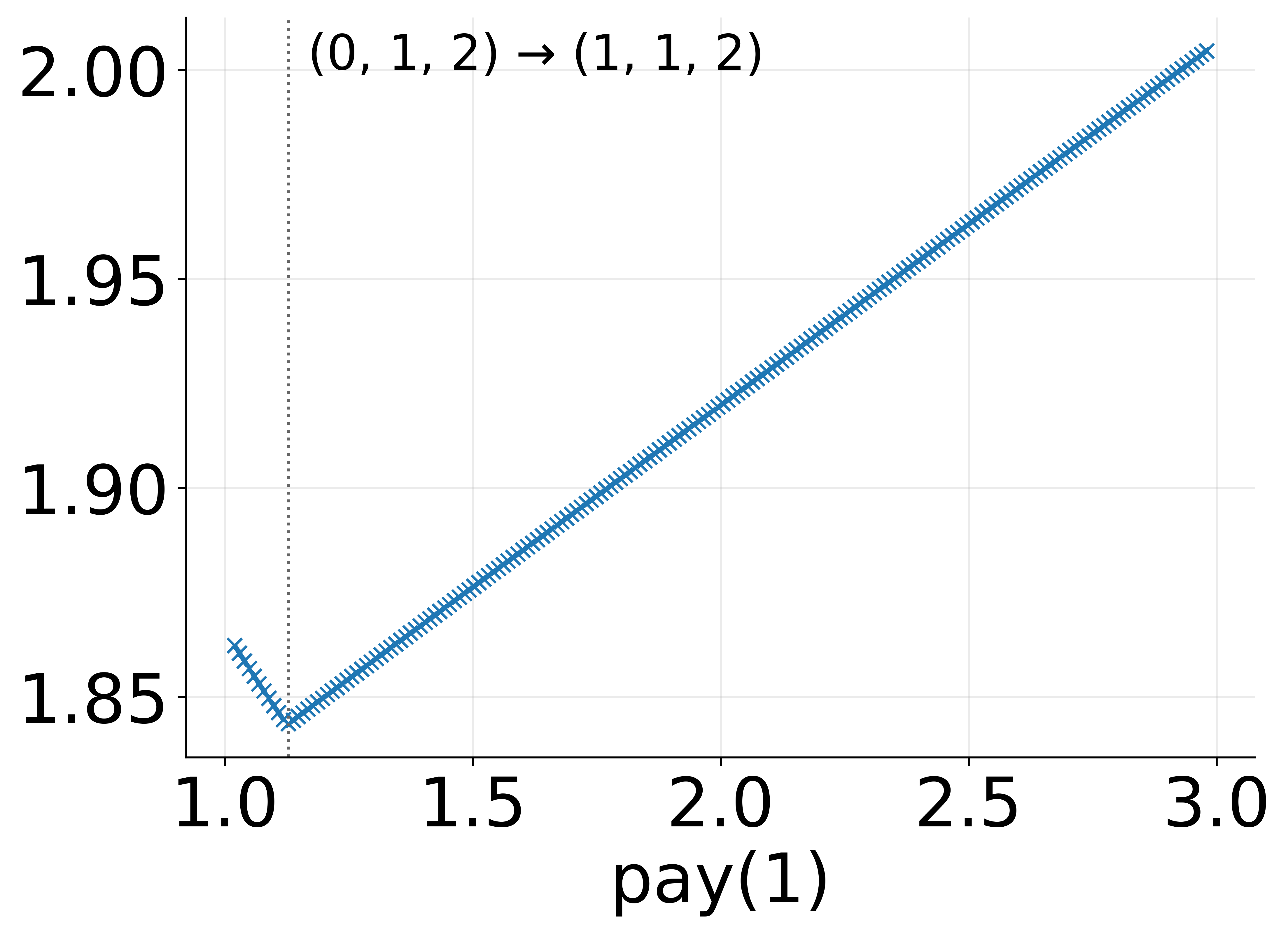}
        \caption{Social Welfare}\label{fig:pay2}
    \end{subfigure}
    \caption{Effect of $\pay$:  There are three types $m = 3$ and change the payment of type $1$ with the following parameters $n = 1$, $\vq = (0.4, 0.3, 0.3)$, $\val = \begin{pmatrix}
0.99 & 0.90 & 0.50 \\
0    & 1.50 & 1.40 \\
0    & 0    & 4.00
\end{pmatrix}$, $\pay(0) = 1, \pay(2) = 3$, $\pen = \pay + 0.5$, and $\lambda = 1$.
}\label{fig:pay}
\end{figure}



\section{Discussion and Future Work}
We provide several optimal and efficient audit policies for utility and welfare maximization under pessimistic equilibrium selection. At the same time, extending our model suggests fruitful directions for future work. First, extending our guarantees to finite agents, noisy or partial verification, and richer penalty structures remains open. Second, we take the classifier or allocation rule as exogenous; jointly designing the predictive model and the audit policy could yield better performance. Finally, it would be interesting to extend the incentive-minimization framework of \citet{estornell2023incentivizing} to non-binary payment outcomes.

\section*{Acknowledgments}
SD is grateful for support from NSF Award 2533162.

\bibliographystyle{plainnat}
\bibliography{references} 

\clearpage
\appendix

\crefalias{section}{appendix}

\section{Proofs in Section~\ref{sec:non-adaptive}}\label{app:non-adaptive}


\subsection{Proofs in Section~\ref{sec:non-adaptive_utility}}
\propnomax*
\begin{proof}[Proof of \cref{prop:nomax}]
Consider a binary type $m = 2$ setting.  First because the higher type never misreports to the lower type $Q_{1,0} = 0$, removing audit on the lower type, $p_0 = 0$ can never hurt the principal's utility, and we only need to consider audit vectors with $\vp = (0,p_1)$.  Given $\vp = (0,p_1)$, if $p_1>\rho_1(U_0)$ all agents are truthful and the principal's utility 
$$V_{tru}(p_1) = q_0 (\val(0,0)-\pay(0))+q_1 (\val(1,1)-\pay(1)-p_1\lambda).$$
If $p_1<\rho_1(U_0)$, the lower type agents always misreport as the higher type, and the utility 
$$V_{lie}(p_1) = q_0 (\val(0,1)-\pay(1)+p_1(\pen(1)-\lambda))+q_1 (\val(1,1)-\pay(1)-p_1\lambda).$$
Finally, if $p_1 = \rho_1(U_0)$, the lower type agents are indifferent between the higher or truth-telling and the principal's worst utility is $\min(V_{tru}(p_1), V_{lie}(p_1))$.  
Therefore, the principal's utility which can be discontinuous at $p_1 = \rho_1(U_0)$.  We provide one such example in~\Cref{fig:nomax}.
\end{proof}
\lembr*
\begin{proof}[Proof of \cref{lem:br}]
First, for all $i> \truthidx$ and $k\in [m]$ with $k\neq i$, because the truthful payment is higher than the highest misreport value $U_i>U_\truthidx\ge \hat{u}\ge \hat{U}_k$, truth-telling is the only best response $A_i = \{i\}$.  Second, for all $i<\truthidx$, and $k\in \hat{A}$, because $ U_i< \hat{u} = \hat{U}_k\le U_k$, type $i$ agents prefer $k$ over truth-telling and $i<k$.  Additionally, for all $l\in [m]$ with $l\neq i$ and $l\notin \hat{A}$, $U_{i,l} = \hat{U}_l<\hat{u} = U_{i,k}$.  Therefore, $A_i(\vp) = \hat{A}$.  The final case $\truthidx$ follows similarly.
\end{proof}

\lemequalwelldefined*
\begin{proof}[Proof of \cref{lem:equal_welldefined}]
To prove $\vp = \equal(u, A,\epsilon)\in [0,1]^m$, we only need to show that every coordinate $k\ge \iota$ is nonnegative because $\rho(u)\le 1$ for all $u\ge 0$. If $k\in A$, $p_k = \rho_k(u) = \frac{\pay(k)-u}{\pen(k)}\ge 0$ because $\pay(k)\ge\pay(\iota)\ge  u$.  If $k\ge \iota$ but $k\notin \hat{A}$, $p_k = \rho_k(u-\epsilon) = \frac{\pay(k)-u+\epsilon}{\pen(k)}\in[0,1]$ since $0<\epsilon< u$.  

For the second part, by \cref{eq:rho}, for all $k\ge \iota$, $\hat{U}_k(\vp) = u$ if $k\in \hat{A}$ and $\hat{U}_k(\vp) = u-\epsilon$ otherwise.  For all $k<\iota$, $\hat{U}_k\le U_k< u$.  Thus, $\hat{A}(\vp) = \argmax_k \{\hat{U}_k(\vp)\} = A$ and $\hat{u}(\vp) = \max_k \hat{U}_k(\vp) = u$.  $\truthidx(\vp) = \iota$ and $\vp$ is strict because $\hat{u}(\vp) = u$.
\end{proof}

\lemequalapprox*
\begin{proof}[Proof of \cref{lem:equal_approx}]
We consider three cases: strict audit policy with $\hat{u}(\vp)>0$, strict audit policy with $\hat{u}(\vp)\le 0$, and non-strict audit policy.  The idea of the first case is discussed in the main text. The other two cases require some modifications to ensure that the resulting policy is a strict equalized policy.

Given any strict audit policy $\vp\in [0,1]^m$ with $\hat{A}(\vp) = A$, $\hat{u}(\vp) = \hat{u}>0$, and $\truthidx(\vp) = \iota$ defined in \cref{lem:br}, we choose $\vp':= \equal(\hat{u}, \kappa, \epsilon)$ for an arbitrary $\kappa\in A$ and $0<\epsilon<\hat{u}$.  By \cref{lem:equal_welldefined}, $\vp'$ is strict and satisfies $\hat{A}(\vp') = \{\kappa\}\subseteq \hat{A}(\vp), \hat{u}(\vp') = \hat{u}$, and $\truthidx(\vp') = \iota$.  Therefore, the set of possible equilibria of the equalized one is a subset of the original policy's, $Eqi(\vp')\subseteq Eqi(\vp)$.  We now upper bound \cref{eq:equal_approx0} by upper bounding each term,
\begin{equation}\label{eq:equal_approx1}
    D_{k}:=\sum_{i}q_iQ_{i,k}(\pen(i,k)-\lambda)(p_k-p_k')
\end{equation}
For all $k\neq \kappa$ and $k< \iota$, because no one misreports or reports as $k$, $D_k = 0$.  
For all $k\neq \kappa$ and $k\ge \iota$, because no one misreports as $k$, $Q_{i,k} = 0$ for all $i\neq k$,
$D_k = q_k\lambda(p_k'-p_k).$  
Additionally, because 
\begin{align*}
    p_k'-p_k =& \rho_k(\hat{u}-\epsilon)-\rho_k(\hat{U}_k(\vp))\\
    =&\rho_k(\hat{u})+\frac{\epsilon}{\pen(k)}-\rho_k(\hat{U}_k(\vp))\\
    \le&\rho_k(\hat{u})+\frac{\epsilon}{\pen(k)}-\rho_k(\hat{u})\tag{$\hat{U}_k(\vp)\le \hat{u}$ and $\rho_k$ is decreasing}\\
    =&\frac{\epsilon}{\pen(k)}
\end{align*}
by \cref{eq:assum3} we have
$$D_k\le q_k\frac{\lambda\epsilon}{\pen(k)}\le q_k \epsilon.$$
Finally, because $p_\kappa' = \rho_\kappa(\hat{u}) = p_\kappa$, $D_\kappa = 0$.
Therefore, $V_\lambda(\vp, \mQ)-V_\lambda(\vp', \mQ) = n\sum_k D_k \le n\sum_{k\in [m]} q_k\epsilon = n\epsilon$.

If $\hat{u}(\vp)<0$, everyone is truthful $A_i(\vp) = \{i\}$.  We set $\vp':= \equal(\frac{U_0}{2}, \kappa, \epsilon)$ with $0<\epsilon<\frac{U_0}{2}$.  By \cref{lem:equal_welldefined}, $\truthidx(\vp') = 0$ and everyone is also truthful.  Hence, $Eqi(\vp') = Eqi(\vp) = \{\mathbb{I}\}$.  Because $p'_k\le \rho_k(0)<\rho_k(\hat{u}(\vp)) = p_k$, $D_k = q_k\lambda(p_k'-p_k)\le 0$ for all $k$. Therefore, $V_\lambda(\vp, \mQ)-V_\lambda(\vp', \mQ)\le 0$

For any non-strict audit policy $\vp\in [0,1]^m$ with $\hat{A}(\vp) = A$, $\hat{u}(\vp) = \hat{u}$, and $\truthidx(\vp) = \iota$, we choose $\vp':= \equal(\hat{u}-\frac{\epsilon}{2}, \kappa, \frac{\epsilon}{2})$ for an arbitrary $\kappa\in A$ and $0<\epsilon<\frac{\gamma}{2}$.  Because $U_\iota = \hat{u}>\hat{u}-\frac{\epsilon}{2}$, $\truthidx(\vp') = \iota$ and $\vp'$ is strict as $\epsilon$ is small enough.  Moreover, $\hat{A}(\vp') = \{\kappa\}$ and $\hat{u}(\vp') = \hat{u}-\frac{\epsilon}{2}$, so $Eqi(\vp')\subseteq Eqi(\vp)$.

Now we upper bound \cref{eq:equal_approx1}. 
For all $k<\iota$, $D_k = 0$ since no one misreports or reports as $k$. Second, if $k\neq\kappa$ and $i\ge \iota$, $D_{k} = q_k\lambda(p_k'-p_k) = q_k \lambda(\rho_k(\hat{u}-\epsilon)-\rho_k(\hat{U}_k(\vp)) \le q_k\epsilon$.  For $\kappa$, because $p_\kappa' = \rho_\kappa(\hat{u}-\frac{\epsilon}{2})$ and $p_\kappa = \rho_\kappa(\hat{u})$, we have $|p_\kappa'-p_\kappa|\le \frac{\epsilon}{\pen(\kappa)}$ and
$D_{\kappa} = q_\kappa \lambda (p_\kappa'-p_\kappa)+\sum_{i<\iota}q_i (\pen(\kappa)-\lambda)(p_\kappa-p_\kappa')\le \left(q_\kappa+\sum_{i<\iota}q_i\right)\epsilon.$
This completes the proof.
\end{proof}

\lemcriticalapprox*
\begin{proof}[Proof of \cref{lem:critical_approx}]
Similar to \cref{lem:equal_approx}, we prove they have the same set of equilibria, and use \cref{eq:equal_approx0} to show that the principal's utility is affine in $u$ under the same $\mQ$ so that extreme values are at the boundary. 

First, because $\vp$ is strict, $\truthidx(\vp) = \iota$, and $\epsilon'<\frac{\gamma}{2}$, $\hat{u}(\vp) = u, \hat{u}(\vp^-) = U_\iota-\epsilon'$ and $\hat{u}(\vp^+) = U_{\iota-1}+\epsilon'$ are in the interval $(U_{\iota-1}, U_{\iota})$.  Thus, by \cref{lem:equal_welldefined} $$A_i(\vp) = A_i(\vp^-) = A_i(\vp^+) = \begin{cases}
    \{i\}\text{ if }i\ge \iota\\
    \{\kappa\}\text{ otherwise}
\end{cases},$$
and $Eqi(\vp) = Eqi(\vp^-) = Eqi(\vp^+)$.

Second, we compute the difference of the principal's utility under $\mQ$.  Let $a_k:= n\sum_i q_iQ_{i,k}(\pen(i,k)-\lambda)$ for all $k\in [m]$ which is a constant independent of $\vp$ and $\delta^- = U_\iota-\epsilon'-u$.  By \cref{eq:equal_approx0}, 
\begin{align*}
    &V_\lambda(\vp^-, \mQ)-V_\lambda(\vp, \mQ)\\
    =& \sum_k a_k (p_k^--p_k)\\
    =& \left(\sum_{k<\iota} a_k (p_k^--p_k)\right)+a_\kappa (p_\kappa^--p_\kappa)+\left(\sum_{k>\iota: k\neq \kappa}a_k (p_k^--p_k)\right)\\
    =& a_\kappa (\rho_\kappa(U_\iota-\epsilon')-\rho_\kappa(u))+\sum_{k>\iota: k\neq \kappa}a_k (\rho_k(U_\iota-\epsilon'-\epsilon)-\rho_k(u-\epsilon))\tag{$p_k^- = p_k = 0$ for $k<\iota$}\\
    =& a_\kappa \frac{-\delta^-}{\pen(\kappa)}+\sum_{k>\iota: k\neq \kappa}a_k \frac{-\delta^-}{\pen(k)}\tag{$\rho_k(u+\delta)-\rho_k(u) = \frac{-\delta}{\pen(k)}$ for all $u,k,\delta$}\\
    =& \left(\sum_{k>\iota}\frac{-a_k}{\pen(k)}\right)\delta^-
\end{align*}
Therefore, let $\alpha = \left(\sum_{k>\iota}\frac{-a_k}{\pen(k)}\right)$.  We have
$$V_\lambda(\vp^-, \mQ)-V_\lambda(\vp, \mQ) = \alpha(U_\iota-\epsilon'-u)\text{ and }V_\lambda(\vp^+, \mQ)-V_\lambda(\vp, \mQ) = \alpha(U_{\iota-1}+\epsilon'-u).$$
If $\alpha\ge 0$, because $u<U_\iota$, $V_\lambda(\vp^-, \mQ)-V_\lambda(\vp, \mQ) \ge \alpha(U_\iota-\epsilon'-U_\iota) = -\alpha\epsilon'$.  If $\alpha<0$, because $u>U_{\iota-1}$, $V_\lambda(\vp^+, \mQ)-V_\lambda(\vp, \mQ) = |\alpha|(u-U_{\iota-1}-\epsilon')\ge -|\alpha| \epsilon'$.  Combining these two yields
$$V_\lambda(\vp)\le \max\{V_\lambda(\vp^-), V_\lambda(\vp^+)\}+|\alpha|\epsilon'.$$
Now we bound $|\alpha|\le n$.  
\begin{align*}
    \alpha =& \left(\sum_{k>\iota}\frac{-a_k}{\pen(k)}\right)\\
    =&\sum_{k>\iota}\frac{-1}{\pen(k)}n\sum_i q_iQ_{i,k}(\pen(i,k)-\lambda)\\
    =& n\sum_{k>\iota}\sum_i q_iQ_{i,k}\frac{-(\pen(i,k)-\lambda)}{\pen(k)}
\end{align*}
Because $|\pen(i,k)-\lambda|\le \max \{\lambda, \pen(k)-\lambda\}\le \pen(k)$, $|\alpha|\le n\sum_{k>\iota}\sum_i q_iQ_{i,k} \le n$.
\end{proof}

\begin{algorithm}[ht]
\caption{SuccinctSearch (fast evaluation over critical audit vectors)}
\label{alg:fast_search}
\begin{algorithmic}[1]
\Require $\epsilon>0$, $(n,m,q,\val,\pay,\pen)$, and $\lambda\ge 0$
\Ensure Audit vector $p^*$ maximizing $V_\lambda(p)$ over all critical audit vectors

\State $(F,H,S^{\mathrm{pay}},S^{\mathrm{inv}},M)\gets \Call{PrecomputeTables}{}$
\State $V_{\max}\gets -\infty$;\quad $(s^*,i^*,k^*)\gets (+,0,0)$

\For{$i\in [m]$}
    \For{$k=i$ to $m-1$}
        \For{$s\in\{+,-\}$}
            \State $v \gets \Call{ComputeValFast}{i,k,s,F,H,S^{\mathrm{pay}},S^{\mathrm{inv}},M}$
            \If{$v>V_{\max}$}
                \State $V_{\max}\gets v$;\quad $(s^*,i^*,k^*)\gets (s,i,k)$
            \EndIf
        \EndFor
    \EndFor
\EndFor

\State $\vp^* \gets \equal^{s^*}(i^*,k^*,\epsilon)$
\State \Return $p^*$

\Function{PrecomputeTables}{}
    \For{$i=0$ to $m$}
        \State $F[i] \gets \sum_{j<i} q_j$
        \State $H[i] \gets \sum_{j\ge i} q_j\cdot(\val(j,j)-\pay(j))$
        \State $S^{\mathrm{pay}}[i] \gets \sum_{j\ge i} q_j\cdot\frac{\pay(j)}{\pen(j)}$
        \State $S^{\mathrm{inv}}[i] \gets \sum_{j\ge i} q_j\cdot\frac{1}{\pen(j)}$
    \EndFor
    \For{$k=0$ to $m-1$}
        \State $M[0,k]\gets 0$
        \For{$i=1$ to $m$}
            \State $M[i,k]\gets M[i-1,k]+q_{i-1}\cdot \val(i-1,k)$
            \Comment{$M[i,k]=\sum_{j<i}q_j\val(j,k)$}
        \EndFor
    \EndFor
    \State \Return $(F,H,S^{\mathrm{pay}},S^{\mathrm{inv}},M)$
\EndFunction

\Function{ComputeValFast}{$i,k,s,F,H,S^{\mathrm{pay}},S^{\mathrm{inv}},M$}
    \Comment{Returns $V_\lambda(p)/n$ for the critical vector indexed by $(i,k,s)$; here $\hat{A}=\{k\}$.}
    \If{$s=+$}
        \State $c \gets \pay(i-1)$ \Comment{$\pay(-1):=0$}
    \Else
        \State $c \gets \pay(i)-2\epsilon$
    \EndIf
    \State $\hat{u} \gets c+\epsilon$
    \State $p_k \gets \dfrac{\pay(k)-\hat{u}}{\pen(k)}$
    \State $A_{\mathrm{truth}} \gets \Bigl(S^{\mathrm{pay}}[i]-c\cdot S^{\mathrm{inv}}[i]\Bigr)-q_k\cdot\dfrac{\epsilon}{\pen(k)}$
    \Comment{$A_{\mathrm{truth}}=\sum_{j\ge i}q_jp_j$ over truthful types}
    \State $v \gets M[i,k]-\pay(k)\cdot F[i]+(\pen(k)-\lambda)\,p_k\,F[i]+H[i]-\lambda\cdot A_{\mathrm{truth}}$
    \State \Return $v$
\EndFunction
\end{algorithmic}
\end{algorithm}

\subsection{Proofs in Section~\ref{sec:noregret}}



\lemcriticalnoregret*
\begin{proof}[Proof of \cref{lem:critical_noregret}]
Note that in the proof of \cref{lem:equal_approx}, given any audit vector $\vp\in [0,1]^m$, there exists a strict equalized audit vector $\vp' = \equal(u, \kappa, \epsilon)$ so that
$V_\lambda(\vp; \vq^t)\le V_\lambda(\vp'; \vq^t)+n\epsilon$ for all $\vq^t$, and hence
$$V_\lambda(\vp; \vec{\vq})\le V_\lambda(\vp'; \vec{\vq})+n\epsilon T.$$
By \cref{lem:critical_approx}, the principal's accumulative utility is still affine in $u$, so 
$$V_\lambda(\vp'; \vec{\vq})\le \max \{V_\lambda(\vp^+; \vec{\vq}), V_\lambda(\vp^-; \vec{\vq})\}+n\epsilon T$$
which completes the proof.
\end{proof}

\thmnoregret*
\begin{proof}[Proof of \cref{thm:noregret}]
    Let $V^t(\vp) = V_\lambda(\vp; \vq^t)$.  \Cref{alg:exp3} is basically EXP3~\cite{lattimore2020bandit} and uses $L$ to normalize reward to $[0,1]$, so the regret against best arm (critical vector) is 
    \begin{equation}\label{eq:exp31}
        \begin{aligned}
        &\max_{\sigma\in \Sigma}\sum_t V^t(\equal(\sigma, \epsilon_t))-\E_\mathcal{A}[\sum_t V^t(\vp^t)]\\
        \le &4nL\sqrt{T2m^2\log 2m^2} = O(n\sqrt{Tm^2\log m}).\end{aligned}
    \end{equation} 
    Moreover, by \cref{eq:equal_approx0}, for all $\epsilon>0$ and $\sigma\in \Sigma$, 
    \begin{equation}\label{eq:exp32}
       \left| \sum_t V^t(\equal(\sigma,\epsilon))-V^t(\equal(\sigma,\epsilon_t))\right| = O(n\sum_t \epsilon_t+nT\epsilon)
    \end{equation}
    where $\sum_t \epsilon_t\le 2\epsilon_0$ is a constant.
    Therefore, with \cref{lem:critical_noregret}, \cref{eq:exp31,eq:exp32}, and $\epsilon\to 0$, we have 
     $$ \sup_{\vp}\sum_t V^t(\vp)-\E_\mathcal{A}[\sum_t V^t(\vp^t)]  = O(n\sqrt{Tm^2\log m}+2n\epsilon_0)
     $$
     which completes the proof 
\end{proof}
\subsection{Proofs in Sections~\ref{sec:non-adaptive_welfare} and \ref{sec:non-adaptive_penality}}
\propmonotone*
\begin{proof}[Proof of \cref{prop:monotone}]
First note that increasing audit cost can only decrease the principal's utility and social welfare, 
   $V_\lambda(\vp; \pen) \leq V_{\lambda'}(\vp; \pen)$ and $W_\lambda(\vp; \pen)\le W_{\lambda'}(\vp, \pen)$.  To complete the proof, it is sufficient to prove there exists some $\vp'$ so that 
   $V_\lambda(\vp; \pen) \leq V_{\lambda}(\vp'; \pen')$ and $W_\lambda(\vp; \pen)\le W_{\lambda}(\vp', \pen')$.

Given $\vp, \pen$ and $\pen'$, we define a new policy $\vp'$ so that for all $k\in [m]$
$$p_k' = \frac{\pen(k)}{\pen'(k)} p_k$$
where $0\le p_k'\le p_k$ for all $k$ because $0<\pen(k)\le\pen'(k)$. Note that any type-$i$ agent's  utility of reporting as $k\neq i$ under the new penalty and audit policy $\vp'$ is 
$\pay(k)-p_k'\pen'(k)= \pay(k)-\frac{\pen(k)}{\pen'(k)} p_k\pen'(k) = \pay(k)-p_k\pen(k)$ which is equal to the utility under the original penalty and audit policy $\vp$, so $Eqi(\vp, \pen) = Eqi(\vp', \pen')$.  Finally because $p_k'\le p_k$ for all $k$, the difference of utility~\cref{eq:util_principal} is 
\begin{align*}
    &V_\lambda\bigl(\vp,\pen\bigr)-V_\lambda\bigl(\vp',\pen'\bigr)\\
    =& n\sum_{i,k} q_iQ_{i,k} [p_k(\pen(i,k)-\lambda)-p_k'(\pen'(i,k)-\lambda)]\\
    =& -n\sum_{i,k} q_iQ_{i,k} \lambda(p_k-p_k')\le 0.
\end{align*}

Similarly, the difference of welfare in~\cref{eq:welfare_max} is 
\begin{align*}
    W_\lambda(\vp, \pen)-W_\lambda(\vp',\pen') =& -n\sum_{i,k} q_iQ_{i,k} [p_k\lambda-p_k'\lambda] \le 0.
\end{align*}
This shows that raising each $\pen(k)$ can only maintain or increase the principal’s worst-case utility.
\end{proof}

\section{Proofs of Section~\ref{sec:adaptive}}\label{app:adaptive}
\lemdict*
\begin{proof}[Proof of \cref{lem:dict}]
We consider three cases as in~\cref{eq:dict}.  By definition, if $\mQ\in Eqi(\vp^*)$ and $\hat{\vq} = \hat{\vq}^*$, $\mQ\in Eqi(\pi_{dict})$.  Second, if $\mQ$ has $\hat{\vq}\neq \hat{\vq}^*$ and $\hat{\vq}\neq \vq$, $\mQ$ is not truthful.  Since $\pi_{dict}(\hat{\vq}) = \mathbf{1}$ audits everyone, any misreporting agent would deviate to truthful reporting by \cref{eq:assum1}, and $\mQ$ cannot be an equilibrium.  Finally, if $\mQ$ has $\hat{\vq}\neq \hat{\vq}^*$ and $\hat{\vq} = \vq$, not all agents (mis)report as the highest type because $\vq$ has full support on $[m]$.   Because $\pi_{dict}(\hat{\vq}) = \mathbf{0}$ does not audit anyone, all agents would like to report as the highest possible type, so $\mQ$ cannot be an equilibrium.
\end{proof}

\lembesteqi*
\begin{proof}[Proof of \cref{lem:best_eqi}]
We first prove for strict audit vectors with $\hat{u}(\vp)>0 $ and consider the single-minded strategy $\mQ'$ with  $\iota = \truthidx(\vp)$ and $\kappa = \min_i\hat{A}(\vp)$.  Note that if $\hat{u}(\vp)\le 0$, $Eqi(\vp) = \{\mathbb{I}\}$ is also single-minded.  By~\cref{lem:br}, $\mQ'$ is an equilibrium under $\vp$.  Given $\vp$, let
$V_{i,k} := \val(i,k)-\pay(k)+p_k\left(\pen(i,k)-\lambda\right)$ for all $i,k$ be the contribution of an agent of type $i$ reporting as $k$ to the principal's utility. For any $\mQ\in Eqi(\vp)$, as
$$V_\lambda(\vp,\mQ) = \sum_{i}q_i\sum_{k\in A_i(\vp)} Q_{i,k} V_{i,k},$$ the utility depends on convex combinations of $V_{i,k}$ over $k\in A_i$ for each $i$.  Additionally, the utility of single-minded strategy with $\iota, \kappa$ is 
    \begin{equation}\label{eq:best_eqi2}
        V_\lambda(\vp,\mQ') = \sum_{i<\iota}q_i V_{i,\kappa}+\sum_{i\ge \iota}q_i V_{i,i}
    \end{equation}
    For all $i<\truthidx$ and $k<l\in A_i = \hat{A}(\vp)$, 
    \begin{align*}
        &V_{i,k}-V_{i,l}\\
        =& \left(\val(i,k)-\pay(k)+p_k\left(\pen(k)-\lambda\right)\right)-\left(\val(i,l)-\pay(l)+p_l\left(\pen(l)-\lambda\right)\right)\tag{$i\neq k,l$}\\
        =& \val(i,k)-\val(i,l)-(p_k-p_l)\lambda-[\pay(k)-p_k\pen(k)-\pay(l)+p_l\pen(l)]\\
        =& \val(i,k)-\val(i,l)-(p_k-p_l)\lambda\tag{$U_{i,k}= U_{i,l}$}
    \end{align*}
    First, by \cref{eq:assum2} $\val(i,k)-\val(i,l)\ge 0$.  Now we show $p_l-p_k\ge 0$.  Because $\frac{\pay(l)-u}{\pay(k)-u}\ge \frac{\pay(l)}{\pay(k)}>\frac{\pen(l)}{\pen(k)}$ for all $0< u\le \pay(k)$ by \cref{eq:assum4} and $\pay(l)>\pay(k)$, $\rho_l(u)>\rho_k(u)$.  Hence, 
    \begin{equation}\label{eq:best_eqi0}
        p_l = \rho_l(\hat{u}(\vp))> p_k = \rho_k(\hat{u}(\vp))\text{ for all } k\le l\in \hat{A}(\vp)
    \end{equation}
    As a result, for all $l\in \hat{A}(\vp)$
    \begin{equation}\label{eq:best_eqi1}
        V_{i,\kappa}\ge V_{i,l}.
    \end{equation}
    Therefore, 
    \begin{align*}
        &V_\lambda(\vp,\mQ)\\
        =& \sum_{i<\iota}\sum_k q_i Q_{i,k}V_{i,k}+\sum_{i\ge\iota}\sum_k q_i Q_{i,k}V_{i,k}\\
        =&  \sum_{i<\iota} \sum_k q_i Q_{i,k}V_{i,k}+\sum_{i\ge\iota} q_i V_{i,i}\tag{type $i\ge \iota$ is truthful}\\
        \le& \sum_{i<\iota} \sum_k q_i Q_{i,k}V_{i,\kappa}+\sum_{i\ge\iota} q_i V_{i,i}\tag{by \cref{eq:best_eqi1}}\\
        =& \sum_{i<\iota} q_i V_{i,\kappa}+\sum_{i\ge\iota} q_i V_{i,i} = V_\lambda(\vp,\mQ').\tag{by \cref{eq:best_eqi2}}
    \end{align*}
If $\vp$ is not strict and $\truthidx(\vp)$ is indifferent between truth-telling and misreporting as $\kappa = \min_i\hat{A}(\vp)$, the above argument implies that the best equilibrium will be either $\truthidx$ reporting as $\kappa$ or truthfully, which are both single-minded equilibria.  
\end{proof}

\thmoptadaptivecostly*
\begin{proof}[Proof of \cref{thm:opt_adaptive_costly}]
It is sufficient to find an algorithm that maximizes \cref{eq:adaptive_up},
$$\sup_{\vp, \mQ\in Eqi(\vp)} V_\lambda(\vp, \mQ) = \max_{\mQ\in Eqi}\sup_{\vp: \mQ\in Eqi(\vp)}V_\lambda(\vp,\mQ)$$
By \cref{lem:best_eqi}, we can go through all single-minded report strategies and find an audit policy $\vp$ that supports such strategy as an equilibrium and optimizes the utility.  First, as each single-minded strategy has two parameters $i, k$, there are $O(m^2)$ candidates.  For any single-minded strategy $\mQ$ with $i,k$ and $\vp$ with $\mQ\in Eqi(\vp)$, there exists a strict equalized policy $\vp' = \equal(i,k,\epsilon)$ so that $\{\mQ\} = Eqi(\vp')$ by \cref{lem:equal_welldefined}. By \cref{lem:equal_approx}
$V_\lambda(\vp', \mQ) \ge V_\lambda(\vp, \mQ)-n\epsilon$ and, by \cref{lem:critical_approx}, there exists a critical policy $\vp''\in \{\equal^-(i,k,\epsilon),\equal^+(i,k,\epsilon)\}$ so that 
$V_\lambda(\vp'',\mQ)\ge V_\lambda(\vp',\mQ)-n\epsilon$.  
Therefore, for any single-minded strategy $\mQ$, there exists a critical policy $\vp''$---$\equal^-(i,k,\epsilon)$ or $\equal^+(i,k,\epsilon)$---so that
$$V_\lambda(\vp'',\mQ)\ge \sup_{\vp:\mQ\in Eqi(\mQ)}V_\lambda(\vp,\mQ)-2n\epsilon.$$
Because each critical policy only has one equilibrium which is single-minded, \FVal returns $V_\lambda(\vp'') = V_\lambda(\vp'', \mQ)$ and thus, \cref{alg:water_filling_algorithm} can be seen as going through all single-minded equilibria and can return $\vp^*$ and $\mQ^*$ so that
$$V_\lambda(\vp^*,\mQ^*)\ge \max_{\mQ}\sup_{\vp:\mQ\in Eqi(\mQ)}V_\lambda(\vp,\mQ)-2n\epsilon.$$

Finally, by \cref{lem:dict}, the dictator audit strategy $\pi^*$ with $\vp^*$ and $\hat{\vq}^*$ associated with $\mQ^*$ achieves 
\begin{equation}\label{opt_adaptive_costly0}
    V_\lambda(\pi^*) = \min_{\mQ\in Eqi(\pi^*)}V_\lambda(\vp^*,\mQ)\ge \min_{\mQ\in Eqi(\vp^*)}V_\lambda(\vp^*,\mQ) = V_\lambda(\vp^*,\mQ^*)
\end{equation}

The upper bound $\sup_{\vp, \mQ\in Eqi(\vp)} V_\lambda(\vp, \mQ) \le \max_{\mQ\in Eqi}\sup_{\vp: \mQ\in Eqi(\vp)}V_\lambda(\vp,\mQ)$ is trivial. The above argument provides a dictator strategy that achieves $\sup_{\vp,\mQ: \mQ\in Eqi(\vp)}V_\lambda(\vp,\mQ)-2n\epsilon$ for all $\epsilon>0$ which completes the other direction.
\end{proof}

\section{Proofs in Section~\ref{sec:adaptive_budgeted}}\label{app:budget}

\lemsmallbudget*
\begin{proof}[Proof of \cref{lem:small_budget}]
Given $\pi\in \Pi_B$, let $\vp$ be the output of $\pi$ on $\hat{\vq} = (0,\dots,0,1)$ which is the report distribution when all report as $m-1$.  We show that all reporting to $m-1$ is an equilibrium.  As $\pi\in \Pi_B$ and $\hat{q}_{m-1} = 1$, $p_{m-1}\le \frac{B}{n \hat{q}_{m-1}} = \frac{B}{n}$.  Hence, for all $k<m-1$
\begin{align*}
        &\hat{U}_{m-1}-{U}_{k}\\
        =& U_{m-1}-p_{m-1}\pen(m-1)-U_{k}\\
        \ge& U_{m-1}-U_{k}-\frac{B}{n\hat{q}_{m-1}}\pen(m-1)\tag{$p_{m-1}\le \frac{B}{n \hat{q}_{m-1}}$}\\
        \ge& U_{m-1}-U_{k}-(U_{m-1}-U_{m-2})\tag{$\frac{B}{n}\le  \frac{U_{m-1}-U_{m-2}}{\pen(m-1)}$}\\
        =& U_{m-2}-U_k\ge 0\tag{$\hat{q}_{m-1}\le 1$}
\end{align*}
Thus, all prefer reporting as $m-1$ to truth-telling.  Since $\hat{U}_k\le U_k$, $m-1\in A_i$ for all $i\in [m]$, and all reporting to $m-1$ is an equilibrium.

The upper bound is trivial, because $p_{m-1}\le \frac{B}{n}$.
\end{proof}

\lemdictbudgeted*
\begin{proof}[Proof of \cref{lem:dict_budgeted}]
We consider three types of report strategies.  If $\mQ\in Eqi(\vp^*)$ and $\hat{\vq} = \hat{\vq}^*$, $\mQ\in Eqi(\pi_{dict})$.  Second, if $\mQ$ has $\hat{\vq}\neq \hat{\vq}^*$ and $\hat{q}_{m-1}>q_{m-1}$, some agent misreports as the highest type $m-1$.  Similar to \cref{lem:small_budget}, we can show that reporting as the second highest type has a higher utility than reporting as $m-1$.  If $p_{m-1} = \frac{B}{n\hat{q}_{m-1}}$, 
    \begin{align*}
        &\hat{U}_{m-1}-\hat{U}_{m-2}\\
        =& U_{m-1}-p_{m-1}\pen(m-1)-U_{m-2}\tag{$p_{m-2} = 0$}\\
        <& U_{m-1}-U_{m-2}-\frac{n\beta}{n\hat{q}_{m-1}}\pen(m-1)\\
        =& U_{m-1}-U_{m-2}-\frac{1}{\hat{q}_{m-1}}\pen(m-1)\frac{U_{m-1}-U_{m-2}}{\pen(m-1)}\\
        =& (U_{m-1}-U_{m-2})(1-1/\hat{q}_{m-1})\le 0.
    \end{align*}  On the other hand, if $p_{m-1} = 1$, $\hat{U}_{m-1}-\hat{U}_{m-2} < U_{m-1}-U_{m-2}-\pen(m-1)\le 0$ by \cref{eq:assum1}.  
    Similarly, $\hat{U}_{m-1}<U_{m-2}$, so $\mQ$ cannot be an equilibrium.

For the final case, not everyone reports $m-1$, and they will deviate to reporting $m-1$, which is a contradiction.

Finally, because the latter two cases always satisfy the budget constraints, $\pi_{dict}\in \Pi_B$ if $\vp^*\in \Pi_B(\hat{\vq}^*)$.
\end{proof}

\lembesteqibudgeted*
\begin{proof}[Proof of \cref{lem:best_eqi_budgeted}]

By \cref{lem:br,lem:equal_welldefined}, $\mQ'$ is an equilibrium under $\vp'$.  For the budget constraint, we have
\begin{align*}
    &C(\vp,\mQ) = n\sum_{i,k}q_i Q_{i,k} p_k\\
    =& n\sum_{i\ge \iota} q_i p_i+n\sum_{i< \iota}\sum_{k\in \hat{A}}q_i Q_{i,k} p_k\\
    \ge& n\sum_{i\ge \iota} q_i p_i+n\sum_{i< \iota}q_i p_\kappa\tag{$\sum_{k\in \hat{A}}Q_{i,k} = 1$ and $p_\kappa\le p_k$ by \cref{eq:best_eqi0}}\\
    \ge& n\sum_{i\ge \iota} q_i p_i'+n\sum_{i< \iota}q_i p_\kappa'\tag{$p_i\ge p_i'$ for all $i\ge \iota$}\\
    =& C(\vp,\mQ').
\end{align*}   On the other hand, 
\begin{align*}
    &V(\vp,\mQ)\\
    =&\sum_{i<\iota}\sum_{k\in \hat{A}} q_i Q_{i,k}(\val(i,k)-\pay(k)+p_k\pen(k))+\sum_{i\ge\iota}q_i (\val(i,i)-\pay(k))\tag{type $i\ge \iota$ is truthful}\\
        \le& \sum_{i<\iota} q_i (\val(i,\kappa)-\pay(\kappa)+p_\kappa\pen(\kappa))+\sum_{i\ge\iota} q_i (\val(i,i)-\pay(k))\tag{$\hat{U}_k = \hat{U}_\kappa$ and \cref{eq:assum2}}\\
        =& V(\vp',\mQ').\tag{$p_\kappa' = p_\kappa$}
\end{align*} 
\end{proof}

\lembestequal*
\begin{proof}[Proof of \cref{lem:best_equal}]

Given \cref{eq:opt_adaptive_budgeted1}, note that the cost $C(\vp', \mQ)$ is a function on $u$.
$$\begin{aligned}
    &C(u):=C(\vp',\mQ) = n\sum_{i,k} q_iQ_{i,k}p_k' \\
    = &n\sum_{i<\iota} q_i p_\kappa'+\sum_{i\ge\iota} q_ip_i' = n\sum_{i<\iota} q_i \rho_\kappa(u)+\sum_{i\ge\iota} q_i\rho_i(u).
\end{aligned}$$
Moreover, because $C(u)$ is a decreasing affine function in $u$ by \cref{eq:rho}, we set 
\begin{equation}
u = \max\left\{U_{\iota-1}, C^{-1}(B)\right\}\label{eq:best_equal0}
\end{equation}
which can be solved in $O(m)$.  Since $C$ is decreasing, 
\begin{equation}\label{eq:best_equal1}
    C(u) = C(\max\left\{U_{\iota-1}, C^{-1}(B)\right\})\le \min \{C(U_{\iota-1}), B\}\le B
\end{equation}

For any $\vp\in \Pi_B(\mQ)$, by \cref{lem:br}, 
$U_{\iota-1}\le \hat{u}\le U_\iota$ where $\hat{u}=\hat{u}(\vp)$.  
Because
\begin{align*}
    &B\ge C(\vp,\mQ)\\
    =& n\sum_{i<\iota}q_ip_\kappa+\sum_{i\ge \iota}q_ip_i\\
    \ge& n\sum_{i<\iota}q_i\rho_\kappa(\hat{u})+\sum_{i\ge \iota}q_i\rho_i(\hat{u})\tag{$\rho_i$ are decreasing}\\
    =& C(\hat{u})
\end{align*}
and $C$ is decreasing, $C^{-1}(B)\le \hat{u}$.  Combining these two yields, 
\begin{equation}\label{eq:best_equal2}
    U_{\iota-1}\le u\le \hat{u}(\vp)\le U_\iota.
\end{equation}  
Thus $\mQ\in Eqi(\vp')$ by \cref{lem:br}, and $\vp'\in \Pi_B(\mQ)$ by \cref{eq:best_equal1}.  Finally, by \cref{eq:best_equal2}, $p_\kappa = \rho_\kappa(\hat{u}(\vp))\le \rho_\kappa(u) = p_\kappa'$, so $V(\vp',\mQ)-V(\vp,\mQ)= n\sum_{i<\iota}q_i Q_{i,\kappa}(p_\kappa'-p_\kappa)\pen(\kappa)\ge 0.$
\end{proof}

\clearpage

\end{document}